\begin{document}

\title{Dynamic Causality in Event Structures (Technical Report)\thanks{Supported by the DFG Research Training Group SOAMED.}}
\author{Youssef Arbach, David Karcher, Kirstin Peters, Uwe Nestmann}
\institute{Technische Universit\"at Berlin, Germany\\
\texttt{\{youssef.arbach, david.s.karcher, kirstin.peters, uwe.nestmann\}@tu-berlin.de}}
\maketitle

\begin{abstract}
	In \cite{dynamicCausality15} we present an extension of Prime Event Structures by a mechanism to express dynamicity in the causal relation. More precisely we add the possibility that the occurrence of an event can add or remove causal dependencies between events and analyse the expressive power of the resulting Event Structures \wrt to some well-known Event Structures from the literature. This technical report contains some additional information and the missing proofs of \cite{dynamicCausality15}.
\end{abstract}

\setcounter{definition}{19}
\setcounter{lemma}{1}
\setcounter{theorem}{10}

\begin{figure}[t]
	\centering
	\begin{tikzpicture}[bend angle=30]
		% figure a
		%\event{a1}{0}{2}{left}{$ a $};
		\event{b2}{0.3}{1.0}{left}{$ a $};
		\event{b3}{1.3}{1.0}{right}{$ b $};
		\event{b4}{0.8}{0}{left}{$ c $};
		%\draw[enablingPES] (a1) edge (a2);
		\draw[enablingPES] (b2) edge (b4);
		\draw[enablingPES] (b3) edge (b4);
		\draw[thick] (0.55, 0.5) -- (1.05, 0.5);
		\draw[conflictPES] (b2) edge (b3);
		\node (g) at (-0.5, 1) {$(\beta_\gamma)$};
		
		% figure b
		\event{a1}{3}{1}{left}{$ a $};
		\event{a2}{4}{1}{right}{$ b $};
		\event{a3}{3}{0}{left}{$ c $};
		\node (a) at (2.2, 1) {$ (\sigma_\xi) $};
		\draw[enablingPES] (a1) edge (a2);
		\draw[dropping]	(3.4, 1) -- (a3);
		
		% figure
		\event{b1}{5.6}{0.5}{above}{$ e $};
		\event{b2}{6.6}{0.5}{above}{$ f $};
		\node (b) at (5, 1) {$ (\xi_\sigma) $};
		\draw[conflictEBES] (b1) edge (b2);
		
		% figure
		\event{ba}{8.3}{1}{above right}{$ b $};
		\event{bb}{8.3}{0}{below left}{$ a $};
		%\node (b) at (1.3, 1) {(b)};
		\node (gs) at (7.5, 1) {$(\gamma_\sigma)$};
		\draw[enablingAbsent] (bb) .. controls (8.8, -0.5) and (8.8, 0.5) .. (bb);
		\draw[adding] (ba) .. controls (8.6, 1) .. (8.6, 0.3);
		
		% figure c
		\event{ca}{9.8}{0}{below left}{$ b $};
		\event{cb}{10.8}{0}{below right}{$ c $};
		\event{cc}{10.3}{1}{above}{$ a $};
		%\node (c) at (3.2, 1) {(c)};
		\node (gx) at (9.4, 1) {$(\gamma_\xi)$};
		\draw[enablingAbsent] (ca) -- (cb);
		\draw[adding] (cc) -- (10.3, 0);
	\end{tikzpicture}
	\caption{Counterexamples.}
	\label{fig:counterExamples}
\end{figure}
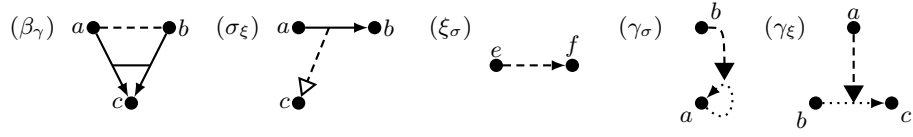

\section{Event Structures for Resolvable Conflict}

For a transition based ES with a few additional properties, there is a natural
embedding into RCESs.
\begin{definition}\label{def:translationIntoRCES}
	Let $ \mu $ be an ES with a transition relation $ \transOp $ defined on configurations such that
	%\begin{compactitem}
		%\item 
		$ \trans{X}{Y} $ implies $ X \subseteq Y $ and
		%\item 
		$ X \subseteq X' \subseteq Y' \subseteq Y $ implies that $ \trans{X}{Y} \implies \trans{X'}{Y'} $
	%\end{compactitem}
	for all configurations $ X, Y, X', Y' $ of $ \mu $.	Then $ \rces{\mu} = \left( E, \Set{ \enrc{X}{Z} \mid \exists Y \subseteq E \logdot \trans{X}{Y} \land Z \subseteq Y } \right) $.
\end{definition}
Note that the SESs and GESs satisfy these properties. We show that the resulting
structure $ \rces{\mu} $ is indeed an RCES and that it is transition equivalent to $ \mu $.
\begin{lemma}	\label{lma:TransInRCES}
	Let $ \mu $ be an ES that satisfies the conditions of \defi\ref{def:translationIntoRCES}.	Then $ \rces{\mu} $ is a RCES and $  \transEq{\rces{\mu}}{\mu}$.
\end{lemma}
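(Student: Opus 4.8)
The plan is to split the statement into its two conjuncts. That $\rces{\mu}$ is an RCES is essentially immediate from its shape: by construction it is a pair consisting of the event set $E$ of $\mu$ together with a binary relation $\enrc{\cdot}{\cdot}$ between subsets of $E$, which is exactly an RCES; should the ambient notion of RCES carry side conditions on the enabling relation (e.g.\ finiteness of enabled sets), these transfer at once from the corresponding conditions on $\transOp$. So essentially all the work is in establishing $\transEq{\rces{\mu}}{\mu}$.

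The key lemma I would isolate first characterizes the configurations of $\rces{\mu}$: a set $X \subseteq E$ is a configuration of $\rces{\mu}$ (i.e.\ $X$ is \emph{secured} for the enabling relation) if and only if $X$ is a configuration of $\mu$ reachable from $\emptyset$ along $\transOp$. For ``reachable $\Rightarrow$ secured'', take transitions $\trans{X_0}{X_1}, \trans{X_1}{X_2}, \ldots$ with $X_0 = \emptyset$ and $\bigcup_i X_i = X$; by condition~(1) of Definition~\ref{def:translationIntoRCES} the $X_i$ form an increasing chain, and for each $i$ we have $\enrc{X_i}{X_{i+1}}$ — witnessed by $Y := X_{i+1}$ in the construction, using $\trans{X_i}{X_{i+1}}$ and $X_{i+1} \subseteq X_{i+1}$ — so this chain secures $X$. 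For the converse, a securing chain $\emptyset = X_0 \subseteq X_1 \subseteq \ldots$ for $X$ comes with, for each $i$, some $W_i \subseteq X_i$ with $\enrc{W_i}{X_{i+1}}$, hence some $Y_i \supseteq X_{i+1}$ with $\trans{W_i}{Y_i}$; from $W_i \subseteq X_i \subseteq X_{i+1} \subseteq Y_i$ and condition~(2) we obtain $\trans{X_i}{X_{i+1}}$, so the very same chain is a $\transOp$-path from $\emptyset$ to $X$. (The case of infinite $X$ is handled by passing to the obvious limit.)

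With the two state spaces thus identified, transition equivalence reduces to matching the transition relations on them. Given a $\mu$-transition $\trans{X}{Z}$ with $X$ reachable: condition~(1) gives $X \subseteq Z$; taking $Y := Z$ in the construction gives $\enrc{X}{Z}$; and for any interval element $X \subseteq W \subseteq Z$, condition~(2) yields $\trans{X}{W}$, so $W$ is again a reachable configuration of $\mu$, hence a configuration of $\rces{\mu}$ by the lemma. Together these are exactly what is needed for the transition $\trans{X}{Z}$ to hold in $\rces{\mu}$. Conversely, a transition $\trans{X}{Z}$ of $\rces{\mu}$ supplies $X \subseteq Z$ together with $\enrc{X}{Z}$, and the latter yields some $Y \supseteq Z$ with $\trans{X}{Y}$; since $X \subseteq Z \subseteq Y$, condition~(2) gives $\trans{X}{Z}$ in $\mu$. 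Hence the two transition systems agree on their reachable states and transitions and both have $\emptyset$ as initial configuration, which is exactly $\transEq{\rces{\mu}}{\mu}$.

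I expect the only real difficulty to be bookkeeping around the exact definitions in play: the precise formulation of ``secured'', and whether the transition relation of an RCES is given via the enabling relation or via the ``all intervals are configurations'' condition (and whether transitions are single-event or set-sized steps). The substantive point throughout is that condition~(2) of Definition~\ref{def:translationIntoRCES} is exactly what lets one collapse the whole interval of configurations between $X$ and $Z$ into the single transition $\trans{X}{Z}$ and back; invoking it in the right form on each side is where care is needed. The treatment of infinite configurations is a secondary, routine matter given the stated closure conditions.
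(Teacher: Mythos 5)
Your argument is correct in substance and, once the definitional bookkeeping you flag is resolved, its core coincides with the paper's proof; still, the differences are worth spelling out. The paper proves the pointwise equivalence of the two transition relations directly and never needs to characterize configurations: for the forward direction it simply observes that $\trans{X}{Y}$ makes the construction of \defi\ref{def:translationIntoRCES} declare $\enrc{X}{Z}$ for \emph{every} $Z \subseteq Y$, which is literally what Def.~10 in \cite{dynamicCausality15} requires for $\transRC{X}{Y}$; so your detour through ``every interval element is a configuration of $\rces{\mu}$'' (and hence the secured-configurations lemma you isolate as the key step) targets a guessed formulation of the RCES transition relation and is dead weight here---although from your own intermediate fact $\trans{X}{W}$ for $X \subseteq W \subseteq Y$ the actually required enabling $\enrc{X}{W}$ follows at once via the construction. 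In the converse direction your guess again deviates slightly from Def.~10: $\transRC{X}{Y}$ only supplies some $X' \subseteq X$ with $\enrc{X'}{Y}$, not $\enrc{X}{Y}$ itself; this is harmless, because your interval-collapsing step is exactly the paper's, now applied to $X' \subseteq X \subseteq Y \subseteq \widetilde{Y}$, where $\trans{X'}{\widetilde{Y}}$ with $Y \subseteq \widetilde{Y}$ witnesses the enabling and condition (2) of \defi\ref{def:translationIntoRCES} yields $\trans{X}{Y}$. In short, your route buys an explicit description of the reachable configurations of $\rces{\mu}$ at the price of extra machinery, while the paper's version is shorter because the ``for all $Z \subseteq Y$'' clause built into $\rces{\cdot}$ makes both directions one-step unfoldings of the two definitions; neither of your definitional mismatches breaks the argument, but in a non-blind write-up both the forward and the backward step should be phrased against Def.~10 as it stands.
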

\begin{proof}
	By Def. 9 in \cite{dynamicCausality15}, $ \rces{\mu} $ is a RCES.
	
	Assume $ \trans{X}{Y} $. Then, by \defi\ref{def:translationIntoRCES}, $ X \subseteq Y $ and $ \enrc{X}{Z} $ for all $ Z \subseteq Y $.Then, by Def. 10 in \cite{dynamicCausality15}, $ \transRC{X}{Y} $.
	
	Assume $ \transRC{X}{Y} $. Then, by Def. 10 in \cite{dynamicCausality15}, $ X \subseteq Y $ and there is some $ X' \subseteq X $ such that $ \enrc{X'}{Y} $. By \defi\ref{def:translationIntoRCES} for $ \rces{\cdot} $, then $ \enrc{X'}{Y'} $ for all $ Y' \subseteq Y $. So there is a set $ \widetilde{Y} $ such that $ Y \subseteq \widetilde{Y} $ and $ \enrc{X'}{\widetilde{Y}'} $ for each $ \widetilde{Y}' \subseteq \widetilde{Y} $. Then, by \defi\ref{def:translationIntoRCES} for $ \rces{\cdot} $, it follows $ \trans{X'}{\widetilde{Y}'} $ and $ X' \subseteq X \subseteq Y \subseteq \widetilde{Y} $. Finally, by the second property of \defi\ref{def:translationIntoRCES}, $ \trans{X}{Y} $.
\end{proof}

\section{Shrinking Causality}

In SESs both notions of configurations, traced-based and transition-based,
coincide; and in different situations, the more suitable one can be used.

\begin{lemma}\label{lem:SESconf}
	Let $ \sigma $ be a SES. Then $ \configTraces{\sigma} = \configurations{\sigma} $.
\end{lemma}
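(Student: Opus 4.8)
The plan is to prove the two inclusions $\configTraces{\sigma}\subseteq\configurations{\sigma}$ and $\configurations{\sigma}\subseteq\configTraces{\sigma}$ separately. The first one is essentially immediate: a trace witnessing $X\in\configTraces{\sigma}$ is a chain of single-event transitions $\trans{\emptyset}{\{e_1\}}, \trans{\{e_1\}}{\{e_1,e_2\}}, \dots$ ending in $X$, hence in particular a transition sequence from $\emptyset$ to $X$, so $X\in\configurations{\sigma}$. If, in the paper, a trace is instead defined through an enabling predicate rather than through single-event transitions, one first records that for an SES ``$e$ is enabled at a configuration $X'$'' coincides with $\trans{X'}{X'\cup\{e\}}$; this is the one spot where the nature of \emph{shrinking} causality is used, since a dropped causal link stays dropped and hence the enabledness of $e$ depends only on the set $X'$ and not on the order in which its events were produced.

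For the converse, let $X\in\configurations{\sigma}$ be reached by a finite sequence of transitions $\trans{Z_0}{Z_1}, \trans{Z_1}{Z_2}, \dots, \trans{Z_{k-1}}{Z_k}$ with $Z_0=\emptyset$ and $Z_k=X$. I would refine each step into single-event steps and then concatenate. Since $\sigma$ satisfies the first condition of \defi\ref{def:translationIntoRCES} --- which, as noted there, every SES does --- we have $Z_i\subseteq Z_{i+1}$; enumerate $Z_{i+1}\setminus Z_i$ as $f_1,\dots,f_m$ (finite, or well-ordered if infinite configurations are admitted) and set $P_j=Z_i\cup\{f_1,\dots,f_j\}$. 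Then $Z_i\subseteq P_{j-1}\subseteq P_j\subseteq Z_{i+1}$, so the second condition of \defi\ref{def:translationIntoRCES} applied to $\trans{Z_i}{Z_{i+1}}$ yields $\trans{P_{j-1}}{P_j}$, a single-event transition. Stringing all these refined steps together over $i=0,\dots,k-1$ produces a trace from $\emptyset$ to $X$, so $X\in\configTraces{\sigma}$; together with the first inclusion this gives the claim.

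I expect the genuinely delicate point to be not either inclusion in isolation but the fact I am invoking, namely that SESs really do satisfy the two conditions of \defi\ref{def:translationIntoRCES}, in particular the closure property $\trans{X}{Y}\implies\trans{X'}{Y'}$ for $X\subseteq X'\subseteq Y'\subseteq Y$. This is exactly where shrinking, as opposed to growing, causality matters: enlarging the source from $X$ to $X'$ can only drop further dependencies and therefore cannot disable an event still to be added, while cutting the target down to $Y'$ only removes events and --- again by monotonicity of dropping --- keeps the remaining ones enabled. A self-contained proof of $\configurations{\sigma}\subseteq\configTraces{\sigma}$ that instead tried to peel a ``last'' event off $X$ would hit a real obstacle, since deleting a dropper from a configuration may re-activate a causal link and thereby invalidate the set; it would have to be rescued by a well-founded-order argument showing that some removable event always exists. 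Routing through \defi\ref{def:translationIntoRCES} avoids this entirely.
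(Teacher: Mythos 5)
Your proposal is correct and, at its core, is the same argument as the paper's: the nontrivial inclusion $\configurations{\sigma}\subseteq\configTraces{\sigma}$ is obtained by linearizing each step $\transS{X_i}{X_{i+1}}$ of a transition sequence into single-event steps, which works because $\dc{\cdot}{e}$ is monotone in its first argument (dropped causes stay dropped). The only organizational difference is that you justify each refined step by appealing to the interpolation property of \defi\ref{def:translationIntoRCES}, whereas the paper verifies the trace conditions for the chosen linearization directly from its conditions (C1)--(C4); since that interpolation property is proved for SESs (in the paper, inside \lem\ref{lma:SESinRCES}) straight from Def.~13 without using the present lemma --- and you sketch that very argument yourself --- there is no circularity, and the two routes buy essentially the same thing. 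The only caveat, shared with the paper, is the implicit finiteness of each $X_{i+1}\setminus X_i$ needed to obtain a finite trace; your parenthetical appeal to a well-ordering would not by itself yield a trace in the sense of Def.~13.
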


\begin{proof}
	Let $ \sigma = \left( E, \confOp, \enabOp, \shrinkingCausality \right) $.	By Def. 13 in \cite{dynamicCausality15}, $ C \in \configTraces{\sigma} $ implies that there is some $ t = e_1 \cdots e_n $ such that $ \overline{t} \subseteq E $, $ \forall 1 \leq i, j \leq n \logdot \neg \left( \conf{e_i}{e_j} \right) $, $ \forall 1 \leq i \leq n \logdot \left( \ic{e_i} \setminus \dc{\overline{t_{i - 1}}}{e_i} \right) \subseteq \overline{t_{i - 1}} $, and $ C = \overline{t} $.	Hence, by Def. 13, $ \transS{\overline{t_i}}{\overline{t_{i + 1}}} $ for all $ 1 \leq i \leq n $ and $ \transS{\emptyset}{\Set{ e_1 }} $.	Thus, by Def. 13, $ C \in \configurations{\sigma} $.
	
	By Def. 13, $ C \in \configurations{\sigma} $ implies that there are $ X_1, \ldots, X_n \subseteq E $ such that $ \transS{\transS{\transS{\emptyset}{X_1}}{\ldots}}{X_n} $ and $ X_n = C $.
	Then, by Def. 13, we have:
	\begin{align}
		& \emptyset \subseteq X_1 \subseteq X_2 \subseteq \ldots \subseteq X_n \subseteq E & \tag{C1} \label{eq:C1}\\
		& \forall e, e' \in X_n \logdot \neg \left( \conf{e}{e'} \right) & \tag{C2} \label{eq:C2}\\
		& \forall e \in X_1 \logdot \left( \ic{e} \setminus \dc{\emptyset}{e} \right) \subseteq \emptyset & \tag{C3} \label{eq:C3}\\
		& \forall 1 \leq i < n \logdot \forall e \in X_{i + 1} \setminus X_i \logdot \left( \ic{e} \setminus \dc{X_i}{e} \right) \subseteq X_i & \tag{C4} \label{eq:C4}
	\end{align}
	Let $ X_1 = \Set{ e_{1, 1}, \ldots, e_{1, m_1} } $ and $ X_i \setminus X_{i - 1} = \Set{ e_{i, 1}, \ldots, e_{i, m_i} } $ for all $ 1 < i \leq n $.	Then, by Def. 13, $ t = e_{1, 1} \cdots e_{1, m_1} \cdots e_{n, 1} \cdots e_{n, m_n} = e_1' \cdots e_k' $ is a trace such that $ \overline{t} \subseteq E $ (because of \eqref{eq:C1}), $ \neg \left( \conf{e_i'}{e_j'} \right) $ for all $ 1 \leq i, j \leq k $ (because of \eqref{eq:C1} and \eqref{eq:C2}), for all $ 1 \leq i \leq k $ and all $ 1 \leq j \leq m_i $ we have $ \left( \ic{e_{i, j}} \setminus \dc{\overline{t_{i - 1}}}{e_{i, j}} \right) \subseteq \overline{t_{i - 1}} $ (because of \eqref{eq:C3} and \eqref{eq:C4}), and $ \overline{t} = C $ (because $ X_n = C $).
	Thus $ C \in \configTraces{\sigma} $.
\end{proof}

Moreover the following technical Lemma relates transitions and the extension of traces by causally independent events.

\begin{lemma}
	\label{lem:SEStransTraces}
	Let $ \sigma = \left( E, \confOp, \enabOp, \shrinkingCausality \right) $ be a SES and $ X, Y \in \configurations{\sigma} $.	Then $ \transS{X}{Y} $ iff there are $ t_1 = e_1 \cdots e_n, t_2 = e_1 \cdots e_n e_{n + 1} \cdots e_{n + m} \in \traces{\sigma} $ such that $ X = \overline{t_1} $, $ Y = \overline{t_2} $, and $ \forall e, e' \in Y \setminus X \logdot \left( \ic{e} \setminus \dc{X}{e} \right) \subseteq X $.
\end{lemma}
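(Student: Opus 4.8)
The plan is to prove both directions of the biconditional, using the characterization of transition-based configurations via traces that we already established in Lemma~\ref{lem:SESconf}, together with Def.~13 of \cite{dynamicCausality15} which unfolds $\transS{X}{Y}$ into the conditions $X \subseteq Y$, no conflicts in $Y$, and the enabling condition $\left( \ic{e} \setminus \dc{X}{e} \right) \subseteq X$ for every $e \in Y \setminus X$.

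For the ``only if'' direction, assume $\transS{X}{Y}$. Since $X \in \configurations{\sigma}$, by Lemma~\ref{lem:SESconf} it is also in $\configTraces{\sigma}$, so there is a trace $t_1 = e_1 \cdots e_n$ with $\overline{t_1} = X$. From $\transS{X}{Y}$ we get $X \subseteq Y$ and, writing $Y \setminus X = \Set{ e_{n+1}, \ldots, e_{n+m} }$ in any order, I would argue that appending these events to $t_1$ yields a valid trace $t_2$. The point is that for each such event $e$ the enabling condition after $t_1$ reads $\left( \ic{e} \setminus \dc{\overline{t_1}}{e} \right) \subseteq \overline{t_1} = X$, which is exactly what $\transS{X}{Y}$ gives; and crucially, because these are the ``ripe'' events of the transition, adding some of them before others neither drops a needed cause nor adds a new one among the $e_{n+1},\ldots,e_{n+m}$ — this is where the extra hypothesis $\forall e, e' \in Y \setminus X \logdot \left( \ic{e} \setminus \dc{X}{e} \right) \subseteq X$ is used (it is part of the claim to be proven, so for this direction it follows directly from $\transS{X}{Y}$ by taking $e' $ to be any event of $Y\setminus X$ and noting the displayed inclusion does not mention $e'$). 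Absence of conflicts in $Y$ propagates to the whole trace $t_2$ by Def.~13, so $t_2 \in \traces{\sigma}$ with $\overline{t_2} = Y$.

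For the ``if'' direction, assume the two traces $t_1, t_2$ exist with the stated properties. Then $X = \overline{t_1} \subseteq \overline{t_2} = Y$, and $Y$ has no conflicting pair since $t_2$ is a trace. It remains to check the enabling condition of Def.~13 for the single step $\transS{X}{Y}$: for each $e \in Y \setminus X$ we must show $\left( \ic{e} \setminus \dc{X}{e} \right) \subseteq X$. But $e = e_j$ for some $n < j \leq n+m$, and being part of the trace $t_2$ it satisfies $\left( \ic{e_j} \setminus \dc{\overline{t_{2,j-1}}}{e_j} \right) \subseteq \overline{t_{2,j-1}}$; the given side condition $\forall e, e' \in Y \setminus X \logdot \left( \ic{e} \setminus \dc{X}{e} \right) \subseteq X$ then closes the gap between the prefix configuration $\overline{t_{2,j-1}}$ and $X$. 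Combining these three facts with Def.~13 yields $\transS{X}{Y}$.

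The main obstacle is the bookkeeping around shrinking causality in the ``only if'' direction: since later events can remove causal dependencies (that is the defining feature of an SES, via $\dc{\cdot}{\cdot}$), the order in which we append the events of $Y \setminus X$ to $t_1$ matters in principle, and one has to verify that the particular enabling inclusions needed along the chosen linearization are all implied by the single inclusion $\left( \ic{e} \setminus \dc{X}{e} \right) \subseteq X$ guaranteed at the point $X$. I expect this to go through cleanly precisely because the side condition quantifies over all $e \in Y \setminus X$ and evaluates $\dc{\cdot}{e}$ at $X$ rather than at an intermediate prefix, so no intermediate event can invalidate the enabling of a later one; making this argument precise — likely by a short induction on the length of the appended suffix, showing each prefix $\overline{t_{2,i}}$ with $i \geq n$ lies between $X$ and $Y$ — is the one place where care is required.
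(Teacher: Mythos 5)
Your proposal is correct and follows essentially the same route as the paper's proof: obtain $t_1$ with $\overline{t_1}=X$ via Lemma~\ref{lem:SESconf}, append an arbitrary linearization of $Y\setminus X$ for the ``only if'' direction (justified, as you note, by the fact that $\dc{\cdot}{e}$ only grows along the suffix, so the inclusion at $X$ persists), and for the ``if'' direction read the three conditions of Def.~13 off the hypotheses. The only cosmetic difference is that in the ``if'' direction the side condition already \emph{is} the required enabling condition for $\transS{X}{Y}$, so appealing to the per-prefix trace conditions there is unnecessary.
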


\begin{proof}
	By Def. 13 in \cite{dynamicCausality15} and \lem\ref{lem:SESconf}, $ X \in \configurations{\sigma} $ implies that there is a trace $ t_1 = e_1 \cdots e_n \in \traces{\sigma} $ such that $ X = \overline{t_1} $.

	If $ \transS{X}{Y} $ then, by Def. 13, $ X \subseteq Y $, $ \forall e, e' \in Y \logdot \neg ( \conf{e}{e'} ) $, and $ \forall e \in Y \setminus X \logdot ( \ic{e} \setminus \dc{X}{e} ) \subseteq X $. Then, by Def. 13, $ t_2 = e_1 \cdots e_n e_{n + 1} \cdots e_{n + m} \in \traces{\sigma} $ and $ Y = \overline{t_2} $ for an arbitrary linearization $ e_{n + 1} \cdots e_{n + m} $ of the events in $ Y \setminus X $, \ie with $ \Set{ e_{n + 1}, \ldots, e_{n + m} } = Y \setminus X $ such that $ e_{n + i} \neq e_{n + j} $ whenever $ 1 \leq i, j, \leq m $ and $ i \neq j $.
	
	If there is a trace $ t_2 = e_1 \cdots e_n e_{n + 1} \cdots e_{n + m} \in \traces{\sigma} $ such that $ Y = \overline{t_2} $ and $ \forall e, e' \in Y \setminus X \logdot \left( \ic{e} \setminus \dc{X}{e} \right) \subseteq X $ then $ X \subseteq Y $. Moreover, by Def. 13, $ t_2 \in \traces{\sigma} $ implies $ \forall e, e' \in Y \logdot \neg \left( \conf{e}{e'} \right) $. Thus, by Def. 13, $ \transS{X}{Y} $.
\end{proof}

Note that the condition $ \forall e, e' \in Y \setminus X \logdot \left( \ic{e}
\setminus \dc{X}{e} \right) \subseteq X $ states that the events in $ Y
\setminus X $ are causally independent from each other.

As mentioned above DESs and SESs have the same expressive power. To show this fact we define mutual encodings and show that they result into structures with equivalent behaviors. To translate a SES into a DES we create a bundle for each initial causal dependence and add all its droppers to the the bundle set.

\begin{definition}
	\label{def:SESintoDES}
	Let $ \sigma = \left( E, \confOp, \enabOp, \shrinkingCausality \right) $ be a SES.
	Then $ \des{\sigma} = \left( E, \confOp, \buEnOp \right) $, where $ \buEn{S}{y} $ iff $ S \subseteq E $, $ y \in E $, and $ \exists x \in E \logdot \enab{x}{y} \land S = \Set{x} \cup \droppers{x}{y} $.
\end{definition}

The above translation from SES into DES shows that for each SES there is a DES with exactly the same traces and configurations.

\begin{lemma}
	\label{lem:SEStoDES}
	For each SES $ \sigma $ there is a DES $ \delta $, namely $ \delta = \des{\sigma} $, such that $ \traces{\sigma} = \traces{\delta} $ and $ \configurations{\sigma} = \configurations{\delta} $.
\end{lemma}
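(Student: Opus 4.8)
The plan is to reduce both equalities to one elementary observation about enabledness and then lift it to traces and configurations.

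\textbf{Core observation.} I would first show that for every event $ y \in E $ and every set $ X \subseteq E $, the SES enabling condition $ \left( \ic{y} \setminus \dc{X}{y} \right) \subseteq X $ is equivalent to ``$ y $ is bundle-enabled at $ X $'' in $ \des{\sigma} $. By \defi\ref{def:SESintoDES} the bundles pointing to $ y $ in $ \des{\sigma} $ are exactly the sets $ S_x = \Set{ x } \cup \droppers{x}{y} $ ranging over those $ x $ with $ \enab{x}{y} $, i.e.\ over $ x \in \ic{y} $. Hence $ y $ is bundle-enabled at $ X $ iff for each such $ x $ we have $ \left( \Set{ x } \cup \droppers{x}{y} \right) \cap X \neq \emptyset $, i.e.\ $ x \in X $ or $ \droppers{x}{y} \cap X \neq \emptyset $. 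By the definition of $ \dc{\cdot}{\cdot} $, for $ x \in \ic{y} $ the second disjunct is precisely $ x \in \dc{X}{y} $; so bundle-enabledness of $ y $ at $ X $ says $ \ic{y} \subseteq X \cup \dc{X}{y} $, which is exactly $ \left( \ic{y} \setminus \dc{X}{y} \right) \subseteq X $. Since the conflict relation $ \confOp $ is copied verbatim by $ \des{\cdot} $, the two structures agree on which event may fire next after a given prefix.

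\textbf{Traces.} From this, $ \traces{\sigma} = \traces{\delta} $ follows by a straightforward induction on the length of a candidate trace $ t = e_1 \cdots e_n $: a sequence of pairwise distinct, pairwise non-conflicting events is a trace of $ \sigma $ iff, by Def.\ 13 in \cite{dynamicCausality15}, each $ e_i $ is SES-enabled at $ \overline{t_{i - 1}} $; iff, by the core observation, each $ e_i $ is bundle-enabled at $ \overline{t_{i - 1}} $; iff, by the analogous trace definition for DESs in \cite{dynamicCausality15}, $ t \in \traces{\delta} $. Beyond unfolding these definitions, the only content is the enabling equivalence already established.

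\textbf{Configurations.} Here I would invoke \lem\ref{lem:SESconf} to get $ \configurations{\sigma} = \configTraces{\sigma} = \Set{ \overline{t} \mid t \in \traces{\sigma} } $, and use the analogous (for DESs essentially definitional) identity $ \configurations{\delta} = \Set{ \overline{t} \mid t \in \traces{\delta} } $; equality of the trace sets then yields $ \configurations{\sigma} = \configurations{\delta} $.

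\textbf{Main obstacle.} The difficulty is not conceptual but a matter of matching conventions: one must confirm that $ \dc{X}{y} $ is indeed ``the initial causes of $ y $ dropped by some event of $ X $'', so that the disjunct $ \droppers{x}{y} \cap X \neq \emptyset $ coincides with $ x \in \dc{X}{y} $, and that $ \enab{x}{y} $ is equivalent to $ x \in \ic{y} $, so that the family of bundles is indexed exactly by $ \ic{y} $. One should also check that there is no hidden order-sensitivity in dropping; but since SESs only shrink (never add) dependencies, once a dropper lies in the prefix the corresponding cause stays dropped, so SES-enabledness of $ e_i $ genuinely depends only on the prefix configuration $ \overline{t_{i - 1}} $, which is what legitimizes the per-step comparison above.
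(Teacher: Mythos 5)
Your proposal is correct and follows essentially the same route as the paper: the ``core observation'' is exactly the paper's rewriting of $(\ic{e_i}\setminus\dc{\overline{t_{i-1}}}{e_i})\subseteq\overline{t_{i-1}}$ into the bundle-satisfaction condition $\buEn{X}{e_i}\implies\overline{t_{i-1}}\cap X\neq\emptyset$ via Definition~\ref{def:SESintoDES}, lifted per step to traces and then to configurations through Lemma~\ref{lem:SESconf} and the trace-based definition of DES configurations. The only (minor) point the paper additionally records is that $\des{\sigma}$ is indeed a DES, i.e.\ that the copied conflict relation is irreflexive and symmetric.
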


\begin{proof}[Proof  of Lemma~\ref{lem:SEStoDES}]
	Let $ \sigma = \left( E, \confOp, \enabOp, \shrinkingCausality \right) $ be a SES. By \defs12 and 1 in \cite{dynamicCausality15}, $ \confOp \subseteq E^2 $ is irreflexive and symmetric. Hence, by \defs{7 in \cite{dynamicCausality15}} and \ref{def:SESintoDES}, $ \delta = \des{\sigma} $ is a DES.
	
	Let $ t = e_1 \cdots e_n $.
	By Def. 13 in \cite{dynamicCausality15}, $ t \in \traces{\sigma} $ iff $ \overline{t} \subseteq E $, $ \neg \left( \conf{e_i}{e_j} \right) $, and $ ( \ic{e_i} \setminus \dc{\overline{t_{i - 1}}}{e_i} ) \subseteq \overline{t_{i - 1}} $ for all $ 1 \leq i, j \leq n $.
	Since $ \dc{H}{e} = \{ e' \mid \exists d \in H \logdot \drops{e'}{d}{e} \} $ and $ \ic{e} = \Set{ e' \mid \enab{e'}{e} } $, we have $ \left( \ic{e_i} \setminus \dc{\overline{t_{i - 1}}}{e_i} \right) \subseteq \overline{t_{i - 1}} $ iff $ \forall e' \in E \logdot \enab{e'}{e_i} \implies e' \in \overline{t_{i - 1}} \lor \exists d \in \overline{t_{i - 1}} \logdot \drops{e'}{d}{e_i} $ for all $ 1 \leq i \leq n $.
	By \defi\ref{def:SESintoDES}, then $ t \in \traces{\sigma} $ iff $ \overline{t} \subseteq E $, $ \neg \left( \conf{e_i}{e_j} \right) $, and $ \buEn{X}{e_i} \implies \overline{t_{i - 1}} \cap X \neq \emptyset $ for all $ 1 \leq i, j \leq n $ and all $ X \subseteq E $.
	Hence, by the definition of traces in \S~2.2 in \cite{dynamicCausality15}, $ t \in
	\traces{\sigma} $ iff $ t \in \traces{\delta} $, \ie $ \traces{\sigma} = \traces{\delta} $.
	
	By \lem\ref{lem:SESconf}, \S~2.2, and Def. 13, then also $ \configurations{\delta} = \configTraces{\sigma} = \configurations{\sigma} $.
\end{proof}

The most discriminating behavioral semantics of DESs used in literature are
families of posets. Thus the translation should also preserve posets.

\begin{theorem}
	\label{thm:SESintoDES}
	For each SES $ \sigma $ there is a DES $ \delta = \des{\sigma} $, such that $ \posetsEq {\sigma }{\delta} $.
\end{theorem}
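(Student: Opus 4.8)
The plan is to combine \lem\ref{lem:SEStoDES} with a structural correspondence between the causal witnesses that generate posets in $\sigma$ and in $\delta = \des{\sigma}$. Recall from \cite{dynamicCausality15} that, for both kinds of structures, a poset $P = \left( C, \leq \right)$ over a set of events $C$ lies in the family of posets iff $C$ is a configuration and $\leq$ is the partial order generated by a \emph{causal realization}: for $\delta$ this is a choice, for every $y \in C$ and every bundle $\buEn{S}{y}$, of some $w \in S \cap C$ declared strictly below $y$; for $\sigma$ it is a choice, for every $y \in C$ and every initial cause $x \in \ic{y}$, either of $x$ itself (which must then lie in $C$ and is put strictly below $y$) or of some $d \in C$ with $\drops{x}{d}{y}$ put strictly below $y$; in both cases the generated relation must be a strict partial order all of whose linearizations are traces of the structure. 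By \lem\ref{lem:SEStoDES} we already have $\configurations{\sigma} = \configurations{\delta}$ and $\traces{\sigma} = \traces{\delta}$.

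I would then set up the bijection between causal realizations. By \defi\ref{def:SESintoDES} the bundles of $\delta$ pointing to a fixed $y$ are exactly the sets $\Set{x} \cup \droppers{x}{y}$ for $x \in \ic{y}$, so there is precisely one bundle of $\delta$ per initial cause of $y$ in $\sigma$, and picking an element of $\left( \Set{x} \cup \droppers{x}{y} \right) \cap C$ is literally the same data as picking ``$x$, or some dropper of the dependency $\enab{x}{y}$ occurring in $C$''. Hence a causal realization of $\sigma$ on a configuration $C$ and the corresponding realization of $\delta$ on $C$ declare the same pairs $w \mathbin{<} y$ and therefore generate one and the same relation $\leq$; in particular this relation is a strict partial order on the $\sigma$-side iff it is on the $\delta$-side.

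It then remains to conclude: a poset $P = \left( C, \leq \right)$ is produced by a valid causal realization of $\sigma$ iff it is produced by a valid causal realization of $\delta$. Indeed, ``$C$ is a configuration'' is the same requirement by \lem\ref{lem:SEStoDES}, the generated relation is identical, acyclicity transfers since the relation is the same, and ``every linearization of $\leq$ is a trace'' refers to the common set $\traces{\sigma} = \traces{\delta}$. Therefore the two families of posets coincide, \ie $\posetsEq{\sigma}{\delta}$.

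I expect the main obstacle to be matching the poset semantics of SESs from \cite{dynamicCausality15} with the causal-realization reformulation used above, specifically in the situation where an initial cause $x$ of $y$ belongs to $C$ but its dependency on $y$ has been dropped before $y$ in the witnessing run, so that the poset may legitimately omit the edge $x \mathbin{<} y$ while still requiring the responsible dropper below $y$. This is exactly what enlarging the bundle for the dependency $\enab{x}{y}$ by $\droppers{x}{y}$ encodes, which is why the correspondence is tight; nonetheless the bookkeeping around events that act simultaneously as a cause and as a dropped, post-hoc irrelevant predecessor is the delicate point, and one should also double-check that the ``linearizations are traces'' side condition is never vacuous, \ie that every causal realization over a configuration admits at least one trace as a linearization.
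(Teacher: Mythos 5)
Your core argument is the paper's: after \lem\ref{lem:SEStoDES} gives $\traces{\sigma}=\traces{\delta}$ and $\configurations{\sigma}=\configurations{\delta}$, the only remaining point is that, by \defi\ref{def:SESintoDES}, the bundles of $\des{\sigma}$ pointing to an event $y$ are exactly the sets $\Set{x}\cup\droppers{x}{y}$ with $\enab{x}{y}$, so that a candidate cause $U$ meets every bundle iff $\left(\ic{y}\setminus\dc{U}{y}\right)\subseteq U$. That is precisely the equivalence the paper's proof establishes before closing with Defs.~8 and 13 of \cite{dynamicCausality15}.

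The weak point is which poset semantics you prove the equality for. Theorem~\ref{thm:SESintoDES} concerns the trace-based early-causality posets of Defs.~8 and 13: for a trace $e_1\cdots e_n$, a cause of $e_i$ is a set $U$ of events preceding $e_i$ in that trace subject to the respective condition ($U$ intersects every bundle, resp. $\left(\ic{e_i}\setminus\dc{U}{e_i}\right)\subseteq U$); there is no ``choose exactly one witness per bundle'' and no ``all linearizations of $\leq$ are traces'' side condition. Your ``causal realization'' with one witness per bundle is bundle-satisfaction-style causality, and the linearization requirement comes from the behavioral semantics; since in general $\posetsEar{t}\subseteq\posetsMin{t}\subseteq\posetsBsat{t}$ with possibly proper inclusions, the family you compare need not be the one in the statement, so the ``main obstacle'' you flag at the end is a real gap in the proof as written. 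It dissolves if you drop the reformulation and argue directly on the official cause condition, as the paper does: fix $t\in\traces{\sigma}$, an event $e_i$, and all bundles $\buEn{X_1}{e_i},\ldots,\buEn{X_m}{e_i}$ of $\des{\sigma}$, and show $\left(\forall 1\leq k\leq m\logdot X_k\cap U\neq\emptyset\right)\iff\left(\ic{e_i}\setminus\dc{U}{e_i}\right)\subseteq U$; all other clauses of Defs.~8 and 13 are literally identical on both sides, so no bijection of realizations and no linearization bookkeeping is needed.
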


\begin{proof}
	Let $ \sigma = \left( E, \confOp, \enabOp, \shrinkingCausality \right) $ be a SES.\\
	By \lem\ref{lem:SEStoDES}, $ \delta = \des{\sigma} = \left( E, \confOp, \buEnOp \right) $ is a DES such that $ \traces{\sigma} = \traces{\delta} $ and $ \configurations{\sigma} = \configurations{\delta} $.\\
	Let $ t = e_1 \cdots e_n \in \traces{\sigma} $, $ 1 \leq i \leq n $, and the bundles $ \buEn{X_1}{e_i}, \ldots, \buEn{X_m}{e_i} $ all bundles pointing to $ e_i $. For $ U $ to be a cause for $ e_i $ \defi13 in \cite{dynamicCausality15} requires $ ( \ic{e_i} \setminus \dc{U}{e_i} ) \subseteq U $. Since $ \dc{H}{e} = \Set{ e' \mid \exists d \in H \logdot \drops{e'}{d}{e} } $ and $ \ic{e} = \{e'\mid\enab{e'}{e}\} $, this condition holds iff the condition $ \enab{e'}{e_i} \implies e' \in U \lor \exists d \in U \logdot \drops{e'}{d}{e_i} $ holds for all $ e' \in E $.
	By \defi\ref{def:SESintoDES}, then $ ( \forall 1 \leq k \leq n \logdot X_k \cap U \neq \emptyset ) \iff ( ( \ic{e_i} \setminus \dc{U}{e_i} ) \subseteq U ) $.
	So, by \defs8 and 13 in \cite{dynamicCausality15}, $ \posetsEq {\sigma }{\delta}$.
\end{proof}

In the opposite direction we map each DES into a set of similar SESs such that each SES in this set has the same behavior as the DES. Therefore for each bundle $ \buEn{X_i}{e} $ we choose a fresh event $ x_i $ as initial cause $ \enab{x_i}{e} $, make it impossible by a self-loop $ \enab{x_i}{x_i} $, and add all events $ d $ of the bundle $ X_i $ as droppers \drops{x_i}{d}{e}.

\begin{definition}
	\label{def:DESintoSES}
	Let $ \delta = \left( E, \confOp, \buEnOp \right) $ be a DES, $ \Set{ X_i }_{i \in I} $ an enumeration of its bundles, and $ \Set{ x_i }_{i \in I} $ a set of fresh events, \ie $ \Set{ x_i }_{i \in I} \cap E = \emptyset $.
	Then $ \ses{\delta} = \left( E', \confOp, \enabOp, \shrinkingCausality \right) $ with $ E' = E \cup \Set{ x_i }_{i \in I} $, $ \enabOpT = \Set{ \enab{x_i}{e} \mid \buEn{X_i}{e} } \cup \Set{ \enab{x_i}{x_i} \mid i \in I } $, and $ \shrinkingCausality = \Set{ \drops{x_i}{d}{e} \mid d \in X_i \land \buEn{X_i}{e} } $.
\end{definition}

Of course it can be criticized that the translation adds events (although they are fresh and impossible). But as the following example---with more bundles than events---shows it is not always possible to translate a DES into a SES without additional impossible events.

\begin{lemma}
	\label{lem:DESninSES}
	There are DESs $ \delta = \left( E, \confOp, \buEnOp \right) $, as \eg $ \delta = \left( \Set{ a, b, c, d, e }, \emptyset, \buEnOp \right) $ with $ \buEnOp \; = \Set{ \buEn{\Set{ x, y }}{e} \mid x, y \in \Set{ a, b, c, d } \land x \neq y } $,
	that cannot be translated into a SES $ \sigma = \left( E, \confOp', \enabOp, \shrinkingCausality \right) $ such that $ \traces{\delta} = \traces{\sigma} $.
\end{lemma}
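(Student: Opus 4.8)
The plan is to first compute the trace set of $\delta$ explicitly, and then argue by contradiction that no SES over the same five events can realise it; the obstruction will be a counting argument, since $\delta$ has six bundles pointing to $e$ but only five events available.

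First I would determine $\traces{\delta}$. As $\confOp=\emptyset$ and $a,b,c,d$ carry no bundle, every linearisation of every subset of $\Set{ a, b, c, d }$ is a trace, and---by the characterisation of DES traces in \S~2.2 of \cite{dynamicCausality15}, as used in the proof of \lem\ref{lem:SEStoDES}---$e$ can be appended after a prefix $t$ exactly when every bundle $\Set{ x, y }$ meets $\overline{t}$, i.e.\ when $\overline{t}$ contains at least three of $a,b,c,d$. So a sequence of distinct events over $\Set{ a, b, c, d, e }$ lies in $\traces{\delta}$ iff, whenever $e$ occurs in it, $e$ is preceded by at least three of $a,b,c,d$.

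Next, assume towards a contradiction that $\sigma=\left( E, \confOp', \enabOp, \shrinkingCausality \right)$ with $E=\Set{ a, b, c, d, e }$ satisfies $\traces{\delta}=\traces{\sigma}$. Since $\traces{\delta}$ contains, for every pair of distinct events, some trace through both (permutations of subsets of $\Set{ a, b, c, d }$, and the trace $abcde$ for the pairs involving $e$), the relation $\confOp'$ must be empty. Since each of $a,b,c,d$ occurs as a one-event trace, each must be executable at the empty configuration, and because $\dc{\emptyset}{x}=\emptyset$ this forces $\ic{a}=\ic{b}=\ic{c}=\ic{d}=\emptyset$. Hence for every $X\subseteq\Set{ a, b, c, d }$ there is a trace of $\sigma$ with underlying event set $X$, and appending $e$ to it gives a trace of $\sigma$ iff $e$ is enabled at $X$, i.e.\ iff $\left( \ic{e}\setminus\dc{X}{e} \right)\subseteq X$. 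Comparing with $\traces{\delta}$, the condition $\left( \ic{e}\setminus\dc{X}{e} \right)\subseteq X$ must therefore be equivalent to $|X|\geq 3$ for all $X\subseteq\Set{ a, b, c, d }$.

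For the decisive step, associate with each initial cause $x\in\ic{e}$ the set $S_x=\left( \Set{ x }\cup\Set{ d \mid \drops{x}{d}{e} } \right)\cap\Set{ a, b, c, d }$. Unfolding $\ic{\cdot}$ and $\dc{\cdot}{\cdot}$, one checks that for $X\subseteq\Set{ a, b, c, d }$ the event $e$ is enabled at $X$ iff $X\cap S_x\neq\emptyset$ for every $x\in\ic{e}$. From ``$e$ enabled at every $3$-element subset'' it follows that $|S_x|\geq 2$ for all $x$ (an $S_x$ of size at most $1$ would be missed by some $3$-element subset). From ``$e$ not enabled at the $2$-element subset $T$'', which must hold for each of the six such $T\subseteq\Set{ a, b, c, d }$, there is some $x\in\ic{e}$ with $S_x\cap T=\emptyset$, hence $S_x\subseteq\Set{ a, b, c, d }\setminus T$, and since both sides have size $2$ we get $S_x=\Set{ a, b, c, d }\setminus T$. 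As $T$ ranges over the six $2$-element subsets, so does $\Set{ a, b, c, d }\setminus T$, so these witnesses are six pairwise distinct elements of $\ic{e}\subseteq E$, contradicting $|E|=5$; hence no such $\sigma$ exists. The main effort will be the bookkeeping in this last step: checking carefully that $\left( \ic{e}\setminus\dc{X}{e} \right)\subseteq X$ really collapses, for $X\subseteq\Set{ a, b, c, d }$, to ``$X$ meets every $S_x$'' (including the corner case of a self-loop cause $x=e$), and that $\confOp'=\emptyset$ and $\ic{a}=\dots=\ic{d}=\emptyset$ are genuinely forced; everything after that is the clean pigeonhole $6>5$.
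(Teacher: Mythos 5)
Your proof is correct, and it takes a genuinely different route from the paper's. The paper argues more concretely: it notes that $e$ needs at least three initial causes, fixes them WLOG as $a,b,c$, extracts specific droppers (events dropping $b$ and $c$ for $e$) from traces like $abde$ and $acde$, and then exhibits a trace such as $ade$ that the SES admits but $\delta$ does not; full generality is then dispatched by remarking that only finitely many SESs over $\Set{a,b,c,d,e}$ exist and none works. You instead prove a uniform characterization: after forcing $\confOp'=\emptyset$ and $\ic{a}=\ic{b}=\ic{c}=\ic{d}=\emptyset$, you show that enabling of $e$ at $X\subseteq\Set{a,b,c,d}$ is equivalent to $X$ meeting every set $S_x=\left(\Set{x}\cup\droppers{x}{e}\right)\cap\Set{a,b,c,d}$ for $x\in\ic{e}$ --- in effect reconstructing the bundles of $\des{\sigma}$ --- and then a pigeonhole count shows that matching ``enabled iff $|X|\geq 3$'' would require six pairwise distinct initial causes (one per $2$-element complement), exceeding $|E|=5$. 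What your approach buys is rigor and self-containment: it avoids the paper's WLOG on which events are initial causes (which silently ignores, e.g., a self-loop cause $\enab{e}{e}$, a case your $S_x$ bookkeeping explicitly covers), it avoids the unproven exhaustive ``finitely many SESs'' claim, and it makes transparent why the counterexample works, namely that six bundles cannot be encoded with only five candidate initial causes --- exactly the intuition stated before the lemma. The paper's argument, in exchange, is shorter and more immediately readable for this one concrete $\delta$.
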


\begin{proof}[Proof of Lemma~\ref{lem:DESninSES}]
	Assume a SES $ \sigma = \left( E, \confOp, \enabOp, \shrinkingCausality \right) $ such that $ E = \{ a, b, c, d,$ $ e \} $ and $ \traces{\sigma} = \traces{\delta} $.
	According to \S~2.2 in \cite{dynamicCausality15}, $ \traces{\delta} $ contains all
	sequences of distinct events of $ E $ such that $ e $ is not the first, second, or third event, \ie for $ e $ to occur in a trace it has to be preceded by at least three of the other events.
	Since by Def. 13 in \cite{dynamicCausality15} conflicts cannot be dropped, $ \traces{\sigma} = \traces{\delta} $ implies $ \confOp = \emptyset $.
	Moreover, since $ e $ has to be preceded by at least three other events that can occur in any order, $ \enabOp $ has to contain at least three initial causes for $ e $. \WLOG let $ \enab{a}{e} $, $ \enab{b}{e} $, and $ \enab{c}{e} $.
	Because of the traces $ abd, acd \in \traces{\delta} $, we need the droppers \drops{b}{d}{e} and \drops{c}{d}{e}. Then $ ad \in \traces{\sigma} $ but $ ad \notin \traces{\delta} $.
	In fact if we fix $ E = \Set{ a, b, c, d, e } $ there only finitely many different SESs $ \sigma = \left( E, \confOp, \enabOp, \shrinkingCausality \right) $ and for none of them $ \traces{\delta} = \traces{\sigma} $ holds.
\end{proof}

Note that the above lemma implies that no translation of the above DES can result into a SES with the same events such that the DES and its translation have same configurations or posets.
However, because the $ x_i $ are fresh, there are no droppers for the self-loops $ \enab{x_i}{x_i} $ in $ \ses{\delta} $. So the translation ensures that all events in $ \Set{ x_i }_{i \in I} $ remain impossible forever in the resulting SES. In fact we show again that the DES and its translation have the exactly same traces and configurations.

\begin{lemma}
	\label{lem:DEStoSES}
	For each DES $ \delta $ there is a SES $ \sigma $, namely $ \sigma = \ses{\delta} $, such that $ \traces{\delta} = \traces{\sigma} $ and $ \configurations{\delta} = \configurations{\sigma} $.
\end{lemma}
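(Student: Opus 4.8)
The plan is to instantiate $\sigma = \ses{\delta}$ as in \defi\ref{def:DESintoSES}, verify it is a well-formed SES, observe that the fresh events $\Set{x_i}_{i \in I}$ can never occur in a trace, and then translate the enabling condition of $\sigma$ event-by-event into the bundle-satisfaction condition of $\delta$; the configuration claim will then follow from \lem\ref{lem:SESconf}. For well-formedness: since $\confOp$ is carried over unchanged (it is irreflexive and symmetric by Def. 1 in \cite{dynamicCausality15}) and every dropper $\drops{x_i}{d}{e}$ introduced by \defi\ref{def:DESintoSES} comes with a matching enabling $\enab{x_i}{e}$ with $d \in X_i \subseteq E$ (so $d \neq x_i$), the side conditions required of a SES by Def. 12 in \cite{dynamicCausality15} are readily checked, hence $\sigma$ is a SES.

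Next I would show the $x_i$ are impossible. Suppose some trace $t = e_1 \cdots e_n \in \traces{\sigma}$ had $e_i = x_j$. By the self-loop $\enab{x_j}{x_j}$ we have $x_j \in \ic{x_j}$, and since every dropper in $\ses{\delta}$ targets some event of $E$ while $x_j \notin E$, there is no dropper $\drops{x_j}{d}{x_j}$, so $\dc{\overline{t_{i-1}}}{x_j} = \emptyset$. The SES enabling condition $\bigl( \ic{x_j} \setminus \dc{\overline{t_{i-1}}}{x_j} \bigr) \subseteq \overline{t_{i-1}}$ would then force $x_j \in \overline{t_{i-1}}$, contradicting distinctness of the events in $t$. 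Hence $\traces{\sigma}$ consists only of sequences of distinct, pairwise non-conflicting events of $E$.

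Now fix such a sequence $t = e_1 \cdots e_n$ over $E$ and $1 \leq i \leq n$. By \defi\ref{def:DESintoSES} we have $\ic{e_i} = \Set{ x_j \mid \buEn{X_j}{e_i} }$, and for $H \subseteq E$, $\drops{x_j}{d}{e_i}$ holds iff $\buEn{X_j}{e_i}$ and $d \in X_j$, so $\dc{H}{e_i} = \Set{ x_j \mid \buEn{X_j}{e_i} \land H \cap X_j \neq \emptyset }$. Since $\ic{e_i} \subseteq \Set{x_j}_{j \in I}$ and no $x_j$ lies in $\overline{t_{i-1}} \subseteq E$, the condition $\bigl( \ic{e_i} \setminus \dc{\overline{t_{i-1}}}{e_i} \bigr) \subseteq \overline{t_{i-1}}$ is equivalent to $\ic{e_i} \setminus \dc{\overline{t_{i-1}}}{e_i} = \emptyset$, i.e.\ to $\overline{t_{i-1}} \cap X_j \neq \emptyset$ for every bundle $X_j$ with $\buEn{X_j}{e_i}$ — which is exactly the condition required for $e_i$ in a DES trace (\S~2.2 in \cite{dynamicCausality15}). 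As the conflict relations coincide, this yields $t \in \traces{\sigma}$ iff $t \in \traces{\delta}$; together with the previous paragraph, $\traces{\delta} = \traces{\sigma}$. Finally, by \lem\ref{lem:SESconf}, \S~2.2 in \cite{dynamicCausality15}, and this trace equality, $\configurations{\delta} = \Set{ \overline{t} \mid t \in \traces{\delta} } = \Set{ \overline{t} \mid t \in \traces{\sigma} } = \configTraces{\sigma} = \configurations{\sigma}$.

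The only real content sits in the third paragraph: one must check carefully that in $\sigma$ the set $\dc{\cdot}{e_i}$ never contains or deletes a genuine event of $E$ but only the artificial causes $x_j$, so that the set difference collapses precisely to the bundle-hitting requirement in \emph{both} directions, and that the self-loops on the $x_i$ (handled in the second paragraph) cannot accidentally leak into a trace. Everything else is routine bookkeeping about trace conditions.
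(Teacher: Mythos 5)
Your proof is correct and follows essentially the same route as the paper's: instantiate $\ses{\delta}$, note the fresh events are rendered impossible by their undroppable self-loops, rewrite the SES enabling condition as the bundle-hitting condition of the DES to get $\traces{\sigma}=\traces{\delta}$, and then obtain the configurations via \lem\ref{lem:SESconf} and the trace-based definition in \S~2.2. You are merely a bit more explicit than the paper in computing $\ic{e_i}$ and $\dc{\cdot}{e_i}$ and in arguing why the inclusion collapses to emptiness, which is fine.
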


\begin{proof}[Proof of Lemma~\ref{lem:DEStoSES}]
	Let $ \delta = \left( E, \confOp, \buEnOp \right) $ be a DES. By Def. 7 in \cite{dynamicCausality15}, $ \confOp \subseteq E^2 $ is irreflexive and symmetric. Hence, by \defs12, 1 in \cite{dynamicCausality15}, and \ref{def:DESintoSES}, $ \sigma = \ses{\delta} = \left( E', \confOp, \enabOp, \shrinkingCausality \right) $ is a SES.	
	
	Let $ t = e_1 \cdots e_n $.	Then, by Def. 13 in \cite{dynamicCausality15}, $ t \in \traces{\sigma} $ iff $ \overline{t} \subseteq E $, $ \neg ( \conf{e_i}{e_j} ) $, and $ ( \ic{e_i} \setminus \dc{\overline{t_{i - 1}}}{e_i} ) \subseteq \overline{t_{i - 1}} $ for all $ 1 \leq i, j \leq n $. Note that we have $ \overline{t} \subseteq E $ instead of $ \overline{t} \subseteq E' $, because all events in $ t $ have to be distinct and for all events in $ E' \setminus E $ there is an initial self-loop but no dropper.
	Since $ \dc{H}{e} = \Set{ e' \mid \exists d \in H \logdot \drops{e'}{d}{e} } $ and $ \ic{e} = \Set{ e' \mid \enab{e'}{e} } $, we have $ \left( \ic{e_i} \setminus \dc{\overline{t_{i - 1}}}{e_i} \right)$ $ \subseteq \overline{t_{i - 1}} $ iff $ \forall e' \in E \logdot \enab{e'}{e_i} \implies e' \in \overline{t_{i - 1}} \lor \exists d \in \overline{t_{i - 1}} \logdot \drops{e'}{d}{e_i} $ for all $ 1 \leq i \leq n $.
	By \defi\ref{def:DESintoSES}, then $ t \in \traces{\sigma} $ iff $ \overline{t} \subseteq E $, $ \neg \left( \conf{e_i}{e_j} \right) $, and $ \buEn{X}{e_i} \implies \overline{t_{i - 1}} \cap X \neq \emptyset $ for all $ 1 \leq i, j \leq n $ and all $ X \subseteq E $.
	Hence, by the definition of traces in \S~2.2 in \cite{dynamicCausality15}, $ t \in
	\traces{\sigma} $ iff $ t \in \traces{\delta} $, \ie $ \traces{\sigma} = \traces{\delta} $.
	
	By \lem\ref{lem:SESconf}, the definition of
	configurations in \S~2.2, and Def. 13, then
	also $ \configurations{\delta} = \configTraces{\sigma} = \configurations{\sigma} $.
\end{proof}

Moreover the DES and its translation have exactly the same posets.

\begin{theorem}
	\label{thm:DESintoSES}
	For each DES $ \delta $ there is a SES $ \sigma  = \ses{\delta} $, such that $  \posetsEq{\delta} {\sigma} $.
\end{theorem}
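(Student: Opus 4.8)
The plan is to mirror, on the dual side, the argument used for Theorem~\ref{thm:SESintoDES}: first import the agreement of traces and configurations from Lemma~\ref{lem:DEStoSES}, and then check that for every event occurrence in a common trace the two structures admit exactly the same sets of causes, so that the families of posets coincide. Let $ \delta = \left( E, \confOp, \buEnOp \right) $ be a DES. By \lem\ref{lem:DEStoSES}, $ \sigma = \ses{\delta} = \left( E', \confOp, \enabOp, \shrinkingCausality \right) $ with $ E' = E \cup \Set{ x_i }_{i \in I} $ is a SES with $ \traces{\delta} = \traces{\sigma} $ and $ \configurations{\delta} = \configurations{\sigma} $. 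I would first record that no fresh event $ x_i $ ever occurs: each $ x_i $ carries the self-loop $ \enab{x_i}{x_i} $ for which, by \defi\ref{def:DESintoSES}, there is no dropper, so $ x_i $ is impossible forever; hence every poset of $ \delta $ or of $ \sigma $ ranges over events of $ E $ only, and every configuration/prefix used below is a subset of $ E $.

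Next I would fix a trace $ t = e_1 \cdots e_n \in \traces{\delta} = \traces{\sigma} $, an index $ 1 \leq i \leq n $, and let $ \buEn{X_1}{e_i}, \ldots, \buEn{X_m}{e_i} $ be all bundles of $ \delta $ pointing to $ e_i $. By Defs.~8 and~13 in \cite{dynamicCausality15} the posets attached to $ t $ in $ \delta $, resp.\ in $ \sigma $, are determined by which $ U \subseteq \overline{t_{i - 1}} $ qualify as a set of causes for $ e_i $: in $ \delta $ this is the condition $ \forall 1 \leq k \leq m \logdot X_k \cap U \neq \emptyset $, and in $ \sigma $ it is the condition $ \left( \ic{e_i} \setminus \dc{U}{e_i} \right) \subseteq U $ of Def.~13. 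So the whole theorem comes down to showing these two conditions are equivalent for all $ U \subseteq E $.

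For that step I would unfold \defi\ref{def:DESintoSES}: since $ e_i \in E $, the self-loops contribute nothing to $ \ic{e_i} $, so $ \ic{e_i} = \Set{ e' \mid \enab{e'}{e_i} } = \Set{ x_k \mid 1 \leq k \leq m } $. As $ U \subseteq E $ contains none of the $ x_k $, the requirement $ \left( \ic{e_i} \setminus \dc{U}{e_i} \right) \subseteq U $ collapses to $ \Set{ x_1, \ldots, x_m } \subseteq \dc{U}{e_i} $. Using $ \dc{H}{e} = \Set{ e' \mid \exists d \in H \logdot \drops{e'}{d}{e} } $ and $ \shrinkingCausality = \Set{ \drops{x_k}{d}{e} \mid d \in X_k \land \buEn{X_k}{e} } $ from \defi\ref{def:DESintoSES}, we get $ x_k \in \dc{U}{e_i} $ iff $ \exists d \in U \logdot d \in X_k $, i.e.\ iff $ U \cap X_k \neq \emptyset $. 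Hence $ \left( \ic{e_i} \setminus \dc{U}{e_i} \right) \subseteq U $ iff $ \forall 1 \leq k \leq m \logdot X_k \cap U \neq \emptyset $, which is the DES side, so by Defs.~8 and~13 in \cite{dynamicCausality15} the two structures induce the same families of posets, \ie $ \posetsEq{\delta}{\sigma} $.

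I expect the only subtlety to be the bookkeeping around the fresh events: one must be sure that they never occur in a trace (so the posets live over $ E $) and that they are never themselves among the causes of a real event — both of which hold because the candidate cause sets $ U $ are drawn from genuine configurations and are therefore subsets of $ E $. Everything else is the symmetric counterpart of the computation already carried out in the proof of Theorem~\ref{thm:SESintoDES}.
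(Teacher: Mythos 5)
Your proposal is correct and follows essentially the same route as the paper's own proof: invoke Lemma~\ref{lem:DEStoSES} for equality of traces and configurations, then show that for each event occurrence in a common trace the SES cause condition $\left( \ic{e_i} \setminus \dc{U}{e_i} \right) \subseteq U$ is equivalent to the DES condition $\forall 1 \leq k \leq m \logdot X_k \cap U \neq \emptyset$ by unfolding Definition~\ref{def:DESintoSES}. Your version merely makes the bookkeeping about the fresh, forever-impossible events $x_i$ (that they never occur, never appear in cause sets, and constitute exactly $\ic{e_i}$) more explicit than the paper does, which is a harmless refinement rather than a different argument.
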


\begin{proof}[Proof of Theorem~\ref{thm:DESintoSES}]
	Let $ \delta = \left( E, \confOp, \buEnOp \right) $ be a DES. By \lem\ref{lem:DEStoSES}, $ \sigma = \ses{\delta} = \left( E, \confOp, \enabOp, \shrinkingCausality \right) $ is a SES such that $ \traces{\delta} = \traces{\sigma} $ and $ \configurations{\delta} = \configurations{\sigma} $.\\
	Let $ t = e_1 \cdots e_n \in \traces{\delta} $, $ 1 \leq i \leq n $, and the bundles $ \buEn{X_1}{e_i}, \ldots, \buEn{X_m}{e_i} $ all bundles pointing to $ e_i $.
	For $ U $ to be a cause for $ e_i $ \defi13 in \cite{dynamicCausality15} requires $ ( \ic{e_i} \setminus \dc{U}{e_i} ) \subseteq U $.
	Since $ \dc{H}{e} = \Set{ e' \mid \exists d \in H \logdot \drops{e'}{d}{e} } $ and $ \ic{e} = \{ e' \mid \enab{e'}{e} \} $, this condition holds iff the condition $ \enab{e'}{e_i} \implies e' \in U \lor \exists d \in U \logdot \drops{e'}{d}{e_i} $ holds for all $ e' \in E $.
	By \defi\ref{def:DESintoSES}, then $ ( \forall 1 \leq k \leq n \logdot X_k \cap U \neq \emptyset) $ iff $ \left( \left( \ic{e_i} \setminus \dc{U}{e_i} \right) \subseteq U \right) $.
	So, by \defs8 and 13 in \cite{dynamicCausality15}, $  \posetsEq{\delta}{\sigma} $.
\end{proof}

Thus SESs and DESs have the same expressive power.

\begin{proof}[Proof of Theorem~1 in \cite{dynamicCausality15}]
	By \ths\ref{thm:SESintoDES} and \ref{thm:DESintoSES}.
\end{proof}

\cite{Langerak97causalambiguity} proves that for DESs equivalence \wrt posets
based on early causality coincides with trace equivalence. Since SESs are as expressive as DESs \wrt families of posets based on early causality, the same correspondence holds for SESs.

\begin{corollary}
	\label{col:SESequiv}
	Let $ \sigma_1, \sigma_2 $ be two SES. Then $ \posetsEq{\sigma_1 }{\sigma_2} $ iff $ \traces{\sigma_1} = \traces{\sigma_2} $.
\end{corollary}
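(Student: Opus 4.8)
The plan is to reduce the statement to the corresponding fact for DESs, which is available from \cite{Langerak97causalambiguity}, by transporting it along the trace- and poset-preserving encoding $ \des{\cdot} $ established above. Concretely, given two SESs $ \sigma_1, \sigma_2 $, I would set $ \delta_1 = \des{\sigma_1} $ and $ \delta_2 = \des{\sigma_2} $. By \lem\ref{lem:SEStoDES} we then have $ \traces{\sigma_j} = \traces{\delta_j} $ for $ j \in \Set{ 1, 2 } $, and by Theorem~\ref{thm:SESintoDES} we have $ \posetsEq{\sigma_j}{\delta_j} $ for $ j \in \Set{ 1, 2 } $.

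Next I would assemble a chain of three equivalences. First, since $ \posetsEq{\cdot}{\cdot} $ is equality of the associated families of posets, it is an equivalence relation; combining this with $ \posetsEq{\sigma_1}{\delta_1} $ and $ \posetsEq{\sigma_2}{\delta_2} $ gives $ \posetsEq{\sigma_1}{\sigma_2} $ iff $ \posetsEq{\delta_1}{\delta_2} $ (left to right by composing $ \posetsEq{\delta_1}{\sigma_1} $, $ \posetsEq{\sigma_1}{\sigma_2} $, $ \posetsEq{\sigma_2}{\delta_2} $; the converse symmetrically). Second, by \cite{Langerak97causalambiguity}, $ \posetsEq{\delta_1}{\delta_2} $ iff $ \traces{\delta_1} = \traces{\delta_2} $. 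Third, since $ \traces{\sigma_j} = \traces{\delta_j} $, we have $ \traces{\delta_1} = \traces{\delta_2} $ iff $ \traces{\sigma_1} = \traces{\sigma_2} $. Concatenating the three equivalences yields $ \posetsEq{\sigma_1}{\sigma_2} $ iff $ \traces{\sigma_1} = \traces{\sigma_2} $, as claimed.

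The one point that genuinely needs care — rather than a deep obstacle — is to make sure that the poset semantics of SESs used in Theorem~\ref{thm:SESintoDES} and the poset semantics of DESs used in \cite{Langerak97causalambiguity} are the \emph{same} notion, namely families of posets based on early causality; only then does the cited DES result apply verbatim to $ \delta_1 $ and $ \delta_2 $. This is exactly the compatibility asserted in the remark preceding the corollary, and it is what makes the encoding $ \des{\cdot} $ usable here; the remaining steps are just bookkeeping with the equivalence relation $ \posetsEq{\cdot}{\cdot} $.
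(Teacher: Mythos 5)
Your proof is correct and follows essentially the same route as the paper: the paper gives no separate proof of the corollary but justifies it by the remark immediately preceding it, namely transporting Langerak's trace/poset coincidence for DESs along the trace- and poset-preserving translation $\des{\cdot}$ (Lemma~\ref{lem:SEStoDES} and Theorem~\ref{thm:SESintoDES}), which is exactly the reduction you spell out.
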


Then Theorem~2 in \cite{dynamicCausality15} states:
\begin{quote}
	Let $ \sigma, \sigma' $ be two SES.\\
	Then $ \posetsEq{\sigma }{ \sigma'} $ iff $ \transEq{\sigma}{\sigma'}$ iff $ \traces{\sigma} = \traces{\sigma'} $.
\end{quote}

\begin{proof}[Proof of Theorem~2 in \cite{dynamicCausality15}]
	By \cor\ref{col:SESequiv}, $  \posetsEq {\sigma}{\sigma' }$ iff $ \traces{\sigma} = \traces{\sigma'} $.
	
	If $ \configurations{\sigma} \neq \configurations{\sigma'} $ then, by \lem\ref{lem:SESconf} and \defs13 in \cite{dynamicCausality15}, $ \posetsNEq{\sigma } {\sigma'} $ and $\transNEq{\sigma}{\sigma'} $. Hence assume $ \configurations{\sigma} = \configurations{\sigma'} $.
	Note that, by Def. 13 and \lem\ref{lem:SESconf}, for all $ C \in \configurations{\sigma} $ there is a trace $ t \in \traces{\sigma} $ such that $ \overline{t} = C $. Moreover for every trace $ t \in \traces{\sigma} $ except the empty trace there is a sub-trace $ t' \in \traces{\sigma} $ and a sequence of events $ e_1 \cdots e_m $ such that $ t = t' e_1 \cdots e_m $ and $ \forall e \in \Set{ e_1, \ldots, e_m } \logdot \left( \ic{e} \setminus \dc{\overline{t'}}{e} \right) \subseteq \overline{t'} $.
	Thus, by \lem\ref{lem:SEStransTraces}, $ \traces{\sigma} = \traces{\sigma'} $ iff $  \transEq {\sigma}{\sigma'}$.
\end{proof}

Theorem~3 in \cite{dynamicCausality15} states:
\begin{quote}
	SESs and EBESs are incomparable.
\end{quote}

\begin{proof}[Proof of Theorem~3 in \cite{dynamicCausality15}]
	$ $\\
	Let $ \sigma_{\xi} = \left( \Set{ a, b, c }, \emptyset, \Set{ \enab{a}{b} }, \Set{ \drops{a}{c}{b} } \right) $ be the SES that is depicted in \fig\ref{fig:counterExamples}.
	Assume there is some EBES $ \xi = \left( E, \disaOp, \buEnOp \right) $ such that $ \traces{\sigma_\xi} = \traces{\xi} $.
	By Def. 13 in \cite{dynamicCausality15}, $ \traces{\sigma_\xi} = \Set{ \epsilon, a, c, ab, ac, ca, cb, abc, acb, cab, cba } $, \ie $ b $ cannot occur first.
	By \defi6 in \cite{dynamicCausality15}, a disabling $ \disa{x}{y} $ implies that $ y $ can never precedes $ x $.
	Thus we have $ \disaOp \cap \Set{ a, b, c }^2 = \emptyset $, because within $ \traces{\sigma_\xi} $ each pair of events of $ \Set{ a, b, c } $ occur in any order.
	Similarly we have $ \buEnOp \cap \{ \buEn{X}{e} \mid e \in \Set{ a, b, c } \land X \cap \Set{ a, b, c } = \emptyset \} = \emptyset $, because $ \buEn{x}{y} $ implies that $ x $ always has to precede $ y $.
	Moreover, by \defi6, adding impossible events as causes
	or using them within the disabling relation does not influence the set of
	traces.
	Thus there is no EBES $ \xi $ with the same traces as $ \sigma_{\xi} $. By
	\defi6 and the definition of posets in EBESs, then there is no EBES $ \xi $ with the same configurations or posets as $ \sigma_{\xi} $.
	
	Let $ \xi_{\sigma} = \left( \Set{ e, f }, \Set{ \disa{e}{f} }, \emptyset \right) $ be the EBES that is depicted in \fig\ref{fig:counterExamples}.
	Assume there is some SES $ \sigma = \left( E, \confOp, \enabOp, \shrinkingCausality \right) $ such that $ \traces{\xi_{\sigma}} = \traces{\sigma} $.
	According to \S~2.2 in \cite{dynamicCausality15}, $ \traces{\xi_{\sigma}} = \Set{ \epsilon,
	e, f, ef } $.
	By Def. 13 and because of the traces $ e $ and $ f $, there are no initial causes for $ e $ and f, \ie $ \enabOp \cap \Set{ \enab{x}{y} \mid y \in \Set{ e, f } } = \emptyset $.
	Moreover, $ \confOp \cap \Set{ e, f }^2 = \emptyset $, because of the trace $ ef $ and because conflicts cannot be dropped.
	Thus $ fe \in \traces{\sigma} $ but $ fe \notin \traces{\xi_{\sigma}} $, \ie
	there is no SES $ \sigma $ with the same traces as $ \xi_{\sigma} $. Then by
	\defi13, there is no SES $ \sigma $ with the same configurations or families of posets as $ \xi_{\sigma} $.
\end{proof}

\begin{lemma}\label{lma:SESinRCES}
	For each SES $ \sigma $ there is a RCES $ \rho $, such that $  \transEq{\sigma}{\rho}$.
\end{lemma}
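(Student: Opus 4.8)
The plan is to reuse the generic embedding of \defi\ref{def:translationIntoRCES}: I would put $ \rho = \rces{\sigma} $ and invoke \lem\ref{lma:TransInRCES}, which already performs all of the RCES bookkeeping and establishes the transition equivalence. Hence the whole task reduces to checking that an arbitrary SES $ \sigma = \left( E, \confOp, \enabOp, \shrinkingCausality \right) $, with the transition relation $ \transS{}{} $ of Def.~13 in \cite{dynamicCausality15} in the role of $ \transOp $, meets the two hypotheses of \defi\ref{def:translationIntoRCES}.

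The first hypothesis, $ \transS{X}{Y} \implies X \subseteq Y $, I would simply read off from Def.~13, where $ X \subseteq Y $ is one of the defining conjuncts of $ \transS{X}{Y} $. For the second hypothesis I would fix configurations with $ X \subseteq X' \subseteq Y' \subseteq Y $, assume $ \transS{X}{Y} $, and verify the three conjuncts of $ \transS{X'}{Y'} $ in turn. First, $ X' \subseteq Y' $ is given. Second, conflict-freeness of $ Y' $ follows from conflict-freeness of $ Y $ since $ Y' \subseteq Y $. Third, for the causal conjunct I would pick any $ e \in Y' \setminus X' $, observe $ e \in Y \setminus X $ (using $ X \subseteq X' $ and $ Y' \subseteq Y $), so that $ \transS{X}{Y} $ gives $ \left( \ic{e} \setminus \dc{X}{e} \right) \subseteq X $, and then use that $ \dc{H}{e} = \Set{ e' \mid \exists d \in H \logdot \drops{e'}{d}{e} } $ is monotone in $ H $ to obtain $ \dc{X}{e} \subseteq \dc{X'}{e} $, whence $ \left( \ic{e} \setminus \dc{X'}{e} \right) \subseteq \left( \ic{e} \setminus \dc{X}{e} \right) \subseteq X \subseteq X' $. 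This establishes $ \transS{X'}{Y'} $.

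The only conceptually delicate point---and the place where I would be most careful---is the direction of this monotonicity under shrinking causality: enlarging the history from $ X $ to $ X' $ can only activate more droppers, so it only removes events from the set of still-unmet initial causes of $ e $, never adds them, which is precisely what makes the interval-closure property go through. Beyond this observation I do not anticipate a genuine obstacle: once both hypotheses of \defi\ref{def:translationIntoRCES} are in place, \lem\ref{lma:TransInRCES} immediately yields that $ \rho = \rces{\sigma} $ is a RCES with $ \transEq{\sigma}{\rho} $, which is the claim.
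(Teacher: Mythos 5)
Your proposal is correct and follows essentially the same route as the paper's proof: reduce to the two hypotheses of \defi\ref{def:translationIntoRCES}, verify them via Def.~13 and the monotonicity of $\dc{\cdot}{e}$ (so that $\left( \ic{e} \setminus \dc{X'}{e} \right) \subseteq \left( \ic{e} \setminus \dc{X}{e} \right)$), and then conclude with \lem\ref{lma:TransInRCES} for $\rho = \rces{\sigma}$.
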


\begin{proof}
	By \defi13 in \cite{dynamicCausality15}, $ \transS{X}{Y} $ implies $ X \subseteq Y $  for all $ X, Y \in \configurations{\sigma} $.
	
	Assume $ X \subseteq X' \subseteq Y' \subseteq Y $.
	Then, by Def. 13, $ \transS{X}{Y} $ implies $ \forall e, e' \in Y \logdot \neg \left( \conf{e}{e'} \right) $ and $ \forall e \in Y \setminus X \logdot \left( \ic{e} \setminus \dc{X}{e} \right) \subseteq X $.
	Then $ X \subseteq X' $ implies $ \left( \ic{e} \setminus \dc{X'}{e} \right)
	\subseteq \left( \ic{e} \setminus \dc{X}{e} \right) $. Then $ \forall e, e' \in Y' \logdot \neg \left( \conf{e}{e'} \right) $ and $ \forall e \in Y' \setminus X' \logdot \left( \ic{e} \setminus \dc{X'}{e} \right) \subseteq X' $, because of $ Y' \subseteq Y $.
	By Def. 13, then $ \transS{X'}{Y'} $.
	
	Thus $ \sigma $ satisfies the conditions of
	\defi\ref{def:translationIntoRCES}. Then by \lem\ref{lma:TransInRCES},
	$ \rho = \rces{\sigma} $ is a RCES such that $\transEq{\sigma}{\rho}$.
\end{proof}

\begin{lemma}
\label{lma:SESinRCESstrictly} 
	There is no transition-equivalent SES to the RCES $ \rho_{\sigma} $, where $ \rho_{\sigma} = ( \Set{ e , f }, \Set{ \enrc{\emptyset}{\Set{ e }} , \enrc{\emptyset}{\Set{ f }} , \enrc{\Set{ f }}{\Set{ e, f }} } ) $.
\end{lemma}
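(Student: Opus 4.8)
The plan is a proof by contradiction: assume a SES $ \sigma $ with $ \transEq{\sigma}{\rho_{\sigma}} $. Transition equivalence requires $ \sigma $ and $ \rho_{\sigma} $ to have the same events, so $ \sigma $ has the event set $ \Set{ e, f } $ and exactly the transitions of $ \rho_{\sigma} $. First I would read off the transition relation of $ \rho_{\sigma} $ from Def. 10 in \cite{dynamicCausality15}: the enablings $ \enrc{\emptyset}{\Set{ e }} $, $ \enrc{\emptyset}{\Set{ f }} $, and $ \enrc{\Set{ f }}{\Set{ e, f }} $ yield the transitions $ \transRC{\emptyset}{\Set{ e }} $, $ \transRC{\emptyset}{\Set{ f }} $, and $ \transRC{\Set{ f }}{\Set{ e, f }} $ (for these there are no non-trivial intermediate sets to check); conversely, since, as used in the proof of \lem\ref{lma:TransInRCES}, $ \transRC{X}{Y} $ requires an enabling $ \enrc{X'}{Y} $ with $ X' \subseteq X $, and neither $ \enrc{\emptyset}{\Set{ e, f }} $ nor $ \enrc{\Set{ e }}{\Set{ e, f }} $ is present in $ \rho_{\sigma} $, we get $ \neg \transRC{\Set{ e }}{\Set{ e, f }} $. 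Hence $ \sigma $ must satisfy $ \transS{\emptyset}{\Set{ e }} $, $ \transS{\emptyset}{\Set{ f }} $, $ \transS{\Set{ f }}{\Set{ e, f }} $, but \emph{not} $ \transS{\Set{ e }}{\Set{ e, f }} $.

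Next I would extract structural information about $ \sigma $ from these transitions via Def. 13 in \cite{dynamicCausality15}. From $ \transS{\emptyset}{\Set{ f }} $, together with $ \dc{\emptyset}{f} = \emptyset $, I get $ \ic{f} \subseteq \emptyset $, i.e.\ $ f $ has no initial cause. From $ \transS{\Set{ f }}{\Set{ e, f }} $ I get that $ \Set{ e, f } $ is conflict-free, in particular $ \neg ( \conf{e}{f} ) $. From $ \transS{\emptyset}{\Set{ e }} $ I get that $ \Set{ e } $ is a configuration of $ \sigma $, and likewise $ \Set{ e, f } $ is one.

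Then I would observe that these facts already force $ \transS{\Set{ e }}{\Set{ e, f }} $, contradicting the first paragraph: by Def. 13 this transition requires only $ \Set{ e } \subseteq \Set{ e, f } $, conflict-freeness of $ \Set{ e, f } $ (established above, using also that $ \confOp $ is irreflexive), and $ \left( \ic{f} \setminus \dc{\Set{ e }}{f} \right) \subseteq \Set{ e } $ for the single new event $ f $, which holds because $ \ic{f} = \emptyset $. This contradiction finishes the argument; moreover, as noted after Lemma~\ref{lem:DESninSES}, the same configuration of events then also shows there is no SES with the same configurations or posets.

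The step I expect to be the actual content — as opposed to routine unfolding of Def. 10, Def. 13, and the definitions of $ \ic{\cdot} $ and $ \dc{\cdot}{\cdot} $ — is the last one, whose message is that a SES can prevent $ f $ from occurring after $ e $ only through a \emph{symmetric} conflict $ \conf{e}{f} $ or through an undropped initial cause of $ f $; the former is ruled out by $ \transS{\Set{ f }}{\Set{ e, f }} $ and the latter by $ \transS{\emptyset}{\Set{ f }} $, whereas $ \rho_{\sigma} $ realises precisely the \emph{asymmetric} situation where $ f $ may precede $ e $ but not vice versa. I would also take care to justify that the $ \emptyset $-to-singleton steps of $ \rho_{\sigma} $ genuinely follow from Def. 10, and, in case transition equivalence is not taken to fix the event set, to note that any extra events of $ \sigma $ would have to be impossible and hence never enter a configuration, so they cannot help to block $ f $ either.
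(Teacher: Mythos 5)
Your proposal is correct and takes essentially the same route as the paper's own proof: from the configurations $\Set{e}$, $\Set{f}$, $\Set{e,f}$ of $\rho_{\sigma}$ you rule out conflicts and initial causes in the assumed SES and then derive a singleton-to-$\Set{e,f}$ transition that $\rho_{\sigma}$ does not have. Your bookkeeping is in fact the consistent one: given the enabling $\enrc{\Set{f}}{\Set{e,f}}$, the transition missing in $\rho_{\sigma}$ is $\transRC{\Set{e}}{\Set{e,f}}$, which is what you contradict, whereas the paper's final step names $\transS{\Set{f}}{\Set{e,f}}$ and $\transRC{\Set{f}}{\Set{e,f}}$ — an apparent $e$/$f$ swap in an otherwise identical argument.
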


\begin{proof}
	Assume a SES $ \sigma = \left( E, \confOp, \enabOp, \shrinkingCausality \right) $ such that $ \transEq{\sigma }{\rho_{\sigma}} $. Then $ \configurations{\sigma} = \configurations{\rho_{\sigma}} $.
	By Def. 13 in \cite{dynamicCausality15} and \lem\ref{lem:SESconf} and because of
	the configuration $ \Set{ e, f } \in \configurations{\rho_{\sigma}} $, the
	events $ e $ and $ f $ cannot be in conflict with each other, \ie $ \confOp
	\cap \Set{ e, f }^2 = \emptyset $.
	Moreover, because of the configurations $ \Set{ e }, \Set{ f } \in \configurations{\rho_{\sigma}} $, there are no initial causes for $ e $ and $ f $, \ie $ \enabOp \cap \Set{ \enab{x}{y} \mid y \in \Set{ e, f } } = \emptyset $.
	Note that the relation $ \shrinkingCausality $ cannot disable events.
	Thus we have $ \forall a, b \in \Set{ e, f } \logdot \neg \left( \conf{a}{b} \right) $ and $ \left( \ic{e} \setminus \dc{\Set{ f }}{e} \right) = \emptyset \subseteq \Set{ f } $.
	But then, by Def. 13, $ \transS{\Set{ f }}{\Set{ e, f }} $.
	Since $ \transRC{\Set{ f }}{\Set{ e, f }} $ does not hold, this violates our
	assumption, \ie there is no SES which is transition equivalent to $
	\rho_{\sigma} $.
\end{proof}

Theorem~4 in \cite{dynamicCausality15} states:
\begin{quote}
	SESs are strictly less expressive than RCESs.
\end{quote}

\begin{proof}[Proof of Theorem~4 in \cite{dynamicCausality15}]
	By \lems\ref{lma:SESinRCESstrictly} and \ref{lma:SESinRCES}.
\end{proof}

\section{Alternative Partial Order Semantics in DES and SES}
\label{app:partialOrderSemantics}

To show that DES and SES are not only behavioral equivalent ES models but are also very closely related at the structural level we consider the remaining four intentional partial order semantics for DES of \cite{Langerak97causalambiguity}.

Liberal causality is the least restrictive notion of causality in \cite{Langerak97causalambiguity}. Here each set of events from bundles pointing to an event $ e $ that satisfies all bundles pointing to $ e $ is a cause.

\begin{definition}[Liberal Causality]
	Let $ \delta = \left( E, \confOp, \buEnOp \right) $ be a DES, $ e_1 \cdots e_n $ one of its traces, $ 1 \leq i \leq n $, and $ \buEn{X_1}{e_i}, \ldots, \buEn{X_m}{e_i} $ all bundles pointing to $ e_i $.
	A set $ U $ is a cause of $ e_i $ in $ e_1 \cdots e_n $ if
	\begin{compactitem}
		\item $ \forall e \in U \logdot \exists 1 \leq j < i \logdot e = e_j $,
		\item $ U \subseteq \left( X_1 \cup \ldots \cup X_m \right) $, and
		\item $ \forall 1 \leq k \leq m \logdot X_k \cap U \neq \emptyset $.
	\end{compactitem}
	Let $ \posetsLib{t} $ be the set of posets obtained this way for a trace $ t $.
\end{definition}

Bundle satisfaction causality is based on the idea that for an event $ e $ in a trace each bundle pointing to $ e $ is satisfies by exactly one event in a cause of $ e $.

\begin{definition}[Bundle Satisfaction Causality]
	Let $ \delta = \left( E, \confOp, \buEnOp \right) $ be a DES, $ e_1 \cdots e_n $ one of its traces, $ 1 \leq i \leq n $, and $ \buEn{X_1}{e_i}, \ldots, \buEn{X_m}{e_i} $ all bundles pointing to $ e_i $.
	A set $ U $ is a cause of $ e_i $ in $ e_1 \cdots e_n $ if
	\begin{compactitem}
		\item $ \forall e \in U \logdot \exists 1 \leq j < i \logdot e = e_j $ and
		\item there is a surjective mapping $ f : \Set{ X_k } \to U $ such that $ f\!\left( X_k \right) \in X_k $ for all $ 1 \leq k \leq m $.
	\end{compactitem}
	Let $ \posetsBsat{t} $ be the set of posets obtained this way for a trace $ t $.
\end{definition}

Minimal causality requires that there is no subset which is also a cause.

\begin{definition}[Minimal Causality]
	Let $ \delta = \left( E, \confOp, \buEnOp \right) $ be a DES and let $ e_1 \cdots e_n $ be  one of its traces, $ 1 \leq i \leq n $, and $ \buEn{X_1}{e_i}, \ldots, \buEn{X_m}{e_i} $ all bundles pointing to $ e_i $.
	A set $ U $ is a cause of $ e_i $ in $ e_1 \cdots e_n $ if
	\begin{compactitem}
		\item $ \forall e \in U \logdot \exists 1 \leq j < i \logdot e = e_j $,
		\item $ \forall 1 \leq k \leq m \logdot X_k \cap U \neq \emptyset $, and
		\item there is no proper subset of $ U $ satisfying the previous two conditions.	
	\end{compactitem}
	Let $ \posetsMin{t} $ be the set of posets obtained this way for a trace $ t $.
\end{definition}

Late causality contains the latest causes of an event that form a minimal set.

\begin{definition}[Late Causality]
	Let $ \delta = \left( E, \confOp, \buEnOp \right) $ be a DES, $ e_1 \cdots e_n $ one of its traces, $ 1 \leq i \leq n $, and $ \buEn{X_1}{e_i}, \ldots, \buEn{X_m}{e_i} $ all bundles pointing to $ e_i $.
	A set $ U $ is a cause of $ e_i $ in $ e_1 \cdots e_n $ if
	\begin{compactitem}
		\item $ \forall e \in U \logdot \exists 1 \leq j < i \logdot e = e_j $,
		\item $ \forall 1 \leq k \leq m \logdot X_k \cap U \neq \emptyset $,
		\item there is no proper subset of $ U $ satisfying the previous two conditions, and
		\item $ U $ is the latest set satisfying the previous three conditions.	
	\end{compactitem}
	Let $ \posetsLat{t} $ be the set of posets obtained this way for a trace $ t $.
\end{definition}

As derived in \cite{Langerak97causalambiguity}, it holds that
\begin{align*}
	\posetsLat{t}, \posetsEar{t} \subseteq \posetsMin{t} \subseteq \posetsBsat{t}
	\subseteq \posetsLib{t}
\end{align*}
for all traces $ t $.
Moreover a behavioral partial order semantics is defined and it is shown that two DESs have the same posets \wrt to the behavioral partial order semantics iff they have the same posets \wrt to the early partial order semantics iff they have the same traces.

Bundle satisfaction causality is---as the name suggests---closely related to the
existence of bundles. In SESs there are no bundles. Of course, as shown by the translation $ \des{\cdot} $ in \defi\ref{def:SESintoDES}, we can transform the initial and dropped causes of an event into a bundle. And of course if we do so an SES $ \sigma $ and its translation $ \des{\sigma} $ have exactly the same families of posets. But, because bundles are no native concept of SESs, we cannot directly map the definition of posets \wrt to bundle satisfaction to SESs.

To adapt the definitions of posets in the other three cases we have to replace
the condition $ U \subseteq \left( X_1 \cup \ldots \cup X_m \right) $ by $ U \subseteq \left( \Set{ e \mid \enab{e}{e_i} \lor \exists e' \in E \logdot \drops{e'}{e}{e_i} } \right) $ and replace the condition $ \forall 1 \leq k \leq m \logdot X_k \cap U \neq \emptyset $ by $ \left( \ic{e_i} \setminus \dc{U}{e_i} \right) \subseteq U $ (as in \defi13 in \cite{dynamicCausality15}). The remaining conditions remain the same with respect to traces as defined in Def. 13.
Let $ \posetsLib{t} $, $ \posetsMin{t} $, and $ \posetsLat{t} $ denote the sets of posets obtained this way for a trace $ t \in \traces{\sigma} $ of a SES $ \sigma $ \wrt liberal, minimal, and late causality. Moreover, let $ \posetsX{x}{\delta} = \bigcup_{t \in \traces{\delta}}{\posetsX{x}{t}} $ and $ \posetsX{x}{\sigma} = \bigcup_{t \in \traces{\sigma}}{\posetsX{x}{t}} $ for all $ x \in \Set{ \operatorname{lib}, \operatorname{bsat}, \operatorname{min}, \operatorname{late} } $.

Since again the definitions of posets in DESs and SESs are very similar the translations $ \des{\cdot} $ and $ \ses{\cdot} $ preserve families of posets. The proof is very similar to the proofs of \ths\ref{thm:SESintoDES} and \ref{thm:DESintoSES}.

\begin{theorem}
	For each SES $ \sigma $ there is a DES $ \delta $, namely $ \delta = \des{\sigma} $, and for each DES $ \delta $ there is a SES $ \sigma $, namely $ \sigma = \ses{\delta} $, such that $ \posetsX{x}{\sigma} = \posetsX{x}{\delta} $ for all $ x \in \Set{ \operatorname{lib}, \operatorname{min}, \operatorname{late} } $.
\end{theorem}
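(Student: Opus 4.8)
The plan is to follow the pattern of the proofs of \ths\ref{thm:SESintoDES} and~\ref{thm:DESintoSES}, observing that liberal, minimal, and late causality are all built from the same three ``primitive'' clauses: (i) ``$ U $ consists of events occurring before $ e_i $ in the trace''; (ii) a support clause ($ U \subseteq X_1 \cup \cdots \cup X_m $ in a DES, and $ U \subseteq \Set{ e \mid \enab{e}{e_i} \lor \exists e' \in E \logdot \drops{e'}{e}{e_i} } $ in a SES); and (iii) a hitting clause ($ \forall 1 \le k \le m \logdot X_k \cap U \neq \emptyset $ in a DES, and $ ( \ic{e_i} \setminus \dc{U}{e_i} ) \subseteq U $ in a SES). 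Minimal and late causality add only clauses phrased purely in terms of (i)--(iii) (``no proper subset is a cause'', ``$ U $ is the latest such set''). Hence it suffices, for a fixed trace $ t = e_1 \cdots e_n $ and a fixed index $ 1 \le i \le n $, to show that the ``$ U $ is a cause of $ e_i $ in $ t $'' predicate agrees between $ \sigma $ and $ \des{\sigma} $ (and between $ \delta $ and $ \ses{\delta} $) clause by clause; taking the union over all traces and indices then yields $ \posetsX{x}{\sigma} = \posetsX{x}{\des{\sigma}} $ and $ \posetsX{x}{\delta} = \posetsX{x}{\ses{\delta}} $ for every $ x \in \Set{ \operatorname{lib}, \operatorname{min}, \operatorname{late} } $.

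First I would invoke \lem\ref{lem:SEStoDES} (resp.\ \lem\ref{lem:DEStoSES}) so that $ \des{\sigma} $ is a DES (resp.\ $ \ses{\delta} $ a SES) with the same traces; this legitimizes the quantification over traces, and clause (i) is then literally identical on both sides. For the support clause in $ \des{\sigma} $: by \defi\ref{def:SESintoDES} the bundles pointing to $ e_i $ are exactly the sets $ \Set{ x } \cup \droppers{x}{e_i} $ with $ \enab{x}{e_i} $, so $ X_1 \cup \cdots \cup X_m = \ic{e_i} \cup \Set{ d \mid \exists e' \in E \logdot \drops{e'}{d}{e_i} } = \Set{ e \mid \enab{e}{e_i} \lor \exists e' \in E \logdot \drops{e'}{e}{e_i} } $, which is precisely the SES support set, so clause (ii) coincides. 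For clause (iii) I would reuse the equivalence $ ( \forall 1 \le k \le m \logdot X_k \cap U \neq \emptyset ) \iff ( ( \ic{e_i} \setminus \dc{U}{e_i} ) \subseteq U ) $ already established inside the proof of Theorem~\ref{thm:SESintoDES}. Since (i)--(iii) agree, so do the derived clauses of minimal and late causality, hence the cause predicate agrees; therefore $ \posetsX{x}{t} $ computed in $ \sigma $ equals $ \posetsX{x}{t} $ computed in $ \des{\sigma} $ for every trace $ t $ and every $ x \in \Set{ \operatorname{lib}, \operatorname{min}, \operatorname{late} } $.

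The direction $ \delta \mapsto \ses{\delta} $ is symmetric, with one extra bookkeeping step. By \defi\ref{def:DESintoSES} every event $ e \in E $ has $ \ic{e} \subseteq \Set{ x_i }_{i \in I} $, while $ \Set{ d \mid \exists e' \in E \logdot \drops{e'}{d}{e_i} } = X_1 \cup \cdots \cup X_m $; moreover, as already observed in the proof of \lem\ref{lem:DEStoSES}, the fresh events $ x_i $ are impossible and hence never occur in a trace, so any candidate cause $ U $ (consisting of events occurring before $ e_i $ in $ t $) satisfies $ U \subseteq E $ and thus $ U \cap \Set{ x_i }_{i \in I} = \emptyset $. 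Consequently the SES support clause $ U \subseteq \Set{ e \mid \enab{e}{e_i} \lor \exists e' \in E \logdot \drops{e'}{e}{e_i} } $ is equivalent to $ U \subseteq X_1 \cup \cdots \cup X_m $, matching the DES clause, and clause (iii) is handled exactly as in the proof of Theorem~\ref{thm:DESintoSES}. The remaining clauses transfer verbatim. Bundle satisfaction causality is excluded precisely because its support requirement refers to the collection of bundles itself, which has no native counterpart in a SES.

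I expect the only genuine work---rather than real difficulty---to lie in the support clause: checking that the fresh impossible events introduced by $ \ses{\cdot} $ can never enter a cause set, and double-checking that ``being a proper subset that is still a cause'' and ``being the latest cause'' are invariant once the underlying primitive clauses have been matched. No idea beyond the calculations already carried out for early causality in \ths\ref{thm:SESintoDES} and~\ref{thm:DESintoSES} should be required.
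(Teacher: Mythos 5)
Your proposal is correct and follows essentially the same route as the paper: reuse \lem\ref{lem:SEStoDES} and \lem\ref{lem:DEStoSES} for trace equality, reuse the hitting-clause equivalence $ ( \forall k \logdot X_k \cap U \neq \emptyset ) \iff ( ( \ic{e_i} \setminus \dc{U}{e_i} ) \subseteq U ) $ from \ths\ref{thm:SESintoDES} and \ref{thm:DESintoSES} (which already covers minimal and late causality), and match the additional support clause of liberal causality via \defs\ref{def:SESintoDES} and \ref{def:DESintoSES}. Your explicit check that the fresh impossible events of $ \ses{\delta} $ can never enter a cause set is a detail the paper's proof leaves implicit, but it does not change the argument.
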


\begin{proof}
	The definitions of posets in DESs and SESs \wrt to minimal and late causality
	differ in exactly the same condition and its replacement as the definitions of
	posets in DESs and SESs \wrt early causality. Thus the proof in these two cases is similar to the proofs of \ths\ref{thm:SESintoDES} and \ref{thm:DESintoSES}.
	
	If $ \sigma = \left( E, \confOp, \enabOp, \shrinkingCausality \right) $ is a SES then, by \lem\ref{lem:SEStoDES}, $ \delta = \des{\sigma} = \left( E, \confOp, \buEnOp \right) $ is a DES such that $ \traces{\sigma} = \traces{\delta} $ and $ \configurations{\sigma} = \configurations{\delta} $.
	If $ \delta = \left( E, \confOp, \buEnOp \right) $ is a DES then, by \lem\ref{lem:DEStoSES}, $ \sigma = \ses{\delta} = \left( E, \confOp, \enabOp, \shrinkingCausality \right) $ is a DES such that $ \traces{\delta} = \traces{\sigma} $ and $ \configurations{\delta} = \configurations{\sigma} $.
	In both cases let $ t = e_1 \cdots e_n \in \traces{\sigma} $, $ 1 \leq i \leq n $, and $ \buEn{X_1}{e_i}, \ldots, \buEn{X_m}{e_i} $ be all bundles pointing to $ e_i $.
	
	In the case of liberal causality, for $ U $ to be a cause for $ e_i $ the
	definition of posets in SESs requires $ U \subseteq \left( \Set{ e \mid \enab{e}{e_i} \lor \exists e' \in E \logdot \drops{e'}{e}{e_i} } \right) $ and $ ( \ic{e_i} \setminus \dc{U}{e_i} ) \subseteq U $.
	The second condition holds iff $ \forall 1 \leq k \leq m \logdot X_k \cap U \neq \emptyset $ as shown in the proofs of \ths\ref{thm:SESintoDES} and \ref{thm:DESintoSES}.
	By \defs\ref{def:SESintoDES} and \ref{def:DESintoSES}, the first conditions holds iff $ U \subseteq \left( X_1 \cup \ldots \cup X_m \right) $.
	So, by the definitions of posets in DESs and SESs \wrt to liberal causality, $
	\posetsLib{\sigma} = \posetsLib{\delta} $.
\end{proof}

\section{Growing Causality}
As in SESs, both notions of configurations of GESs, traced-based and
transition-based; coincide and in different situations, the more suitable one
can be used.
\begin{lemma}\label{lma:GESConfigEquivalence}
	Let $ \gamma $ be a GES. Then $ \configTraces{\gamma} = \configurations{\gamma} $.
\end{lemma}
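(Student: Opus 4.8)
The plan is to follow the proof of \lem\ref{lem:SESconf} almost line for line, since growing causality is the mirror image of shrinking causality: the only structural change is that the set of current causes of an event \emph{grows} as events occur rather than shrinking. First I would establish $ \configTraces{\gamma} \subseteq \configurations{\gamma} $. Given $ C \in \configTraces{\gamma} $, there is a trace $ t = e_1 \cdots e_n $ of $ \gamma $ with $ \overline{t} = C $ such that no two of the $ e_i $ are in conflict and, for every $ i $, the current causes of $ e_i $ relative to the prefix $ \overline{t_{i - 1}} $ --- its initial causes together with all causes added to it by the events in $ \overline{t_{i - 1}} $ --- are contained in $ \overline{t_{i - 1}} $. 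Exactly as in the SES case, each one-event step $ \overline{t_{i - 1}} \to \overline{t_i} $, and the step $ \emptyset \to \Set{ e_1 } $, then readily satisfies the defining conditions of the GES transition relation of \cite{dynamicCausality15}; composing these transitions along the prefixes of $ t $ yields $ C \in \configurations{\gamma} $.

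For the converse inclusion $ \configurations{\gamma} \subseteq \configTraces{\gamma} $, a configuration $ C $ comes with a sequence $ \emptyset \to X_1 \to \cdots \to X_n = C $ of GES transitions, which --- as in \lem\ref{lem:SESconf} --- gives $ \emptyset \subseteq X_1 \subseteq \cdots \subseteq X_n \subseteq E $, conflict-freeness of $ X_n $, and an enabling condition for the events of each increment $ X_i \setminus X_{i - 1} $. I would fix an arbitrary enumeration $ e_{i, 1}, \ldots, e_{i, m_i} $ of each $ X_i \setminus X_{i - 1} $, concatenate them to a word $ t = e_{1, 1} \cdots e_{n, m_n} $ with $ \overline{t} = C $, and verify that $ t $ is a trace: distinctness and conflict-freeness of its events are immediate from the chain and from $ X_n $ being conflict-free, and what remains is to check, for each $ e_{i, j} $, that its current causes relative to the history $ X_{i - 1} \cup \Set{ e_{i, 1}, \ldots, e_{i, j - 1} } $ lie inside that history.

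The main obstacle is precisely this last point --- the linearizability of a multi-event transition step --- and it is where growing causality genuinely departs from shrinking. In the proof of \lem\ref{lem:SESconf} the corresponding step is a one-line monotonicity argument: enlarging the history can only drop more causes, so the enabling condition is preserved; here enlarging the history can only \emph{add} more causes, so the same reasoning does not apply and the argument must use the precise shape of the GES transition relation of \cite{dynamicCausality15} --- the point being that a multi-event step $ X_{i - 1} \to X_i $ is defined so that the causes of each new event are checked against a state that already contains all of the new events (equivalently, against $ X_i $, hence against every intermediate history $ X_{i - 1} \subseteq H \subseteq X_i $) while still being required to lie in the source $ X_{i - 1} $. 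From this, the current causes of $ e_{i, j} $ relative to $ X_{i - 1} \cup \Set{ e_{i, 1}, \ldots, e_{i, j - 1} } $ are contained in $ X_{i - 1} $, hence in the history, which is exactly the trace enabling condition; so $ t \in \traces{\gamma} $ and $ C = \overline{t} \in \configTraces{\gamma} $. Finally, as in \lem\ref{lem:SESconf}, the two inclusions together establish the claimed equality.
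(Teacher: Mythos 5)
Your proof is correct and takes essentially the same route as the paper: one-event transitions along trace prefixes give $\configTraces{\gamma} \subseteq \configurations{\gamma}$, and for the converse you linearize each multi-event transition, invoking precisely the extra condition of the GES transition relation (that a cause added by one newly occurring event to another must already lie in the source configuration) to replace the monotonicity argument that works for shrinking causality. This is exactly the paper's use of its condition (D5), so the proposal matches the paper's proof.
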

\begin{proof}
	Let $ \gamma = \left( E, \confOp, \enabOp, \growingCausality \right) $.
	
	By \defi15 in \cite{dynamicCausality15}, $ C \in \configTraces{\gamma} $ implies that there is some $ t = e_1 \cdots e_n $ such that $ \overline{t} \subseteq E $, $ \forall 1 \leq i, j \leq n \logdot \neg \left( \conf{e_i}{e_j} \right) $, $ \forall 1 \leq i \leq n \logdot \left( \ic{e_i} \cup \ac{\overline{t_{i - 1}}}{e_i} \right) \subseteq \overline{t_{i - 1}} $, and $ C = \overline{t} $. Hence, by \defi15, $ \transG{\overline{t_i}}{\overline{t_{i + 1}}} $ for all $ 1 \leq i \leq n $ and $ \transG{\emptyset}{\Set{ e_1 }} $.	Thus, by \defi15, $ C \in \configurations{\gamma} $.
	
	By \defi15, $ C \in \configurations{\gamma} $ implies that there are $ X_1, \ldots, X_n \subseteq E $ such that $ \transG{\transG{\transG{\emptyset}{X_1}}{\ldots}}{X_n} $ and $ X_n = C $.
	Then, by \defi15, we have:
	\begin{align}
		& \emptyset \subseteq X_1 \subseteq X_2 \subseteq \ldots \subseteq X_n \subseteq E & \tag{D1} \label{eq:D1}\\
		& \forall e, e' \in X_n \logdot \neg \left( \conf{e}{e'} \right) & \tag{D2} \label{eq:D2}\\
		& \forall e \in X_1 \logdot \left( \ic{e} \cup \ac{\emptyset}{e} \right) \subseteq \emptyset & \tag{D3} \label{eq:D3}\\
		& \begin{array}{l} \forall 1 \leq i < n \logdot \forall e \in X_{i + 1} \setminus X_i \logdot\\ \hspace{3em} \left( \ic{e} \cup \ac{X_i}{e} \right) \subseteq X_i \end{array} & \tag{D4} \label{eq:D4}\\
		& \begin{array}{l} \forall 1 \leq i < n \logdot \forall t, m \in X_{i + 1} \setminus X_i \logdot\forall c\in E \logdot\\ \hspace{3em} \addcause{c}{m}{t} \implies c\in  X_i  \end{array} & \tag{D5} \label{eq:D5}
	\end{align}
	Let $ X_1 = \Set{ e_{1, 1}, \ldots, e_{1, m_1} } $ and $ X_i \setminus X_{i - 1} = \Set{ e_{i, 1}, \ldots, e_{i, m_i} } $ for all $ 1 < i \leq n $.	Then, by \defi15, $ t = e_{1, 1} \cdots e_{1, m_1} \cdots e_{n, 1} \cdots e_{n, m_n} = e_1' \cdots e_k' $ is a trace such that $ \overline{t} \subseteq E $ (because of \eqref{eq:D1}), $ \neg \left( \conf{e_i'}{e_j'} \right) $ for all $ 1 \leq i, j \leq k $ (because of \eqref{eq:D1} and \eqref{eq:D2}), for all $ 1 \leq i \leq k $ and all $ 1 \leq j \leq m_i $ we have $ \left( \ic{e_{i, j}} \cup \ac{\overline{t_{i - 1}}}{e_{i, j}} \right) \subseteq \overline{t_{i - 1}} $ (because of \eqref{eq:D3}, \eqref{eq:D4}, and, by \eqref{eq:D5}, $ \ac{\overline{t_{i - 1}} \cup X_i}{e_{i, j}} = \ac{\overline{t_{i - 1}}}{e_{i, j}} $), and $ \overline{t} = C $ (because $ X_n = C $).
	Thus $ C \in \configTraces{\gamma} $.
\end{proof}

For the incomparability result between GESs and EBESs we consider two
counterexamples, and show that there is no equivalent EBES or GES respectively.
\begin{lemma}\label{lma:EBESninGES}
	There is no configuration-equivalent GES to $\beta_\gamma$ (\cf \fig\ref{fig:counterExamples}).
\end{lemma}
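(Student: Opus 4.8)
The PES $\beta_\gamma$ has events $a,b,c$, with $a$ and $b$ both enabling $c$, and $a$ in conflict with $b$. So a configuration containing $c$ must contain at least one of $a,b$ as a cause; but since $a\,\#\,b$, no configuration contains both. Hence $\configurations{\beta_\gamma} = \{\emptyset, \{a\}, \{b\}, \{a,c\}, \{b,c\}\}$ (the ``$c$ is caused by exactly one of two mutually exclusive events'' pattern). The plan is to assume a GES $\gamma = (E, \confOp, \enabOp, \growingCausality)$ with $\configurations{\gamma} = \configurations{\beta_\gamma}$ and derive a contradiction by producing a forbidden configuration, namely one containing both $a$ and $b$, or one containing $c$ but neither $a$ nor $b$.

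**Pinning down the structure of $\gamma$.**
First I would argue $E = \{a,b,c\}$: any event outside this set would appear in some configuration (every event that can occur does, by Def.~15 together with \lem\ref{lma:GESConfigEquivalence}), contradicting the configuration set. Next, from $\{a\},\{b\}\in\configurations{\gamma}$ and $\transG{\emptyset}{\{a\}}$, $\transG{\emptyset}{\{b\}}$ I get that $a$ and $b$ have no initial causes: $\ic{a}=\ic{b}=\emptyset$, and $\ac{\emptyset}{a}=\ac{\emptyset}{b}=\emptyset$. From $\{a,c\},\{b,c\}\in\configurations{\gamma}$ and \lem\ref{lma:GESConfigEquivalence} neither $a\,\#\,c$ nor $b\,\#\,c$ holds; and since $\{a,b\}\notin\configurations{\gamma}$ while both singletons are configurations, the only way to block $\{a,b\}$ in a GES (where growing causality cannot disable) is $a\,\#\,b$. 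So $\confOp\cap\{a,b,c\}^2 = \{(a,b),(b,a)\}$.

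**Deriving the contradiction.**
Now consider $c$. Since $\transG{\{a\}}{\{a,c\}}$, by \defi15 we have $\ic{c}\cup\ac{\{a\}}{c}\subseteq\{a\}$; similarly $\ic{c}\cup\ac{\{b\}}{c}\subseteq\{b\}$. Taking the intersection, $\ic{c}\subseteq\{a\}\cap\{b\}=\emptyset$, so $\ic{c}=\emptyset$, and moreover $\ac{\{a\}}{c}\subseteq\{a\}$ and $\ac{\{b\}}{c}\subseteq\{b\}$. But $\ac{\emptyset}{c}$: from $\ac{\{a\}}{c}\subseteq\{a\}$, adding a cause requires some event in $\{a\}$ to have been added, so $\ac{\emptyset}{c}=\emptyset$ (an added cause in $\ac{\{a\}}{c}$ can only be $a$ itself, and $a$'s presence is what triggered it). Hence $\ic{c}\cup\ac{\emptyset}{c}=\emptyset\subseteq\emptyset$, which by \defi15 gives $\transG{\emptyset}{\{c\}}$, so $\{c\}\in\configurations{\gamma}$ — but $\{c\}\notin\configurations{\beta_\gamma}$, a contradiction. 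I would then note that, since by \defi6 (for $\beta_\gamma$, a PES) and the definition of configurations/posets, equality of configurations is necessary for equivalence with respect to posets, this also rules out a poset-equivalent GES.

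**Main obstacle.**
The delicate point is the reasoning about the growing-causality relation $\growingCausality$ and the add-cause operator $\ac{\cdot}{\cdot}$: I must argue carefully that $\ac{\emptyset}{c}=\emptyset$, i.e.\ that no cause of $c$ is ``grown into existence'' from the empty configuration. This uses the semantics of $\addcause{\cdot}{\cdot}{\cdot}$ — a causal link to $c$ can only be added by an event that has already occurred — so from the empty set no such link is active, and then one must also check (via condition \eqref{eq:D5}-style reasoning) that the step $\transG{\emptyset}{\{c\}}$ is not blocked by any growing constraint among the events being added simultaneously, which is vacuous here since only $c$ is added. Handling this operator precisely, rather than treating $\ac{}{}$ as if it were the static $\ic{}$, is where the proof needs care.
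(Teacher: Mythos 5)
Your core argument is correct and is essentially the paper's own proof, just with the contradiction placed at a different corner of the same triangle of facts: the paper uses $\Set{c}\notin\configurations{\beta_\gamma}$ together with $\Set{a,c}\in\configurations{\beta_\gamma}$ to force $\enab{a}{c}$ and then contradicts $\Set{b,c}\in\configurations{\beta_\gamma}$, while you use the two pair-configurations to force $\ic{c}=\emptyset$ and then contradict $\Set{c}\notin\configurations{\beta_\gamma}$; both hinge on the same observation that in a GES causes can only grow and $\ac{\emptyset}{c}=\emptyset$. Two caveats on your ``pinning down'' paragraph, though: both of its claims are unsound as stated (luckily also unused in your final contradiction). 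First, $E=\Set{a,b,c}$ does not follow, since a GES may contain impossible events (e.g.\ with unsatisfiable initial causes) that never appear in any configuration. Second, conflict is \emph{not} the only way a GES can block $\Set{a,b}$: growing causality can do it without any conflict, e.g.\ $\addcause{c}{a}{b}$ together with $\addcause{c}{b}{a}$ blocks both orders as long as $c$ has not occurred, so your inference ``$\Set{a,b}\notin\configurations{\gamma}$ hence $\conf{a}{b}$'' is not valid in general. Finally, the step ``$\transG{\Set{a}}{\Set{a,c}}$'' deserves one more line: from $\Set{a,c}\in\configurations{\gamma}=\configTraces{\gamma}$ (\lem\ref{lma:GESConfigEquivalence}) the witnessing trace is $ac$ or $ca$, and the case $ca$ already yields $\Set{c}\in\configurations{\gamma}$ and the contradiction, so one may assume the trace $ac$, which gives $\ic{c}\cup\ac{\Set{a}}{c}\subseteq\Set{a}$; with that (and the analogous argument for $b$) your derivation of $\ic{c}=\emptyset$ and of the forbidden configuration $\Set{c}$ goes through.
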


\begin{proof}
	Assume a GES $ \gamma = \left( E, \confOp', \enabOp, \growingCausality \right) $ such that $ \configurations{\gamma} = \configurations{\beta_\gamma} $.	According to \S~2.2 in \cite{dynamicCausality15}, $ \configurations{\beta_\gamma} = \Set{\emptyset, \Set{ a }, \Set{ b }, \Set{ a, c }, \Set{ b, c } } $. Because $ \Set{ c } \notin \configurations{\beta_\gamma} $, $ \Set{ a, c } \in \configurations{\beta_\gamma} $, and by \defi15 in \cite{dynamicCausality15} and \lem\ref{lma:GESConfigEquivalence}, $ a $ has to be an initial cause of $ c $ in $ \gamma $, \ie $ \enab{a}{c} $.	But then, by \defi15 and \lem\ref{lma:GESConfigEquivalence}, $ \Set{ b, c } \notin \configurations{\gamma} $ although $ \Set{ b, c } \in \configurations{\beta_\gamma} $. This violates our assumption, \ie no GES can be configuration equivalent to $ \beta_\gamma $.
\end{proof}
\begin{lemma}\label{lma:GESninEBES}
	There is no trace-equivalent EBES to $\gamma_\xi$ (\cf \fig\ref{fig:counterExamples}).
\end{lemma}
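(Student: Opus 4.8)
The plan is to mirror the argument used for $\sigma_\xi$ in the proof of Theorem~3 in \cite{dynamicCausality15}: assume a trace-equivalent EBES exists, extract strong structural constraints from the short traces of $\gamma_\xi$, and then exhibit a sequence that is forbidden in $\gamma_\xi$ but that those constraints nevertheless force into the EBES.

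First I would read $\gamma_\xi$ off \fig\ref{fig:counterExamples}: it has no conflicts, $b$ is not an initial cause of $c$, and the occurrence of $a$ adds $b$ as a cause of $c$, i.e.\ $\addcause{a}{b}{c}$. By Definition~15 in \cite{dynamicCausality15} together with \lem\ref{lma:GESConfigEquivalence}, $\traces{\gamma_\xi}$ consists exactly of the sequences of pairwise distinct events over $\Set{a,b,c}$ in which $b$ precedes $c$ whenever $a$ precedes $c$. In particular $c$ may occur first, every ordered pair over $\Set{a,b,c}$ is realised by some trace (\eg $ab$ and $ba$; $bc$ and $cb$; $ca$, and $a$ before $c$ inside $abc$), whereas $ac \notin \traces{\gamma_\xi}$.

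Next, assuming an EBES $\xi = (E,\disaOp,\buEnOp)$ with $\traces{\xi} = \traces{\gamma_\xi}$, I would derive two facts. (i)~$\disaOp \cap \Set{a,b,c}^2 = \emptyset$: by Definition~6 in \cite{dynamicCausality15} a disabling $\disa{x}{y}$ forbids every trace in which $y$ precedes $x$, yet both orders of each pair over $\Set{a,b,c}$ occur in $\traces{\gamma_\xi}$. (ii)~No bundle points to $c$: since the length-one sequence $c$ is a trace, any $\buEn{X}{c}$ would have to be satisfied by the empty prefix, which is impossible. As in the proof of Theorem~3, the events of $E \setminus \Set{a,b,c}$, if any, never occur in a trace, so their bundles and their occurrences in $\disaOp$ do not influence $\traces{\xi}$. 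Combining (i) and (ii): the sequence $ac$ carries no bundle obligation on $c$ and is blocked by no disabling, hence $ac \in \traces{\xi}$ --- contradicting $ac \notin \traces{\gamma_\xi}$. Hence no such EBES exists (and, as in the neighbouring lemmas, neither is there a configuration- or poset-equivalent one).

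The proof is conceptually routine once $\traces{\gamma_\xi}$ is in hand; the points needing care --- exactly as in the $\sigma_\xi$ case of Theorem~3 --- are getting the orientation of $\disaOp$ right so that step~(i) genuinely eliminates all disablings among $\Set{a,b,c}$, and being explicit that impossible events cannot be exploited to forbid $ac$. The main (minor) obstacle is therefore the bookkeeping in computing $\traces{\gamma_\xi}$: one must spot the asymmetry that $ac$ is excluded while $abc$ and $bac$ are not, since this is precisely what places $\gamma_\xi$ outside the reach of every EBES.
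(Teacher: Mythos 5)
Your proposal is correct and follows essentially the same route as the paper: it assumes a trace-equivalent EBES, uses the short traces of $\gamma_\xi$ (in particular $a$, $c$, and traces realizing both orders of each pair) to rule out all relevant disablings and all bundles pointing at the events, and then derives the contradiction $ac \in \traces{\xi}$ while $ac \notin \traces{\gamma_\xi}$. Two cosmetic points: in the paper's notation the growing-causality triple is written $\addcause{b}{a}{c}$ (added cause, modifier, target), not $\addcause{a}{b}{c}$, and you should state explicitly that no bundle points at $a$ either (immediate from the trace $a$, exactly as in your step for $c$), since $ac \in \traces{\xi}$ needs $a$ to be initially enabled as well.
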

\begin{proof}
	Assume a EBES $ \xi = \left( E, \confOp, \buEnOp \right) $ such that $ \traces{\xi} = \traces{\gamma_\xi} $. By \defi15 in \cite{dynamicCausality15}, $ a, c, ca, bac \in \traces{\gamma_\xi} $ and $ ac \notin \traces{\gamma_\xi} $. Because of $ a, c \in \traces{\gamma_\xi} $ and by \defi6 in \cite{dynamicCausality15}, $ a $ and $ c $ have to be initially enabled in $ \xi $, \ie $ \buEnOp \cap \Set{ \buEn{X}{y} \mid y \in \Set{ a, c } } = \emptyset $.	Moreover, because of $ ca, bac \in \traces{\gamma_\xi} $, $ a $ cannot disable $ c $, \ie $ \neg \left( \disa{a}{c} \right) $. But then $ ac \in \traces{\xi} $. This violates our assumption, \ie there is no trace-equivalent EBES to $ \gamma_\xi $.
\end{proof}
Theorem~5 in \cite{dynamicCausality15} states:
\begin{quote}
	GESs are incomparable to BESs and EBESs.
\end{quote}

\begin{proof}[Proof of Theorem~5 in \cite{dynamicCausality15}]
	By \lem\ref{lma:EBESninGES}, there is no GES that is configuration equivalent
	to the BES $ \beta_\gamma $. Thus no GES can have the same families of posets
	as the BES $ \beta_\gamma $, because two BES with different configurations
	cannot have the same families of posets (\cf \S~2.2 in \cite{dynamicCausality15}). Moreover, by \defs3 and 5 in \cite{dynamicCausality15}, each BES is also an EBES. Thus no GES can have the same families of posets as the EBES $ \beta_\gamma $.
	
	By \lem\ref{lma:GESninEBES}, there is no EBES and thus also no BES that is trace-equivalent to the GES $ \gamma_\xi $. By \defi15 in \cite{dynamicCausality15}, two GES with different traces cannot have the same transition graphs. Thus no EBES or BES can be transition-equivalent to $ \gamma_\xi $.
\end{proof}

For the incomparability between GESs and SESs, we study a GES counterexample, such that no SES is trace-equivalent.
\begin{lemma}\label{lma:GESninSES}
There is no trace-equivalent SES to $\gamma_\sigma$ (\cf \fig\ref{fig:counterExamples}).
\end{lemma}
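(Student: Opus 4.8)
The plan is to compute the trace set of $\gamma_\sigma$ and then to show that the only genuinely dynamic behaviour it exhibits --- namely that the occurrence of $b$ creates a fresh causal dependency that blocks $a$ --- cannot be matched by a SES, because shrinking causality can only \emph{remove} dependencies.

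First I would read $\gamma_\sigma$ off \fig\ref{fig:counterExamples}: its events are $\Set{a,b}$, there are no conflicts, no initial causes, and the only growing-causality entry is the self-loop $\addcause{a}{b}{a}$, i.e.\ the occurrence of $b$ installs $a$ as a (never satisfiable) cause of $a$. From \defi15 in \cite{dynamicCausality15} one then obtains $\traces{\gamma_\sigma} = \Set{\epsilon, a, b, ab}$: $a$ and $b$ are initially enabled; after $a$ the event $b$ still has neither an initial nor an added cause, so $ab$ is a trace; but after $b$ we have $a \in \ac{\Set{b}}{a}$, so $a$ cannot be appended and $ba$ is not a trace. The facts I will actually use are $a, b, ab \in \traces{\gamma_\sigma}$ and $ba \notin \traces{\gamma_\sigma}$.

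Next I would argue by contradiction, following the $\xi_\sigma$-part of the proof of Theorem~3 in \cite{dynamicCausality15}. Suppose a SES $\sigma = (E, \confOp, \enabOp, \shrinkingCausality)$ satisfies $\traces{\sigma} = \traces{\gamma_\sigma}$. From the singleton trace $a \in \traces{\sigma}$ and Def. 13 in \cite{dynamicCausality15}, $(\ic{a} \setminus \dc{\emptyset}{a}) \subseteq \emptyset$; since $\dc{\emptyset}{a} = \emptyset$, this forces $\ic{a} = \emptyset$, so $a$ has no initial causes. From $ab \in \traces{\sigma}$, together with the fact that conflicts cannot be dropped in a SES, $\neg(\conf{a}{b})$. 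Since also $b \in \traces{\sigma}$, I now extend the trace $b$ by $a$: by Def. 13 this only requires $\neg(\conf{a}{b})$, which holds, and $(\ic{a} \setminus \dc{\Set{b}}{a}) \subseteq \Set{b}$, which holds trivially because $\ic{a} = \emptyset$. Hence $ba \in \traces{\sigma} = \traces{\gamma_\sigma}$, contradicting $ba \notin \traces{\gamma_\sigma}$; so no SES is trace-equivalent to $\gamma_\sigma$.

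The main obstacle is conceptual rather than computational: one must recognise that the asymmetry between growing and shrinking causality is exactly what is being exploited --- in a SES an event that starts with no initial cause and in no conflict remains enabled after every reachable history, whereas $\gamma_\sigma$ uses $b$ to manufacture a brand-new self-dependency for $a$. Everything else is a routine unfolding of Def. 13 and \defi15. Finally I would remark, as in the neighbouring incomparability proofs, that by \defi15 and Def. 13 two GESs (resp. two SESs) with different trace sets also differ on their configurations and families of posets, so the statement is not an artefact of the trace semantics.
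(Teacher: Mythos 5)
Your proposal is correct and follows essentially the same route as the paper: compute $\traces{\gamma_\sigma} = \Set{\epsilon, a, b, ab}$, then argue that any trace-equivalent SES must have no conflict between $a$ and $b$ (from the trace $ab$) and no initial causes for $a$ (from the trace $a$), so by Def.~13 the trace $b$ can be extended by $a$, giving $ba \in \traces{\sigma}$ --- a contradiction. The only cosmetic difference is that the paper also notes the absence of initial causes for $b$, which your argument does not need.
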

\begin{proof}
	Assume a SES $ \sigma = \left( E, \confOp, \enabOp, \shrinkingCausality \right) $ such that $ \traces{\sigma} = \traces{\gamma_\sigma} $.	By \defi15 in \cite{dynamicCausality15}, $ \traces{\gamma_\sigma} = \Set{ \epsilon, a, b, ab } $.	Because of the trace $ ab \in \traces{\gamma_\sigma} $ and by Def. 13 in \cite{dynamicCausality15}, $ a $ and $ b $ cannot be in conflict, \ie $ \neg (\conf{a}{b} ) $ and $ \neg ( \conf{b}{a} ) $.	Moreover, because of the traces $ a, b \in \traces{\gamma_\sigma} $, there are no initial cases for $ a $ or $ b $, \ie $ \enabOp \cap \Set{ \enab{x}{y} \mid y \in \Set{ a, b } } = \emptyset $.	Thus, by Def. 13, $ ba \in \traces{\sigma} $ but $ ba \notin \traces{\gamma_\sigma} $.	This violates our assumption, \ie no SES can be trace equivalent to $	\gamma_\sigma $.
\end{proof}
Theorem~6 in \cite{dynamicCausality15} states:
\begin{quote}
	GESs and SESs are incomparable.
\end{quote}
\begin{proof}[Theorem~6 in \cite{dynamicCausality15}]
	By \lem\ref{lma:GESninSES}, no SES is trace-equivalent to the GES $ \gamma_\sigma $. By \defi15 in \cite{dynamicCausality15}, two GES with different traces cannot have the same transition graphs. Thus no SES is transition-equivalent to the GES $ \gamma_\sigma $.
	
	By \cite{Langerak:Thesis}, BESs are less expressive than EBESs and by \cite{Langerak97causalambiguity}, BESs are less expressive than DESs.	By \theo5 in \cite{dynamicCausality15}, BESs and GESs are incomparable an by \theo1 in \cite{dynamicCausality15} DESs are as expressive as SESs.	Thus GESs and SESs are incomparable.
\end{proof}
To show that GESs are strictly less expressive than RCESs, we give a translation
for one direction and a counterexample for the other.
\begin{lemma}\label{lma:GESinRCES}
	For each GES $ \gamma $ there is an RCES $ \rho $, such that $ \transEq{\gamma }{ \rho} $.
\end{lemma}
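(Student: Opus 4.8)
The plan is to follow the template of \lem\ref{lma:SESinRCES}: verify that the transition relation $\transG{\cdot}{\cdot}$ of an arbitrary GES $\gamma = \left( E, \confOp, \enabOp, \growingCausality \right)$ meets the two requirements of \defi\ref{def:translationIntoRCES}, and then read off the claim from \lem\ref{lma:TransInRCES}, taking $\rho = \rces{\gamma}$. The first requirement, that $\transG{X}{Y}$ implies $X \subseteq Y$, is immediate from \defi15 in \cite{dynamicCausality15} (it is the first clause of the step relation, \cf the single-step form of \eqref{eq:D1}).

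For the second requirement I would fix $X \subseteq X' \subseteq Y' \subseteq Y$ with $\transG{X}{Y}$ and check each clause of \defi15 for $\transG{X'}{Y'}$. Since $Y' \subseteq Y$, absence of conflict inside $Y'$ is inherited from $Y$. The simultaneous-adding clause (the single-step form of \eqref{eq:D5}) for $\transG{X'}{Y'}$ holds because $Y' \setminus X' \subseteq Y \setminus X$, so any $m, t \in Y' \setminus X'$ with $\addcause{c}{m}{t}$ already satisfy $c \in X \subseteq X'$ by the corresponding clause of $\transG{X}{Y}$. It remains to show $\left( \ic{e} \cup \ac{X'}{e} \right) \subseteq X'$ for every $e \in Y' \setminus X'$; note $e \in Y' \setminus X' \subseteq Y \setminus X$, so the corresponding clause of $\transG{X}{Y}$ already gives $\ic{e} \subseteq X \subseteq X'$ and $\ac{X}{e} \subseteq X \subseteq X'$.

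The crux --- and the step I expect to be the main obstacle --- is that, in contrast to the dropped-causes operator $\dc{\cdot}{e}$ exploited in \lem\ref{lma:SESinRCES}, the added-causes operator $\ac{\cdot}{e}$ is monotone in the unfavourable direction: enlarging $X$ to $X'$ can only introduce further active causes, so $\ac{X}{e} \subseteq X$ does \emph{not} by itself yield $\ac{X'}{e} \subseteq X'$. The simultaneous-adding clause of the GES transition relation is precisely what closes this gap: every $m \in X' \setminus X$ lies in $Y \setminus X$, and so does $e$, hence $\addcause{c}{m}{e}$ forces $c \in X \subseteq X'$. Consequently $\ac{X'}{e} = \ac{X}{e} \cup \Set{ c \mid \exists m \in X' \setminus X \logdot \addcause{c}{m}{e} } \subseteq X'$, and all clauses of $\transG{X'}{Y'}$ are verified. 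With both conditions of \defi\ref{def:translationIntoRCES} in place, \lem\ref{lma:TransInRCES} immediately delivers that $\rho = \rces{\gamma}$ is an RCES with $\transEq{\gamma}{\rho}$.
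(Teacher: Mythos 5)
Your proposal is correct and takes essentially the same route as the paper: both verify the two conditions of \defi\ref{def:translationIntoRCES} for the GES transition relation (inheriting conflict-freeness and the simultaneous-adding clause from $\transG{X}{Y}$, and using that clause to control the added causes of events in $Y' \setminus X'$) and then conclude via \lem\ref{lma:TransInRCES} with $\rho = \rces{\gamma}$. If anything, your treatment of the crucial step is slightly more careful than the paper's, which asserts $\ac{X}{e} = \ac{X'}{e}$, whereas in fact the causes newly contributed by adders in $X' \setminus X$ need only lie in $X \subseteq X'$ --- exactly the containment your decomposition establishes, and all that is needed for $\left( \ic{e} \cup \ac{X'}{e} \right) \subseteq X'$.
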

\begin{proof}
	Let $ \gamma = \left( E, \confOp, \enabOp, \growingCausality \right) $.	By \defi15 in \cite{dynamicCausality15}, $ \transG{X}{Y} $ implies $ X \subseteq Y $.
	
	Assume $ X \subseteq X' \subseteq Y' \subseteq Y $ and $ \transG{X}{Y} $.
	By \defi15, then we have $ \forall e, e' \in Y' \logdot \neg \left( \conf{e}{e'} \right) $,	$ \forall e \in \left( Y' \setminus X' \right) \logdot \left( \ic{e} \cup \ac{X}{e} \right) \subseteq X $, and $ \forall t, m \in Y\setminus X\logdot\forall c\in E \logdot \addcause{c}{m}{t} \implies c\in X $.	Moreover, because $ \forall t, m \in Y\setminus X\logdot\forall c\in E \logdot \addcause{c}{m}{t} \implies c\in X $, $ \ac{X}{e} = \ac{X'}{e} $ for all $ e \in Y' \setminus X' $.	Hence $ \forall e \in \left( Y' \setminus X' \right) \logdot \left( \ic{e} \cup \ac{X'}{e} \right) \subseteq X' $ and $ \forall t, m \in Y'\setminus X'\logdot\forall c\in E \logdot \addcause{c}{m}{t} \implies c\in X' $.	Thus, by \defi15, $ \transG{X'}{Y'} $.
	
	By \lem\ref{lma:TransInRCES}, $\rho=\rces{\gamma} $ is an RCES and $ \transEq{\gamma}{\rho}$.
\end{proof}
\begin{lemma}\label{lma:GESinRCESstrictly}
	There is no transition-equivalent GES to $\rho_{\gamma}$ (\cf \fig3 in \cite{dynamicCausality15}).
\end{lemma}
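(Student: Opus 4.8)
The plan is to argue exactly as in \lem\ref{lma:SESinRCESstrictly} and \lem\ref{lma:EBESninGES}. Assume towards a contradiction a GES $ \gamma = \left( E, \confOp', \enabOp, \growingCausality \right) $ with $ \transEq{\gamma}{\rho_{\gamma}} $; then in particular $ \configurations{\gamma} = \configurations{\rho_{\gamma}} $. First I would read off the configurations of $ \rho_{\gamma} $ from \fig3 in \cite{dynamicCausality15} and, with the help of \lem\ref{lma:GESConfigEquivalence} and \defi15 in \cite{dynamicCausality15}, turn them into structural constraints on $ \gamma $: any two events occurring together in some configuration are non-conflicting in $ \gamma $, because conflicts of a GES can never be dropped, which pins down $ \confOp' $ on the events of $ \rho_{\gamma} $; and, since an added cause only takes effect after its trigger has occurred and is never removed again, every initial cause of an event $ e $ lies in the intersection of all configurations of $ \rho_{\gamma} $ that contain $ e $ --- in particular, an event appearing in a singleton configuration of $ \rho_{\gamma} $ has no initial cause at all.

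Next I would push these constraints through the conditions (D1)--(D5) of \lem\ref{lma:GESConfigEquivalence} for the one transition $ X \to Y $ (or the one configuration) that the transition graph of $ \rho_{\gamma} $ lacks but that the constraints force to be present in $ \gamma $, thereby contradicting $ \transEq{\gamma}{\rho_{\gamma}} $. Concretely, I expect that $ \rho_{\gamma} $ realises a ``disjunctive'' or a ``re-enabling'' pattern --- some event $ e $ being reachable after either of two mutually exclusive predecessors yet not directly, or getting re-enabled after having been blocked --- so that the constraints above leave $ \gamma $ no choice: $ \ic{e} $ must be empty and no member of $ \growingCausality $ can supply a new cause of $ e $ before $ e $ itself occurs, whence $ \transG{\emptyset}{\Set{ e }} $ holds in $ \gamma $ while $ \Set{ e } \notin \configurations{\rho_{\gamma}} $. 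The lemma then follows, and together with \lem\ref{lma:GESinRCES} and \lem\ref{lma:TransInRCES} it yields that GESs are strictly less expressive than RCESs.

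The main obstacle --- and the reason \fig3 in \cite{dynamicCausality15} cannot simply reuse the two-event RCES $ \rho_{\sigma} $ of \lem\ref{lma:SESinRCESstrictly} --- is that a GES, unlike an SES, \emph{can} block one branch of a resolvable conflict: when the ``wrong'' event occurs first it may add a cause for the blocked event that is satisfied only along the other branch, and condition (D5) on simultaneously added events keeps the two events from being taken together. Hence the argument must rely on a genuine limitation of growing causality, namely that it is purely conjunctive and monotone --- it can only ever \emph{add} causes, so it can neither imitate disjunctive enabling (``$ e $ needs $ a $ \emph{or} $ b $'') nor re-enable an event once its acquired causes are all in place. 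Making this precise, in particular checking that none of the GES escape routes --- and especially not the (D5)-trick that rescues $ \gamma $ for $ \rho_{\sigma} $ --- applies to the specific RCES of \fig3 in \cite{dynamicCausality15}, is the delicate part and calls for a careful case analysis driven by the exact enabling relation drawn there.
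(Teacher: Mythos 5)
Your overall strategy---assume a transition-equivalent GES, extract structural constraints from the configurations (conflict-freeness of jointly occurring events, absence of initial causes for events in singleton configurations), and then exhibit a transition that Def.~15 forces in the GES but that $\rho_\gamma$ lacks---is exactly the approach of the paper's proof. However, the concrete contradiction you predict does not fit the actual $\rho_\gamma$ of Fig.~3, and the step you defer is precisely where the proof lives. In $\rho_\gamma$ the events are $a,b,c$, and \emph{all} singletons as well as $\Set{a,c}$, $\Set{b,c}$, $\Set{a,b}$ and $\Set{a,b,c}$ are configurations; what is missing is the transition $\transRC{\Set{a,b}}{\Set{a,b,c}}$, i.e.\ $c$ is enabled initially and stays enabled after $a$ alone or $b$ alone, but is disabled once both $a$ and $b$ have occurred. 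It is not a disjunctive or re-enabling pattern: since $\Set{c}\in\configurations{\rho_\gamma}$, the forced $\ic{c}=\emptyset$ yields $\transG{\emptyset}{\Set{c}}$ in the candidate GES \emph{and} the corresponding transition in $\rho_\gamma$, so your proposed clash ``$\transG{\emptyset}{\Set{e}}$ while $\Set{e}\notin\configurations{\rho_\gamma}$'' can never arise here---every singleton is a configuration.

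The missing idea is the third constraint the paper derives: because $\Set{a,c}$ and $\Set{b,c}$ are configurations (reached with $c$ occurring after $a$ alone resp.\ after $b$ alone), neither $a$ nor $b$ can add any cause other than itself to $c$, i.e.\ $\addcause{e}{a}{c}\implies e=a$ and $\addcause{e}{b}{c}\implies e=b$, whence $\ac{\Set{a,b}}{c}=\emptyset$. Combined with $\ic{c}=\emptyset$ and conflict-freeness of $\Set{a,b,c}$, Def.~15 then forces $\transG{\Set{a,b}}{\Set{a,b,c}}$ in the GES, contradicting $\neg\left(\transRC{\Set{a,b}}{\Set{a,b,c}}\right)$; this is the genuine limitation being exploited (growing causality cannot express that a \emph{combination} of events disables a third one), not the impossibility of disjunctive enabling, which is the role of the other counterexamples ($\beta_\gamma$, $\gamma_\sigma$). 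Since you explicitly leave this ``delicate case analysis'' open and your guessed shape of $\rho_\gamma$ would route it toward a contradiction that does not exist, the proposal is a proof plan rather than a proof. (Your side claim that the two-event RCES $\rho_\sigma$ could be matched by a GES via an added cause and condition (D5) is neither needed nor substantiated, so nothing in the lemma should rest on it.)
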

\begin{proof}
	Assume a GES $ \gamma = \left( E, \confOp, \enabOp, \growingCausality \right) $ such that $ \transEq {\gamma}{\rho_{\gamma}} $. Then $ \configurations{\gamma} = \configurations{\rho_{\gamma}} $.	By \defi15 in \cite{dynamicCausality15} and because of the configuration $ \Set{ a, b, c } \in \configurations{\rho_{\gamma}} $, the events $ a $, $ b $, and $ c $ cannot be in conflict with each other, \ie $ \confOp \cap \Set{ a,	b, c }^2 = \emptyset $.	Moreover, because of the configurations $ \Set{ a }, \Set{ b }, \Set{ c } \in \configurations{\rho_{\gamma}} $, there are no initial causes for $ a $, $ b $, or $ c $, \ie $ \enabOp \cap \Set{ \enab{x}{y} \mid y \in \Set{ a, b, c } } = \emptyset $. Finally, because of the configurations $ \Set{ a, c }, \Set{ b, c } \in \configurations{\rho_{\gamma}} $, neither $ a $ nor $ b $ can add a cause (except of themselves) to $ c $, \ie $ \addcause{e}{a}{c} \implies e = a $ and $ \addcause{e}{b}{c} \implies e = b $ for all $ e \in E $.	Thus we have $ \forall e, e' \in \Set{ a, b, c } \logdot \neg \left( \conf{e}{e'} \right) $ and $ \left( \ic{c} \cup \ac{\Set{ a, b }}{c} \right) = \emptyset \subseteq \Set{ a, b } $.	But then, by \defi15, $ \transG{\Set{ a, b }}{\Set{ a, b, c }} $. Since $ \neg \left( \transRC{\Set{ a, b }}{\Set{ a, b, c }} \right) $, this violates our assumption, \ie there is no GES that is transition equivalent to $ \rho_{\gamma} $.
\end{proof}
Theorem~7 in \cite{dynamicCausality15} states:
\begin{quote}
	GESs are strictly less expressive than RCESs.
\end{quote}
\begin{proof}[Proof of Theorem~7 in \cite{dynamicCausality15}]
	By \lems\ref{lma:GESinRCES} and \ref{lma:GESinRCESstrictly}.
\end{proof}
\section{Dynamic Causality}
In order to justify our approach of state transition equivalence, we need a notion of (configuration) transition equivalence, and show that the new equivalence is needed.
\begin{definition}
\label{def:DCESConfigs}
Let $\Delta$ be a DCES. The set of its (reachable) configurations is
$\configurations{\Delta}=\pi_1(\reachables{\Delta})$; the projection on the first component of the states.
\end{definition}
Lemma~1 in \cite{dynamicCausality15} states:
\begin{quote}
	There are DCESs that are transition equivalent but not state transition equivalent.
\end{quote}
\begin{proof}[Proof of Lemma~1 in \cite{dynamicCausality15}]
We consider two DCESs
	$\Delta=(\Set{a,b,c,d},\emptyset,\emptyset,\\ \Set{\drops{c}{b}{d}}, \Set{\addcause{c}{a}{d}})$ and
	$\Delta'=(\Set{a,b,c,d},\emptyset,\emptyset,\emptyset,\emptyset)$.
In $\Delta$ there is a transition $\transDC{(\emptyset,\csiD)}{(\Set{a},\csD')}$
by \defi18 in \cite{dynamicCausality15}. Initially the causality function is the
constant empty set function (\cf definition of $\csiD$) in
\defi17 in \cite{dynamicCausality15}). After $a$ occurs
$\cs[']{d}=\Set{c}$ is updated according to
\defi18 Condition~6. Next we have
$\transDC{(\Set{a},\csD')}{(\Set{a,b},\csD'')}$, where 
$\cs['']{d}=\emptyset$ according to
Condition~4. Now there is a possible transition to
$(\Set{a,b,d},\csD''')$. But if we
proceed from $(\emptyset,\csiD)$ to $(\Set{b},\csD')$ with
$\cs[']{d}=\emptyset$ according to
Condition~4 and then to $(\Set{a,b},\csD'')$ with
$\cs['']{d}=\Set{c}$ according to Condition~6 there is no transition to $(\{a,b,d\},\csD''')$, because $c$ needs $d$ according to Condition~3.
In  $\Delta'$ both sequences of state transitions are possible by \defi18. Thus $\Delta$ and $\Delta'$ are not state transition equivalent.

On the other hand, if we only consider the configurations of $\Delta$ and $\Delta'$ saying there is a transition from $C$ to $C'$ whenever $ \transDC{(C,\csD)}{(C',\csD')} $, then $\Delta$ and $\Delta'$ are transition equivalent.
\end{proof}

To compare DCESs to other ESs we define the Single State Dynamic Causality ESs
(SSDCs) as a subclass of DCESs.

\begin{definition}
\label{def:SingleStateDC}
Let SSDC be a subclass of DCESs such that $\varrho$ is a SSDC iff
$\forall e, e' \in E \logdot \nexists a,d \in E
		\logdot \addsDrops{a}{e'}{e}{d}$.
\end{definition}
Since there are no adders and droppers for the same causal dependency, the order
of modifiers does not matter and thus there are no two different states sharing
the same configuration, \ie each configuration represents a state. Thus it is
enough for SSDC to consider transition equivalence with respect to
configurations, \ie $\transEqOp$.

\begin{lemma}
\label{lma:SingleCausalState}
Let $\varrho $ be a SSDC. Then for the causal-state function $cs$ of any
state $(C, cs) \in \reachables{\varrho}$ it holds $cs(e) = (\ic{e} \cup
\ac{C}{e}) \setminus (\dc{C}{e}\cup C)$.
\end{lemma}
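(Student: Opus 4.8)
The plan is to prove the identity by induction on the length of a transition sequence witnessing $(C,cs) \in \reachables{\varrho}$, i.e.\ a $\transDC{}{}$-path from the initial state of $\varrho$ to $(C,cs)$. In the base case $(C,cs)$ is the initial state, so $C = \emptyset$ and, by \defi17 in \cite{dynamicCausality15}, $cs(e) = \ic{e}$ for every $e \in E$. Since $\ac{H}{e}$ and $\dc{H}{e}$ collect only the sources of modifiers whose trigger lies in $H$, both $\ac{\emptyset}{e}$ and $\dc{\emptyset}{e}$ are empty, hence $(\ic{e}\cup\ac{\emptyset}{e})\setminus(\dc{\emptyset}{e}\cup\emptyset) = \ic{e} = cs(e)$, as required.

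For the inductive step consider a transition $\transDC{(C,cs)}{(C',cs')}$, where by \defi18 in \cite{dynamicCausality15} we have $C' = C\cup\Set{a}$ for the occurring event $a\notin C$, and assume $cs(e) = (\ic{e}\cup\ac{C}{e})\setminus(\dc{C}{e}\cup C)$ for all $e$. First I would unfold Conditions~4 and~6 of \defi18, which describe how the causal state is updated: $cs'$ is obtained from $cs$ by, for each $e$, inserting every source $c$ with $\addcause{c}{a}{e}$ that does not already lie in $C'$, deleting every source $c$ with $\drops{c}{a}{e}$, and removing the occurred event $a$ from every causal state. Using the immediate additivity $\ac{C'}{e} = \ac{C}{e}\cup\ac{\Set{a}}{e}$ and $\dc{C'}{e} = \dc{C}{e}\cup\dc{\Set{a}}{e}$ and substituting the induction hypothesis, the goal $cs'(e) = (\ic{e}\cup\ac{C'}{e})\setminus(\dc{C'}{e}\cup C')$ reduces to a purely set-theoretic identity between $\ic{e}$, $\ac{C}{e}$, $\dc{C}{e}$, $\ac{\Set{a}}{e}$, $\dc{\Set{a}}{e}$, $C$ and $a$.

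The one place where the SSDC restriction is used is in discharging that identity, and this is the heart of the proof. From $\forall e,e'\logdot\nexists a,d\logdot\addsDrops{a}{e'}{e}{d}$ — no causal dependency $e'\to e$ is ever both added and dropped — I would derive the disjointnesses $\ac{\Set{a}}{e}\cap\dc{C}{e} = \emptyset$, $\ac{C}{e}\cap\dc{\Set{a}}{e} = \emptyset$ and $\ac{\Set{a}}{e}\cap\dc{\Set{a}}{e} = \emptyset$: for instance a $c$ in the first intersection would satisfy $\addcause{c}{a}{e}$ and $\drops{c}{d}{e}$ for some $d\in C$, i.e.\ $\addsDrops{a}{c}{e}{d}$, contradicting the SSDC property. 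These disjointnesses let the insertion of $\ac{\Set{a}}{e}$ and the deletion of $\dc{\Set{a}}{e}$ commute with the set difference coming from the induction hypothesis, collapsing the updated set to $(\ic{e}\cup\ac{C'}{e})\setminus(\dc{C'}{e}\cup C)$; the remaining removal of $a$, together with the fact that Condition~6 never (re-)inserts a source already in $C'$, upgrades $\dc{C'}{e}\cup C$ to $\dc{C'}{e}\cup C'$ and closes the induction.

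I expect the main obstacle to be exactly this bookkeeping around events that have already occurred: one must verify that the $\setminus C'$ on the right-hand side is simultaneously accounted for by the explicit removal of $a$ and by the side condition of Condition~6, and, dually, that without the SSDC hypothesis the collapse genuinely fails. The latter is in line with Lemma~1 in \cite{dynamicCausality15}, where the configuration $\Set{a,b}$ is reached along two branches with two \emph{different} causal-state functions, so that no formula depending on the configuration alone can describe $cs$ — the SSDC condition is precisely what rules such order-sensitive behaviour out.
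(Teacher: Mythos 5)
Your overall strategy coincides with the paper's: induction along a transition sequence witnessing $(C,\csD)\in\reachables{\varrho}$, the base case directly from the definition of the initial causal state, and an inductive step that matches the update of $\csD$ prescribed by Def.~18 against the change of $\ac{\cdot}{e}$ and $\dc{\cdot}{e}$, with the SSDC condition supplying the disjointness of added and dropped causes. The paper carries out the step element-wise, by a fourfold case distinction on $e'\in\cs{e}$ versus $e'\in\cs[']{e}$ using Conditions~4--7; your set-algebraic computation is the same argument in aggregated form.

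There is, however, one concrete flaw as written: Def.~18 does \emph{not} give $C'=C\cup\Set{a}$ for a single occurring event. DCES transitions may add an arbitrary set of events in one step -- Condition~9 quantifies over two events $t,m\in C'\setminus C$, and in \lem\ref{lma:SESinDCES} whole SES/GES transitions (which add sets of events) are mapped to single DCES transitions. Hence your induction covers only states reachable by single-event steps, which is not all of $\reachables{\varrho}$; and you cannot silently decompose a multi-event step into single steps, since that decomposition is essentially \lem\ref{lma:SSDCConfInTrans}, which is proved \emph{after} and by means of the present lemma, so appealing to it would be circular. The repair is routine: run your argument with $C'\setminus C$ in place of $\Set{a}$, using $\ac{C'}{e}=\ac{C}{e}\cup\ac{C'\setminus C}{e}$ and $\dc{C'}{e}=\dc{C}{e}\cup\dc{C'\setminus C}{e}$; the SSDC disjointness reasoning is unchanged. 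A smaller point: to assert that $\cs[']{e}$ is obtained from $\cs{e}$ by \emph{exactly} the listed insertions and deletions you must also invoke the converse directions, Conditions~5 and~7 (the paper uses Condition~5 and the contraposition of Condition~7); citing only Conditions~4 and~6 leaves open that $\cs[']{e}$ changes in some further, unaccounted way.
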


\begin{proof}
If $C=\emptyset$ the equation follows directly from the definitions of $\csD$, $\icD$, $\acD$, and $\dcD$.

Assume $\transDC{(C,\csD)}{(C',\csD')}$. By induction, we have $cs(e) = (\ic{e} \cup\ac{C}{e}) \setminus (\dc{C}{e}\cup C)$.
We prove for each $e\in E\setminus C'$ by a doubled case distinction $\cs[']{e}=(\ic{e} \cup\ac{C'}{e}) \setminus (\dc{C'}{e}\cup C)$. Let us first assume $e'\notin \cs{e}$ but $e'\in \cs[']{e}$, then by Condition~6 in \cite{dynamicCausality15} we have $\exists a\in C'\setminus C\logdot \addcause{e'}{a}{e}$ and since $\ac{C'}{e}=\Set{ e' \mid \exists a \in C'\logdot \addcause{e'}{a}{e} \land a \notin \Set{ e, e' } }$ we have $e'\in \ac{C'}{e}$, because $\varrho$ is a SSDC it follows $e'\notin\dc{C'}{e}$ and because $e\in E\setminus C'$ it follows $e\notin C$. Then in this case $e'\in(\ic{e} \cup\ac{C'}{e}) \setminus (\dc{C'}{e}\cup C)$ holds. Let now still $e'\notin \cs{e}$ but $e'\notin \cs[']{e}$, then we have by contra-position of Condition~7 we have $\nexists a\in C'\setminus C\logdot \addcause{e'}{a}{e}$, and so  $e'\notin(\ic{e} \cup\ac{C'}{e}) \setminus (\dc{C'}{e}\cup C)$. Let us now consider the case $e'\in\cs{e}$ and here first $e'\in\cs[']{e}$. Then by Condition~5 it follows $\nexists d\in C'\setminus C\logdot \drops{e'}{d}{e}$. Then $e'\notin\dc{C'}{e}$ and because $e\in E\setminus C'$ it follows $e\notin C$ and so $e'\notin(\ic{e} \cup\ac{C'}{e}) \setminus (\dc{C'}{e}\cup C)$. In the last case we consider $e'\in\cs{e}$ and  $e'\notin\cs[']{e}$. By Condition~4 we have $\exists d\in C'\setminus C\logdot \drops{e'}{d}{e}$ and so $e'\in\dc{C'}{e}$. Thus $e'\notin(\ic{e} \cup\ac{C'}{e}) \setminus (\dc{C'}{e}\cup C)$. So in each case $\cs[']{e}=(\ic{e} \cup\ac{C'}{e}) \setminus (\dc{C'}{e}\cup C)$ holds.
\end{proof}

In SSDC Conditions~4, 5, 6, and 7 in \cite{dynamicCausality15} hold whenever $ C \subseteq C' $.

\begin{lemma}
\label{lma:SSDCstateProp}
Let $\rho $ be a SSDC and let $(C,\csD)$ and $(C',\csD')$ be two states of $\rho$ with $C\subseteq C'$, then Conditions~4, 5, 6, and 7 in \cite{dynamicCausality15} of $\transDCOp$ hold for those two states.
\end{lemma}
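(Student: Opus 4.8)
The plan is to reduce the four conditions to pure set algebra via the closed form for the causal state given by \lem\ref{lma:SingleCausalState}. Since $\rho$ is a SSDC and $(C,\csD),(C',\csD')$ are states of $\rho$, that lemma yields $\cs{e}=(\ic{e}\cup\ac{C}{e})\setminus(\dc{C}{e}\cup C)$ and $\cs[']{e}=(\ic{e}\cup\ac{C'}{e})\setminus(\dc{C'}{e}\cup C')$ for every $e\in E$. Conditions~4--7 only refer to $\csD,\csD',C,C'$ and the modifier relations, so after substituting these expressions they become statements about $C$ and $C'$ alone, with no remaining dependence on how $(C',\csD')$ was reached from $(C,\csD)$ --- exactly the path-dependence that fails for general DCESs, \cf Lemma~1 in \cite{dynamicCausality15}.

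I would first record two auxiliary facts. \emph{Monotonicity:} from $C\subseteq C'$ and the definitions of $\acD$ and $\dcD$ as existential quantifications over their set argument, $\ac{C}{e}\subseteq\ac{C'}{e}$ and $\dc{C}{e}\subseteq\dc{C'}{e}$, and every witness for $e'\in\ac{C'}{e}\setminus\ac{C}{e}$ (resp.\ $e'\in\dc{C'}{e}\setminus\dc{C}{e}$) is an adder (resp.\ dropper) of $e'\to e$ lying in $C'\setminus C$. \emph{SSDC disjointness:} by \defi\ref{def:SingleStateDC} no pair $e',e$ admits both an adder and a dropper of $e'\to e$, so in passing from $C$ to $C'$ such a dependency can only be added, or only be dropped, never both.

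Then I would discharge Conditions~4--7 one at a time by unfolding the two closed forms. Conditions~4 and 6 are the ``a modifier must have fired'' directions: e.g.\ for Condition~4, $e'\in\cs{e}$ forces $e'\notin\dc{C}{e}$, while $e'\notin\cs[']{e}$ together with monotonicity (ruling out the $\ic{e}\cup\ac{\cdot}{e}$ part) and the domain restriction $e,e'\notin C'$ of the condition (ruling out $e'\in C'$) forces $e'\in\dc{C'}{e}$, hence some dropper of $e'\to e$ lies in $C'\setminus C$. Condition~6 is the mirror image, with $\acD$ for $\dcD$ and the memberships in $\csD,\csD'$ interchanged. Conditions~5 and 7 are the dual ``no modifier fired'' directions: e.g.\ for Condition~5, if $e'\in\cs{e}$ and no dropper of $e'\to e$ lies in $C'\setminus C$, then $e'\in\ic{e}\cup\ac{C}{e}\subseteq\ic{e}\cup\ac{C'}{e}$, $e'\notin\dc{C'}{e}$ (a witnessing dropper would be new, contradicting the hypothesis), and $e'\notin C'$, so $e'\in\cs[']{e}$; Condition~7 is the analogous statement for adders. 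The SSDC disjointness property is invoked precisely in the cases where otherwise the same dependency could be simultaneously created and destroyed by $C'\setminus C$.

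The main obstacle I anticipate is the bookkeeping around $C'\setminus C$: reconciling the $\setminus C$ in the closed form for $\csD$ with the $\setminus C'$ in the one for $\csD'$, and checking that this matches the set of events over which Conditions~4--7 are quantified. Everything else is routine set reasoning from the definitions of $\icD$, $\acD$, and $\dcD$.
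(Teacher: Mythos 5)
Your overall strategy --- substituting the closed form $\cs{e}=(\ic{e}\cup\ac{C}{e})\setminus(\dc{C}{e}\cup C)$ from \lem\ref{lma:SingleCausalState} into the conditions and using the monotonicity $\ac{C}{e}\subseteq\ac{C'}{e}$ and $\dc{C}{e}\subseteq\dc{C'}{e}$ --- is exactly the paper's, and your treatment of Conditions~4 and~6 is correct and matches the paper's argument. The gap lies in Conditions~5 and~7: you have mis-stated what they assert. In Def.~18 of \cite{dynamicCausality15} these are not ``no modifier fired, hence the causal state is preserved''; they are the converse implications of~4 and~6: Condition~5 says that if some dropper of the dependency of $e$ on $e'$ lies in $C'\setminus C$ (for $e,e'\in E\setminus C'$) then $e'\notin\cs[']{e}$, and Condition~7 says that if some adder lies in $C'\setminus C$ then $e'\in\cs[']{e}$. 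This reading is forced by how the conditions are used in the proof of \lem\ref{lma:SingleCausalState}, and by the paper's own proof of the present lemma, which derives Condition~5 from $\droppers{e'}{e}\cap(C'\setminus C)\neq\emptyset$. What you prove under the labels~5 and~7 --- ``$e'\in\cs{e}$ and no new dropper implies $e'\in\cs[']{e}$'' and its adder analogue --- are merely the contrapositives of Conditions~4 and~6, so two of the four claims of the lemma are not established by your sketch.

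The repair is short with the tools you already set up, but note where the SSDC hypothesis really enters. Condition~5 is immediate from the closed form: a dropper in $C'\setminus C$ gives $e'\in\dc{C'}{e}$, hence $e'\notin\cs[']{e}$. Condition~7 is the one place where the disjointness of \defi\ref{def:SingleStateDC} is genuinely needed: an adder $a\in C'\setminus C$ with $\addcause{e'}{a}{e}$ gives $e'\in\ac{C'}{e}$; the SSDC property excludes any dropper for that pair, so $e'\notin\dc{C'}{e}$; and the domain restriction $e'\notin C'$ then yields $e'\in\cs[']{e}$. Without SSDC an event in $C'\setminus C$ could simultaneously drop the same dependency and Condition~7 would fail for the closed-form states --- exactly the path-dependence you mention. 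In your write-up, however, the disjointness ends up attached to statements (your versions of~5 and~7) that do not need it, while the statement that does need it is never proved.
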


\begin{proof}
Let $e'\in E\setminus C'$ with $e'\in\cs{e}\setminus\cs[']{e}$.\\
Since $\ac{C}{e}=\Set{ e' \mid \exists a \in C\logdot \addcause{e'}{a}{e} \land a \notin \Set{ e, e' } }$, $\dc{C}{e} = \{ e' \mid \exists d \in C \logdot \drops{e'}{d}{e} \}$ and $ C \subseteq C' $, we have $\ac{C}{e}\subseteq\ac{C'}{e}$ and $\dc{C}{e}\subseteq\dc{C'}{e}$ and by the previous \lem\ref{lma:SingleCausalState} we have $e'\in ((\ic{e} \cup\ac{C}{e}) \setminus (\dc{C}{e}\cup C))\setminus((\ic{e} \cup\ac{C'}{e}) \setminus (\dc{C'}{e}\cup C'))$, so $e'\in((\ic{e} \cup\ac{C}{e}) \setminus (\dc{C}{e}\cup C))$ and $e'\notin((\ic{e} \cup\ac{C'}{e}) \setminus (\dc{C'}{e}\cup C'))$. Then $e'\in\ic{e}\cup\ac{C}{e}$ and so $e'\in\ic{e}\cup\ac{C'}{e}$, thus $e'\in(\dc{C'}{e}\cup C')$ and so $e'\in\dc{C'}{e}$, but $e'\notin\dc{C}{e}$, which yields $\droppers{e'}{e}\cap(C'\setminus C)\neq\emptyset$, so Condition~4 in \cite{dynamicCausality15} holds. Let now $e,e'\in E\setminus C'$ with $\droppers{e'}{e}\cap(C'\setminus C)\neq\emptyset$, then $e'\in\dc{C'}{e}$ so $e'\notin\cs[']{e}$ follows, which is exactly 5. 

Conditions~6 and 7 are proven similarly.
\end{proof}

\begin{lemma}
\label{lma:SSDCConfInTrans}
Let $\rho$ be a SSDC and $\transDC{(X,\csD_X)}{(Y,\csD_Y)}$ a transition in $\rho$. Then for all $X',Y'$ with $X\subseteq X'\subseteq Y'\subseteq Y$, there is a transition $\transDC{(X',\csD_X')}{(Y',\csD_Y')}$ in $\rho$, where $\csD_{X'}(e)=((\ic{e} \cup\ac{X'}{e}) \setminus (\dc{X'}{e}\cup X'))$ and $\csD_{Y'}(e) = ((\ic{e} \cup\ac{Y'}{e}) \setminus (\dc{Y'}{e}\cup Y'))$.
\end{lemma}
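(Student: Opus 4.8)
The plan is to check, one by one, that the pair $(X',\csD_{X'})$, $(Y',\csD_{Y'})$ satisfies every condition of \defi18 in \cite{dynamicCausality15}, using the given transition $\transDC{(X,\csD_X)}{(Y,\csD_Y)}$ together with the structural lemmas above. First I would note that $(X,\csD_X)$, and hence also $(Y,\csD_Y)$, are reachable states, so by \lem\ref{lma:SingleCausalState} we have $\csD_X(e)=(\ic{e}\cup\ac{X}{e})\setminus(\dc{X}{e}\cup X)$ and $\csD_Y(e)=(\ic{e}\cup\ac{Y}{e})\setminus(\dc{Y}{e}\cup Y)$. Hence the functions $\csD_{X'}$ and $\csD_{Y'}$ fixed in the statement are exactly the canonical causal-state functions belonging to $X'$ and $Y'$; this is what makes the claimed transition well-posed and what makes \lem\ref{lma:SSDCstateProp} applicable to it.

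The routine conditions are handled directly: $X'\subseteq Y'$ is an assumption; the absence of conflicts among the events of $Y'$ follows from $Y'\subseteq Y$ together with the corresponding condition for $\transDC{(X,\csD_X)}{(Y,\csD_Y)}$; and the causal-update Conditions~4, 5, 6, and 7 of $\transDCOp$ hold automatically by \lem\ref{lma:SSDCstateProp}, since $X'\subseteq Y'$ and $\csD_{X'},\csD_{Y'}$ have the canonical shape.

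The heart of the argument is the enabledness condition: for every $e\in Y'\setminus X'$ one needs $\csD_{X'}(e)\subseteq X'$, together with the restriction --- the analogue of \eqref{eq:D5} for GESs --- that events added within the step do not grow causes on one another. Since $\csD_{X'}(e)$ is by construction disjoint from $X'$, the first part says $\csD_{X'}(e)=\emptyset$, \ie $\ic{e}\cup\ac{X'}{e}\subseteq\dc{X'}{e}\cup X'$. I would exploit that $e\in Y'\setminus X'\subseteq Y\setminus X$ (using $X\subseteq X'$ and $Y'\subseteq Y$), so enabledness of the given transition yields $\csD_X(e)=\emptyset$, \ie $\ic{e}\cup\ac{X}{e}\subseteq\dc{X}{e}\cup X$. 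Now pick $z\in\ic{e}\cup\ac{X'}{e}$. If $z\in\ic{e}\cup\ac{X}{e}$, then $z\in\dc{X}{e}\cup X\subseteq\dc{X'}{e}\cup X'$ by monotonicity of $\ac{\cdot}{e}$, $\dc{\cdot}{e}$ and by $X\subseteq X'$. If instead $z\in\ac{X'}{e}\setminus\ac{X}{e}$, there is an adder $a\in X'\setminus X\subseteq Y\setminus X$ with $\addcause{z}{a}{e}$, whence the within-step restriction of the given transition forces $z\in X\subseteq X'$. In both cases $z\in\dc{X'}{e}\cup X'$, so $\csD_{X'}(e)=\emptyset$; the very same computation, now applied to two events $a,e\in Y'\setminus X'\subseteq Y\setminus X$, also establishes the analogue of \eqref{eq:D5} for the new transition.

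The main obstacle is exactly this last case: the causes that the freshly added events of $Y'\setminus X'$ grow on one another. As in the proof of \lem\ref{lma:GESinRCES}, these are kept under control by the condition that causes added inside one step must already be present in the source configuration, and the point is that this condition --- being a statement about the pair $X\subseteq Y$ --- descends verbatim to the sub-pair $X'\subseteq Y'$ because $Y'\setminus X'\subseteq Y\setminus X$. The SSDC hypothesis itself enters only through \lems\ref{lma:SingleCausalState} and \ref{lma:SSDCstateProp}: it guarantees that $\csD_{X'}$ and $\csD_{Y'}$ are functions of $X'$ and $Y'$ alone, and that Conditions~4--7 require no separate check.
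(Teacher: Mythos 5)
Your proposal follows essentially the same route as the paper's proof: a condition-by-condition check of \defi18 in \cite{dynamicCausality15}, using the canonical shape of the causal-state functions from \lem\ref{lma:SingleCausalState}, discharging Conditions~4--7 via \lem\ref{lma:SSDCstateProp}, inheriting Condition~9 from the given transition because $Y'\setminus X'\subseteq Y\setminus X$, and reducing Condition~3 (through the case of a cause freshly added by some $a\in X'\setminus X$) to Condition~9 of the original transition. The paper phrases this last step as a contradiction while you argue it directly; that difference is purely cosmetic.

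The one concrete omission is Condition~8 of \defi18, the condition concerning causes that have both adders and droppers within a single step: you never verify it, and your closing claim that the SSDC hypothesis enters \emph{only} through \lems\ref{lma:SingleCausalState} and \ref{lma:SSDCstateProp} is precisely what hides it, since this is the one condition the paper discharges by appealing to \defi\ref{def:SingleStateDC} directly (in a SSDC no causal dependency has both an adder and a dropper, so the condition holds vacuously). The repair is a one-line addition, so this is a minor gap in completeness rather than a flaw in the approach; the substantive parts of your argument (Conditions~3 and~9) match the paper's.
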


\begin{proof}
By assumption Conditions~1 and 2 of \defi18 in \cite{dynamicCausality15} of the
transition relation holds for the two states $(X',\csD_{X'})$ and
$(Y',\csD_{Y'})$. Conditions~4,
5, 6, and
7 follow from \lem\ref{lma:SSDCstateProp}.
Condition~8 holds because of \defi\ref{def:SingleStateDC} and $\rho$ is a SSDC. Condition~9 holds because it is a special case of the same conditions for $(X,\csD_X)$ and $(Y,\csD_Y)$. Let now $e\in Y'\setminus X'$, such that $\cs{e}_{X'}\neq \emptyset$, then there is $a\in X'\setminus X$ and a $ c \in E$ with $\addcause{c}{a}{e}$, but this is a contradiction with Condition~9 of $\transDC{(X,\csD_X)}{(Y,\csD_Y)}$, so Condition~3 holds. Thus $\transDC{(X',\csD_X')}{(Y',\csD_Y')}$ holds.
\end{proof}

\begin{definition}
\label{def:InclusionIntoDCES}
Let $\sigma=(E,\confOp,\enabOp,\shrinkingCausality)$ be a SES. Then its embedding is $\emb{\sigma}=(E,\confOp,\enabOp,\shrinkingCausality,\emptyset)$. Similarly let $\gamma=(E,\confOp,\enabOp,\growingCausality)$ be a GES. Then its embedding is $\emb{\gamma}=(E,\confOp,\enabOp,\emptyset,\growingCausality)$.
\end{definition}

For each embedding the causal state coincides with a condition on the initial, added, and dropped causes, that are enforced in the transition relations of SESs and GESs.

\begin{lemma}
\label{lma:explicitSEScs}
Let $\sigma$ be a SES and $\emb{\sigma}$ its embedding. Then we have for each state $(C,\csD)$ of $\emb{\sigma}$, $\cs{e}=\ic{e}\setminus(\dc{C}{e}\cup C)$.
\end{lemma}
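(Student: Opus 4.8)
The plan is to obtain this as an essentially immediate corollary of \lem\ref{lma:SingleCausalState}. First I would observe that the embedding $\emb{\sigma} = (E, \confOp, \enabOp, \shrinkingCausality, \emptyset)$ of \defi\ref{def:InclusionIntoDCES} is a well-formed DCES: the structural requirements on $\confOp$ and $\enabOp$ are already met because $\sigma$ is a SES, and the adding relation is simply empty. In particular, since there are no adders at all, the condition $\forall e, e' \in E \logdot \nexists a, d \in E \logdot \addsDrops{a}{e'}{e}{d}$ of \defi\ref{def:SingleStateDC} is vacuously satisfied, so $\emb{\sigma}$ is a SSDC. Hence \lem\ref{lma:SingleCausalState} applies, and for every state $(C, \csD) \in \reachables{\emb{\sigma}}$ we have $\cs{e} = (\ic{e} \cup \ac{C}{e}) \setminus (\dc{C}{e} \cup C)$.

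Next I would observe that $\ac{C}{e} = \emptyset$ for every configuration $C$ and every $e \in E$ in $\emb{\sigma}$: by the definition of $\acD$, membership in $\ac{C}{e}$ requires some $a \in C$ with $\addcause{e'}{a}{e}$, but the adding relation of $\emb{\sigma}$ is empty. Substituting $\ac{C}{e} = \emptyset$ into the equation from \lem\ref{lma:SingleCausalState} gives $\cs{e} = \ic{e} \setminus (\dc{C}{e} \cup C)$, which is exactly the claim.

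I do not expect a genuine obstacle here; the argument is a one-line specialisation. The only points that need a word of care are that ``state of $\emb{\sigma}$'' in the lemma statement is meant in the sense of $\reachables{\emb{\sigma}}$ used by \lem\ref{lma:SingleCausalState} (both referring to the transition relation of \defi18 in \cite{dynamicCausality15}), and that the case $C = \emptyset$ is already covered, since then $\dc{C}{e} = \emptyset$ and $\cs{e} = \ic{e}$ by the definitions of $\csD$ and $\icD$, consistent with the formula.
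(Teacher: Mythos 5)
Your proposal is correct and follows exactly the paper's own argument: specialise \lem\ref{lma:SingleCausalState} to $\emb{\sigma}$ and use that $\ac{C}{e}=\emptyset$ since the adding relation of the embedding is empty. The only difference is that you spell out the (vacuous) verification that $\emb{\sigma}$ is an SSDC, which the paper leaves implicit.
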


\begin{proof}
By \lem\ref{lma:SingleCausalState} and because $ \ac{C}{e} = \emptyset $ in $ \emb{\sigma} $ for all configurations $ C $ and events $ e $.
\end{proof}

\begin{lemma}
\label{lma:explicitGEScs}
Let $\gamma$ be a GES and $\emb{\gamma}$ its embedding. Then we have for each state $(C,\csD)$ of $\emb{\gamma}$,  $\cs{e}=(\ic{e}\cup\ac{C}{e})\setminus C$.
\end{lemma}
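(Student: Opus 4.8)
The plan is to mirror the proof of \lem\ref{lma:explicitSEScs}: reduce the claim to \lem\ref{lma:SingleCausalState} and then simplify the resulting expression using the fact that $\emb{\gamma}$ contains no droppers.

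First I would check that $\emb{\gamma}$ is an SSDC. By \defi\ref{def:InclusionIntoDCES}, $\emb{\gamma} = \left( E, \confOp, \enabOp, \emptyset, \growingCausality \right)$, so its shrinking-causality relation is empty; hence there is no pair $a, d \in E$ with $\addsDrops{a}{e'}{e}{d}$ for any $e, e' \in E$, and the side condition of \defi\ref{def:SingleStateDC} holds trivially. (If needed, I would also note in passing that $\emb{\gamma}$ is a well-formed DCES, which is immediate from the embedding keeping $\confOp$, $\enabOp$, and $\growingCausality$ unchanged and $\gamma$ being a GES.) Then I would apply \lem\ref{lma:SingleCausalState} to the SSDC $\emb{\gamma}$: for every state $(C, \csD) \in \reachables{\emb{\gamma}}$ and every $e \in E$ we get $\cs{e} = \left( \ic{e} \cup \ac{C}{e} \right) \setminus \left( \dc{C}{e} \cup C \right)$, where $\ac{\cdot}{\cdot}$ is the added-cause operator induced by the (unchanged) relation $\growingCausality$.

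Finally, since the shrinking-causality relation of $\emb{\gamma}$ is empty, $\dc{C}{e} = \Set{ e' \mid \exists d \in C \logdot \drops{e'}{d}{e} } = \emptyset$ for every configuration $C$ and every event $e$. Substituting this into the equation above yields $\cs{e} = \left( \ic{e} \cup \ac{C}{e} \right) \setminus C$, which is exactly the claim.

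I do not expect any genuine obstacle here; the proof is a one-line specialization of \lem\ref{lma:SingleCausalState}, dual to \lem\ref{lma:explicitSEScs}. The only points to be careful about are (i) that the SSDC side condition is really discharged by the \emph{absence of droppers} in $\emb{\gamma}$ (it is, since there are no droppers at all, so a fortiori no adder/dropper pair), and (ii) that the $\ac{\cdot}{\cdot}$ appearing after applying \lem\ref{lma:SingleCausalState} coincides with the one in the statement, which is immediate because the embedding leaves $\growingCausality$ untouched.
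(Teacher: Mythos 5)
Your proposal is correct and follows essentially the same route as the paper: both reduce the claim to \lem\ref{lma:SingleCausalState} and then observe that $\dc{C}{e}=\emptyset$ because $\emb{\gamma}$ has empty shrinking causality. Your explicit check that $\emb{\gamma}$ is an SSDC (no droppers, hence vacuously no adder/dropper pair) is a small but welcome addition that the paper leaves implicit.
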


\begin{proof}
By \lem\ref{lma:SingleCausalState} and because $ \dc{C}{e} = \emptyset $ in $ \emb{\gamma} $ for all configurations $ C $ and events $ e $.
\end{proof}
SESs (resp. GESs) and their embeddings are transition equivalent.
\begin{lemma}
\label{lma:SESinDCES}
\label{lma:GESinDCES}
Let $\mu$ be a GES or SES, then we have $\transEq{\emb{\mu} }{\mu}$.
\end{lemma}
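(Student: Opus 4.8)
The plan is to show the two transition systems coincide by comparing their transition relations directly, relying on the explicit formulas for the causal state already established. I will treat the SES case; the GES case is symmetric, using \lem\ref{lma:explicitGEScs} in place of \lem\ref{lma:explicitSEScs}. First I would observe that $\emb{\sigma}$ is a SSDC by \defi\ref{def:SingleStateDC}, since the embedding has empty adding relation, so no causal dependency has both an adder and a dropper; hence \lems\ref{lma:SingleCausalState} and \ref{lma:explicitSEScs} apply, and by the remark following \defi\ref{def:SingleStateDC} it suffices to compare transitions at the level of configurations, i.e.\ to show $\transDCOp$ on $\configurations{\emb{\sigma}}$ agrees with $\transSOp$ on $\configurations{\sigma}$.

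Next I would argue the two structures have the same reachable configurations. By \lem\ref{lma:SingleCausalState} every reachable state of $\emb{\sigma}$ has its causal-state function determined by its configuration, so $\configurations{\emb{\sigma}}$ is exactly the set of first components reachable via $\transDCOp$. I would then show by induction on the length of a derivation that $C$ is reachable in $\emb{\sigma}$ iff $C \in \configurations{\sigma}$, which reduces to the single-step claim: for $X \in \configurations{\sigma}$ with causal state $\csD_X$ given by \lem\ref{lma:explicitSEScs}, $\transDC{(X,\csD_X)}{(Y,\csD_Y)}$ holds in $\emb{\sigma}$ iff $\transS{X}{Y}$ holds in $\sigma$.

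For that single-step equivalence I would unfold \defi18 in \cite{dynamicCausality15} for the embedding. Conditions~1 and~2 are just $X \subseteq Y$ and conflict-freeness of $Y$, which are exactly the first two clauses of the SES transition relation (Def.~13). Conditions~6 and~7 (additions) are vacuous because the adding relation of $\emb{\sigma}$ is empty, and Condition~9 is likewise trivial. Condition~3 says any $e \in Y\setminus X$ being added needs its triggering cause already present; since there are no adders, $\ac{C}{e}=\emptyset$ and $\cs{e}_X = \ic{e}\setminus(\dc{X}{e}\cup X)$, so the substantive content of the remaining conditions (3,~4,~5 together with the enabledness requirement of \defi18) collapses to $\cs{e}_X \subseteq X$ for each $e \in Y\setminus X$, i.e.\ $(\ic{e}\setminus\dc{X}{e}) \subseteq X$ — which is precisely the third clause of Def.~13 for $\transS{X}{Y}$. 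Finally I would invoke \lem\ref{lma:SSDCConfInTrans} to see that the updated states $\csD_X,\csD_Y$ really are the canonical ones, so no spurious transitions arise. This matching of conditions is the main work, but it is routine bookkeeping rather than a conceptual obstacle; the one place to be careful is checking that the causal-state updates prescribed by Conditions~4–7 of \defi18 agree with the closed form of \lem\ref{lma:explicitSEScs}, which is where \lem\ref{lma:SingleCausalState} does the heavy lifting.
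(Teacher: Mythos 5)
Your proposal is correct and follows essentially the same route as the paper: equip each configuration with the canonical causal state (via \lem\ref{lma:SingleCausalState}/\lem\ref{lma:explicitSEScs}, resp.\ \lem\ref{lma:explicitGEScs}) and match the conditions of \defi18 in \cite{dynamicCausality15} against Def.~13 (resp.\ Def.~15) clause by clause, the added/dropped-cause update conditions being vacuous or handled by the SSDC lemmas. Two small points: the canonicity of the updated states is what \lem\ref{lma:SingleCausalState} gives you (not \lem\ref{lma:SSDCConfInTrans}), and in the GES case Condition~9 of \defi18 is not vacuous but corresponds exactly to the extra clause of Def.~15 on added causes among newly occurring events — still routine, as you say.
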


%
%Lemma~\ref{lma:SESinDCES} states:
%\begin{quote}
%For each SES $\sigma$, $\emb{\sigma} \transEq \sigma$.
%\end{quote}

\begin{proof}
Let $\mu$ be a SES and $\transS{C}{C'}$ a transition in $\mu$, we define for a configuration $C$ a causality state function $\csD:E\setminus X\rightarrow\Powerset{E\setminus X}$ as $\cs{e}=\ic{e}\setminus(\dc{X}{E}\cup X)$. 
$C'$ is conflict free and $C\subseteq C'$, because $\transS{C}{C'}$ and Def. 13 in \cite{dynamicCausality15}, so Conditions~
1 and 2 of \defi18 in \cite{dynamicCausality15} are satisfied. Moreover in the configuration $C$ we have $\cs{e}=\ic{e}\setminus(\dc{C}{E}\cup C)$, so Conditions~3, 4, and 5 are fulfilled. Conditions~6, 8, 7, and 9 are trivially satisfied, because $\growingCausality=\emptyset$, so $\transDC{(C,\csD)}{(C',\csD')}$.
Let now $\transDC{(C,\csD)}{(C',\csD')}$ in $\emb{\mu}$, then by \defs18, 13 in combination with \lem\ref{lma:explicitSEScs} there is a transition $\transS{C}{C'}$ in $\mu$.\\
Let now $\mu$ be a GES and $\transG{C}{C'}$ a transition in $\mu$, we define for a configuration $X$ a causality state function $\csD:E\setminus X\rightarrow\Powerset{E\setminus X}$ as $\cs{e}=(\ic{e}\cup\ac{X}{e})\setminus X$. $C'$ is conflict free and $C\subseteq C'$, because $\transS{C}{C'}$ and \defi15 in \cite{dynamicCausality15}, so Conditions~2, 1,  and 9 of \defi18 are satisfied. Moreover in the configuration $C$ we have $\cs{e}=(\ic{e}\cup\ac{C}{e})\setminus C$, so Conditions~3, 6, and 7 are fulfilled. Conditions~4, 5, and 8 are trivially satisfied, because $\shrinkingCausality=\emptyset$.
Let now $\transDC{(C,\csD)}{(C',\csD')}$ in $\emb{\mu}$, then by \defs18, 15 in combination with \lem\ref{lma:explicitGEScs} there is a transition $\transG{C}{C'}$ in $\mu$. 
\end{proof}

For the incomparability result between DCESs and SESs, we give an RCES
counterexample, which cannot be modeled by a DCES.

\begin{lemma}\label{lma:RCESNotinDCES}
There is no transition-equivalent DCES to $\rho_{\gamma}$ (\cf \fig\ref{fig:counterExamples}).
\end{lemma}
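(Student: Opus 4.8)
Recall (\cf the proof of \lem\ref{lma:GESinRCESstrictly}) that $\configurations{\rho_{\gamma}}$ consists of all subsets of $\Set{a,b,c}$, that $\rho_{\gamma}$ has the transitions $\transRC{\emptyset}{\Set{a}}$, $\transRC{\emptyset}{\Set{b}}$, $\transRC{\emptyset}{\Set{c}}$, $\transRC{\Set{a}}{\Set{a,b}}$, $\transRC{\Set{a}}{\Set{a,c}}$, $\transRC{\Set{b}}{\Set{b,c}}$, but that $\neg\left(\transRC{\Set{a,b}}{\Set{a,b,c}}\right)$. The plan is to assume a DCES $\Delta$ with $\transEq{\Delta}{\rho_{\gamma}}$ and then to produce in $\Delta$ a reachable state with configuration $\Set{a,b}$ from which a transition of $\Delta$ reaches a state with configuration $\Set{a,b,c}$; since $\rho_{\gamma}$ has no such configuration transition, this contradicts $\transEq{\Delta}{\rho_{\gamma}}$. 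So fix such a $\Delta$; it has the same configurations and the same configuration transitions as $\rho_{\gamma}$.

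I would first read off the following. (i)~$a$, $b$, $c$ are pairwise not in conflict, because $\Set{a,b,c}\in\configurations{\Delta}$ is conflict free. (ii)~The only reachable state with configuration $\emptyset$ is the initial state $(\emptyset,cs_0)$ with $cs_0(e)=\ic{e}$, and since the steps to $\Set{a}$, $\Set{b}$, $\Set{c}$ exist, Condition~3 of \defi18 in \cite{dynamicCausality15} forces $\ic{a}=\ic{b}=\ic{c}=\emptyset$. (iii)~Every reachable state with configuration $\Set{a}$ arises by a step $\transDC{(\emptyset,cs_0)}{(\Set{a},cs_a)}$, and in it the entry $cs_a(e)$ is obtained from $cs_0(e)=\ic{e}$ via Conditions~4--7, \ie, by deleting the causes that $a$ drops and inserting the causes that $a$ adds; since no event adds itself as a cause, $cs_a(c)$ and $cs_a(b)$ contain only events distinct from $a$. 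As the step $\Set{a}\to\Set{a,c}$ exists, some reachable $(\Set{a},cs_a)$ has $cs_a(c)\subseteq\Set{a}$ and hence $cs_a(c)=\emptyset$; since $\ic{c}=\emptyset$, this requires by Condition~7 that every cause $a$ adds to $c$ is also dropped by $a$ again (otherwise it would be forced into $cs_a(c)$). Running the same argument for the steps $\Set{a}\to\Set{a,b}$ and $\Set{b}\to\Set{b,c}$ shows that every cause $a$ adds to $b$ is dropped by $a$ again, and every cause $b$ adds to $c$ is dropped by $b$ again.

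Now I would build the critical path. By (iii) one can choose a single reachable state $(\Set{a},cs_a)$ with $cs_a(b)=\emptyset$ and $cs_a(c)=\emptyset$ — whether an added cause is dropped back or not is chosen independently for the distinct causal dependencies involved, so the constraints for the $b$-entry and for the $c$-entry do not interfere. From $cs_a(b)=\emptyset\subseteq\Set{a}$ the step adding $b$ is available, and among the reachable states $(\Set{a,b},cs_{ab})$ so obtained there is one with $cs_{ab}(c)=\emptyset$: starting from $cs_a(c)=\emptyset$, the only causes the occurrence of $b$ can put into the $c$-entry are the ones $b$ adds to $c$, and by (iii) each of these is dropped again. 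Hence $(\Set{a,b},cs_{ab})$ with $cs_{ab}(c)=\emptyset$ is reachable in $\Delta$. Finally $cs_{ab}(c)=\emptyset\subseteq\Set{a,b}$, $\Set{a,b}\subseteq\Set{a,b,c}$, and $\Set{a,b,c}$ is conflict free by (i), so $\transDC{(\Set{a,b},cs_{ab})}{(\Set{a,b,c},cs_{abc})}$ holds for a suitable $cs_{abc}$, giving the configuration transition from $\Set{a,b}$ to $\Set{a,b,c}$ in $\Delta$ that $\rho_{\gamma}$ lacks --- a contradiction.

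The main obstacle, and what the write-up has to handle with care, is the nondeterminism of the DCES transition relation: when an event both adds and drops one and the same causal dependency its successor causal state is not determined (this is precisely the phenomenon behind Lemma~1 in \cite{dynamicCausality15}). The proof must argue that this freedom can always be used to keep the $c$-entry of the causal state empty along some path into $\Set{a,b}$, and fact (iii) --- every cause added to $c$ by $a$ or $b$ must be droppable by that same event, since $c$ has to be re-enableable after $\Set{a}$ alone and after $\Set{b}$ alone --- is exactly what makes that possible. Checking that the intermediate states and steps satisfy the remaining conditions of \defi18 is then routine: those conditions either become vacuous because each relevant step adds only a single event, or follow from the fact that $a$, $b$, and $c$ each appear in some configuration of $\Delta$.
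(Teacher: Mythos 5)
Your overall strategy is the same as the paper's: from transition equivalence with $\rho_{\gamma}$ you derive that $a,b,c$ are conflict-free, that they have no initial causes, and that neither $a$ nor $b$ can effectively contribute causes to $c$, and you then exhibit a reachable state $(\Set{a,b},\csD)$ with $\cs{c}=\emptyset$, so that \defi18 in \cite{dynamicCausality15} yields $\transDC{(\Set{a,b},\csD)}{(\Set{a,b,c},\csD')}$, contradicting $\neg\left(\transRC{\Set{a,b}}{\Set{a,b,c}}\right)$. Up to the more explicit use of the individual transitions of $\rho_{\gamma}$ (which the paper uses only implicitly), this is the paper's proof.

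The one genuine soft spot is your step~(iii) and the ``keep the $c$-entry empty by choosing how add-and-drop is resolved'' argument. A DCES does not permit one and the same event to be both an adder and a dropper of the same causal dependency --- this is exactly the well-formedness condition verified as Condition~16(3) in \lem\ref{lma:EmdeddingIsEBDC}, and it is also what makes Conditions~4--7 of \defi18 consistent (Condition~8 and \defi\ref{def:SingleStateDC} concern adders and droppers of the same dependency that are \emph{distinct} events). Consequently the nondeterminism you appeal to does not exist: the phenomenon behind Lemma~1 in \cite{dynamicCausality15} comes from different interleavings of a distinct adder and dropper across several steps, not from a per-step choice of whether an added cause ``sticks''. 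So from $\cs{c}=\emptyset$ in a state with configuration $\Set{a}$, Condition~7 in fact forces the stronger conclusion the paper states, namely $\addcause{e}{a}{c}\implies e=a$ (and similarly for $b$), after which the propagation to a state $(\Set{a,b},\csD)$ with $\cs{c}=\emptyset$ is immediate via Condition~6, with no choices to be made. Your intermediate claims remain (vacuously) true, so the argument is repairable by a one-line appeal to that condition of Def.~16 instead of the claimed independence of resolutions; as written, however, the justification of the critical propagation step rests on a semantic freedom the model does not provide.
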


\begin{proof}[Proof of Lemma~\ref{lma:RCESNotinDCES}]
Assume $ \Delta = \left( E, \confOp, \enabOp,\shrinkingCausality, \growingCausality \right) $ such that $ \transEq{\Delta}{\rho_{\gamma}} $. Then $ \configurations{\Delta} = \configurations{\rho_{\gamma}} $.
	By \defi18 in \cite{dynamicCausality15} and because of the configuration $ \Set{
	a, b, c } \in \configurations{\rho_{\gamma}} $, the events $ a $, $ b $, and $
	c $ cannot be in conflict with each other, \ie $ \confOp \cap \Set{ a, b, c }^2
	= \emptyset $.
	Moreover, because of the configurations $ \Set{ a }, \Set{ b }, \Set{ c } \in \configurations{\rho_{\gamma}} $, there are no initial causes for $ a $, $ b $, or $ c $, \ie $ \enabOp \cap \Set{ \enab{x}{y} \mid y \in \Set{ a, b, c } } = \emptyset $. Note that the relation $\shrinkingCausality$ cannot disable events.
	Finally, because of the configurations $ \Set{ a, c }, \Set{ b, c } \in \configurations{\rho_{\gamma}} $, neither $ a $ nor $ b $ can add a cause (except of themselves) to $ c $, \ie $ \addcause{e}{a}{c} \implies e = a $ and $ \addcause{e}{b}{c} \implies e = b $ for all $ e \in E $.
	Thus we have $ \forall e, e' \in \Set{ a, b, c } \logdot \neg \left( \conf{e}{e'} \right) $ and in the state $(\Set{a,b}, cs)$ it follows $ \cs{c} = \emptyset \subseteq \Set{ a, b } $.
	But then, by \defi18, $ \transDC{(\Set{ a, b },\csD)}{(\Set{ a, b, c },\csD')} $ for some causal state functions $\csD$ and $\csD'$.
	Since $ \neg \left( \transRC{\Set{ a, b }}{\Set{ a, b, c }} \right) $, this violates our assumption, \ie there is no DCES that is transition equivalent to $ \rho_{\gamma} $.
\end{proof}

Theorem~8 in \cite{dynamicCausality15} states:
\begin{quote}
DCESs and RCESs are incomparable.
\end{quote}

\begin{proof}[Proof of Theorem~8 in \cite{dynamicCausality15}]
It follows from \lems\ref{lma:RCESNotinDCES} and 1 in \cite{dynamicCausality15}, and because $\transEqS{\Delta}{\Delta'}$ (for $\Delta$ and $\Delta'$ as in the proof of \lem1), then no two RCESs $\rho$ and $\rho'$, with $\transNEq{\rho}{\rho'}$ can distinguish between $\Delta$ and $\Delta'$.
\end{proof}

Theorem~9 in \cite{dynamicCausality15} states:
\begin{quote}
DCESs are strictly more expressive than GESs and SESs.
\end{quote}

\begin{proof}[Proof of Theorem~9 in \cite{dynamicCausality15}]
By \ths8, 7, and 4 in \cite{dynamicCausality15} and \lem\ref{lma:GESinDCES}.
\end{proof}

\section{Comparing DCESs with EBESs}
\label{sec:EBESsWithDCESs}
To compare with EBESs, we define a sub-class of DCESs, where posets could be
defined and used for semantics.

\begin{definition}
\label{def:EBDC}
Let EBDC denotes a subclass of SSDC with the additional requirements:
	\begin{inparaenum}
		\item \label{eq:EBDCOnlyDisabling} $\forall c,m,t \in E \logdot
		\addcause{c}{m}{t} \Longrightarrow c = t$
		\item \label{eq:NoCausalAmbiguity} $\forall c,m_1,\ldots , m_n, t \in E
		\logdot \drops{c}{m_1,\ldots,m_n}{t} \Longrightarrow \forall e_1, e_2 \in 
		\{c,m_1,\ldots , m_n\} \logdot \left(e_1 \neq e_2 \Longrightarrow
		e_1 \# e_2 \right)$
	\end{inparaenum}
\end{definition}

The first condition translates disabling into $ \growingCausality $ and ensures that
disabled events cannot be enabled again. The second condition reflects causal
unambiguity by $ \shrinkingCausality $ such that either the initial
cause or one of its droppers can happen.

We adapt the notion of precedence.

\begin{definition}
\label{def:EBDCLposets}
Let $\vartheta$ be a EBDC and $C \in C(\vartheta)$, then we define the
precedence relation $<_C \subseteq C\times C$ as $e <_C e'
\Longleftrightarrow e \rightarrow e' \lor \addcause{e}{e'}{e} \lor \exists c
\in E\logdot \drops{c}{e}{e'}$. Let $\leq_C$ be the reflexive and
transitive closure of $<_C$.
\end{definition}

The relation $<_C$ indeed represents a precedence relation, and its
reflexive transitive closure is a partial order.

\begin{lemma}
\label{lma:EBESPrecedence}
Let $\vartheta$ be a EBDC, $C \in C(\vartheta)$, and let $e,e' \in C \logdot e
<_C e'$. Let also $ \transDC{\transDC{(C_0, cs_0)}{\ldots}}{(C_n, cs_n)}$ with $ C_0 = \emptyset $ and $ C_n = C $ be the
transition sequence of $C$,
then $\exists C_i \in \{C_0,\ldots,C_n\} \logdot e \in C_i \land e' \notin C_i$.
\end{lemma}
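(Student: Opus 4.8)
The plan is to prove the contrapositive-free version directly: assume $e <_C e'$ and let $\transDC{\transDC{(C_0,cs_0)}{\ldots}}{(C_n,cs_n)}$ be the given transition sequence with $C_0 = \emptyset$ and $C_n = C$. Since both $e$ and $e'$ lie in $C = C_n$ but not in $C_0 = \emptyset$, there is a first index $j$ with $e' \in C_j$ and a first index $i$ with $e \in C_i$. It suffices to show $i < j$, i.e. that $e$ enters the configuration strictly before $e'$; then $C_i$ (or any $C_k$ with $i \le k < j$) witnesses the claim. So the real content is: whenever $e <_C e'$, the event $e$ must be present in the configuration before $e'$ can occur.

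I would split this according to the three disjuncts in the definition of $<_C$ (\defi\ref{def:EBDCLposets}): either $e \to e'$, or $\addcause{e}{e'}{e}$, or $\exists c \in E \logdot \drops{c}{e}{e'}$. In the first case, $e \to e'$ means $e \in \ic{e'}$; since $\vartheta$ is a SSDC, \lem\ref{lma:SingleCausalState} gives $cs_{j-1}(e') = (\ic{e'} \cup \ac{C_{j-1}}{e'}) \setminus (\dc{C_{j-1}}{e'} \cup C_{j-1})$, and by Condition~\ref{eq:NoCausalAmbiguity} of \defi\ref{def:EBDC} the initial cause $e$ of $e'$ is in conflict with each of its droppers, so $e$ cannot already have been dropped while $e' \notin C_{j-1}$; hence $e \in cs_{j-1}(e')$ and Condition~3 of \defi18 forces $e \in C_{j-1}$, so $i \le j-1 < j$. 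In the second case $\addcause{e}{e'}{e}$: by Condition~\ref{eq:EBDCOnlyDisabling} of EBDC this cannot happen unless $e = e$ — wait, the adder $m$ here is $e'$ and the target is $e$, so $\addcause{e}{e'}{e}$ is of the form $\addcause{c}{m}{t}$ with $t = e$, which is allowed only if $c = t$, i.e. $e = e$, vacuously true; the point is that the event $e'$ adds $e$ as a cause of $e$, so $e'$ appears in some dropper/adder position for $e$, and by the precedence bookkeeping in Condition~9 of \defi18 (no adder in $C_j \setminus C_{j-1}$ may activate a causal dependency for a coincidentally-entering event) together with \lem\ref{lma:SSDCConfInTrans}, one argues $e'$ must already be present when $e$ occurs — but here we want the reverse; I would instead observe that $\addcause{e}{e'}{e}$ makes $e$ depend on $e'$ only after $e'$ happens, so actually this disjunct should force $e' <_C e$, which contradicts antisymmetry unless handled carefully — so the correct reading is that this disjunct, via Condition~6 of \defi18, can only fire when $e' \in C$, and its effect is that $e$ is added to $cs(e)$, blocking $e$ until... this needs the EBDC structure to collapse, and I will verify that Condition~\ref{eq:EBDCOnlyDisabling} plus self-loop-freeness makes $\addcause{e}{e'}{e}$ effectively behave like a disabling $e' \rightsquigarrow e$, i.e. $e$ cannot occur after $e'$, which is the wrong direction — hence I suspect this disjunct is actually never satisfiable for distinct $e,e' \in C$, or the relation is oriented so that $e<_C e'$ via this clause genuinely means $e$ precedes $e'$. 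The third case $\drops{c}{e}{e'}$: $e$ is a dropper of the cause $c$ of $e'$; by Condition~\ref{eq:NoCausalAmbiguity}, $c$ and $e$ are in conflict, and $c \in \ic{e'}$, so for $e'$ to occur either $c$ or one of its droppers (among them $e$) must have occurred; since $c$ and $e$ conflict and only one disjunct can be realized in a conflict-free configuration $C$, if $e \in C$ then indeed $e$ must have discharged $c$'s obligation, and one shows via \lem\ref{lma:SingleCausalState} and Condition~4/5 of \defi18 that $e \in C_{j-1}$.

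The main obstacle will be the middle case, $\addcause{e}{e'}{e}$: I need to pin down exactly what it means for an event to add a causal dependency pointing to itself in an EBDC, and why Condition~\ref{eq:EBDCOnlyDisabling} together with Conditions~6 and 9 of \defi18 forces $e$ to wait for $e'$ (equivalently, rules out $e$ occurring before $e'$ while still having $e' \in C$). I expect that the key lemma to invoke is \lem\ref{lma:SingleCausalState}, which pins $cs_k(e) = (\ic{e} \cup \ac{C_k}{e}) \setminus (\dc{C_k}{e} \cup C_k)$: if $\addcause{e}{e'}{e}$ and $e' \in C_k$ with $e \notin C_k$, then $e \in \ac{C_k}{e}$ — but the definition of $\acD$ excludes adders with $a \in \{e,e'\}$, and here $a = e'$ is exactly excluded, so $\ac{C_k}{e}$ does \emph{not} pick up $e$; this means $\addcause{e}{e'}{e}$ has no effect on the causal state at all, so this disjunct of $<_C$ would be spurious unless $e \in \ic{e}$ already, which is excluded by self-loop-freeness. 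I will therefore argue that under the EBDC restrictions this disjunct is vacuous for distinct $e, e' \in C$, so only the first and third disjuncts contribute and the argument above completes the proof. If it is not vacuous, the fallback is to show it still implies $e'$ precedes $e$ — contradicting $e <_C e'$ being the relevant orientation — so that in all realizable cases the desired $C_i$ exists.
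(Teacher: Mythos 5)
Your handling of the first and third disjuncts of $<_C$ is sound and essentially the paper's own argument (the paper pivots on the first configuration $C_f$ containing $e$ rather than on the first one containing $e'$, but the substance is identical: by Condition~\ref{eq:NoCausalAmbiguity} of \defi\ref{def:EBDC} the relevant initial cause and its droppers are mutually in conflict with events of the conflict-free $C$, so the cause survives in the causal state and Condition~3 of \defi18 blocks $e'$). The genuine gap is the middle case $\addcause{e}{e'}{e}$, which you end up declaring vacuous. That claim rests on a variable-capture misreading of $\ac{C}{e} = \Set{ e' \mid \exists a \in C \logdot \addcause{e'}{a}{e} \land a \notin \Set{e, e'} }$: the excluded set consists of the \emph{target} and the \emph{added cause}, not the adder. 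For the triple $\addcause{e}{e'}{e}$ the added cause and the target are both $e$, so the exclusion set collapses to $\Set{e}$, and the adder $e'$ (distinct from $e$) is not excluded. Hence once $e'$ has occurred, $e \in \ac{C'}{e}$, i.e.\ $e$ acquires itself as a cause and becomes impossible --- this disjunct is precisely the encoding of EBES disabling, it is not spurious, and its semantics forces $e$ to occur before $e'$, which is exactly the statement to be proved, not (as your fallback suggests) its reverse. There is also no contradiction to be derived from ``$e'$ precedes $e$'': $<_C$ is defined syntactically from the structure, so you cannot discharge the case by appealing to the orientation of $<_C$.

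The paper closes this case as follows: let $C_f$ be the first configuration of the sequence containing $e$. If $e' \in C_{f-1}$, then Condition~7 of \defi18 gives $e \in cs_{f-1}(e)$, so Condition~3 forbids $e$ from entering at step $f$, contradicting the choice of $C_f$. If instead $e'$ were to enter simultaneously with $e$, i.e.\ $e, e' \in C_f \setminus C_{f-1}$, then Condition~9 of \defi18 applied to $\addcause{e}{e'}{e}$ would require $e \in C_{f-1}$, again a contradiction. Hence $e' \notin C_f$ and $C_f$ is the required witness; since $e' \in C$, $e'$ enters only at some later step. Without an argument of this kind your proof leaves uncovered precisely those configurations of a translated EBES that contain an event together with one of its disablers, and the lemma (and, downstream, \theo\ref{thm:EBDCTrFromPosets} and \lem\ref{lma:EBESintoEBDC}) would not be established.
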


\begin{proof}
Let $(C_f, cs_f)$ be the first occurrence of $e$ in the sequence $ \transDC{\transDC{(C_0, cs_0)}{\ldots}}{(C_n, cs_n)}$, so according to
Condition~1 of \defi18 in \cite{dynamicCausality15} it is enough to prove that
$e' \notin C_f$.
First, assume that $e \rightarrow e'$, then $e \in cs_0(e')$
according to the definition of $\csiD$. Then
according to \defi18 the only situation where $e \notin cs_{f-1}(e')$ is that
there is a dropper $e'' \in C_{f-1}$ for it according to
Condition~4, but that is impossible
since $e$ and $e'$ will be in conflict according to
Condition~\ref{eq:NoCausalAmbiguity} of \defi\ref{def:EBDC}. So $e \in cs_{f-1}(e')$ and thus $e'
\notin C_f$ according to Condition~3 of \defi18.

Second assume that $\addcause{e}{e'}{e}$.
If $e' \in C_{f-1}$ then according to
Condition~7 of \defi18 $e \in cs_{f-1}(e)$ which
means $e \notin C_f$ according to
Condition~3 of \defi18, which is a contradiction to the definition of $C_f$. 
Then according to Condition~9 of \defi18, if $e'
\in C_{f}$, it follows $e\in C_{f-1}$, which again contradicts the definition of $C_f$. So because $e'\in C$, there is an $h>f$, such that $e'\in C_h$ but $e'\notin C_{h-1}$.

Third, assume $\exists c
\in E\logdot \drops{c}{e}{e'}$. Then since EBDC are a subclass of SSDC we have
$\nexists a \in E \logdot \addcause{c}{a}{e'}$ according to \defi\ref{def:SingleStateDC}. Then $c \rightarrow e'$ according to
Condition~1 of \defi16 in \cite{dynamicCausality15}, which means $c \in
cs_0(e')$ according to definition of $\csiD$ in \defi18.
Let us assume that $e' \in C_f$ then either $c$ or another dropper $d \logdot
\drops{c}{d}{e'}$ occurred before $e'$, which is impossible because of the
mutual conflict in Condition~\ref{eq:NoCausalAmbiguity} of \defi\ref{def:EBDC}. So $e' \notin C_f$.
\end{proof}

\begin{lemma}
	\label{lma:EBESPrecIsOrder}
	$\leq_C$ is a partial order over $C$. 
\end{lemma}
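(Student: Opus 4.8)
The plan is to verify that $\leq_C$ is reflexive, transitive, and antisymmetric, and that it relates only elements of $C$. Reflexivity and transitivity are immediate from the fact that $\leq_C$ is \emph{defined} as the reflexive and transitive closure of $<_C$; and since $<_C \subseteq C \times C$ by \defi\ref{def:EBDCLposets}, also $\leq_C \subseteq C \times C$. So the only real content is antisymmetry, and the tool for it is \lem\ref{lma:EBESPrecedence}.

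For antisymmetry I would fix a transition sequence $\transDC{\transDC{(C_0, cs_0)}{\ldots}}{(C_n, cs_n)}$ with $C_0 = \emptyset$ and $C_n = C$ (which exists because $C$ is a reachable configuration), and define for every $e \in C$ its first-occurrence index $f(e) = \min \Set{ j \mid e \in C_j }$; this is well defined since $e \in C_n$, and because the sequence is increasing we have $e \notin C_j$ for all $j < f(e)$. The key step is: $e <_C e'$ implies $f(e) < f(e')$. Indeed, by \lem\ref{lma:EBESPrecedence} there is an index $i$ with $e \in C_i$ and $e' \notin C_i$; then $f(e) \leq i$, while $e' \notin C_i$ together with monotonicity of the sequence forces $f(e') > i$, hence $f(e) < f(e')$. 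Lifting this along chains: if $e \leq_C e'$ with $e \neq e'$, there is a chain $e = x_0 <_C x_1 <_C \cdots <_C x_k = e'$ with all $x_j \in C$ (as $<_C \subseteq C \times C$), so $f(e) = f(x_0) < f(x_1) < \cdots < f(x_k) = f(e')$. Now, if $e \leq_C e'$ and $e' \leq_C e$ both held with $e \neq e'$, we would obtain $f(e) < f(e')$ and $f(e') < f(e)$, a contradiction; hence $e = e'$, which is antisymmetry.

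I do not expect a genuine obstacle here: the monotonicity of $f$ along single $<_C$-steps is exactly what \lem\ref{lma:EBESPrecedence} was set up to deliver, and the remainder is bookkeeping about the reflexive–transitive closure. The one point that needs a little care is the case where a single transition introduces several events at once, so that distinct events of $C$ may share a first-occurrence index and $f$ need not be injective; but this causes no trouble, since such events are then not related by $<_C$ in either direction (no separating $C_i$ exists in the sense of \lem\ref{lma:EBESPrecedence}), so the argument only ever uses that $f$ is \emph{strictly increasing along $<_C$-steps}, never that it is injective.
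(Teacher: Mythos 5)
Your proposal is correct and follows essentially the same route as the paper: fix the transition sequence of $C$, use \lem\ref{lma:EBESPrecedence} to show that first-occurrence indices strictly increase along $<_C$-steps, and derive antisymmetry of $\leq_C$ from that (reflexivity and transitivity being immediate from the closure). If anything, your explicit lifting of the strict inequality along chains is slightly more careful than the paper's terse treatment of that step.
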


\begin{proof}
Let $e,e' \in C \logdot e <_C e'$ and let $(\emptyset = C_0, cs_0) \ldots
(C_n = C, cs_n)$ be the transition sequence of $C$. Let also $C_h, C_j$ be the configurations
where $e, e'$ first occur, respectively, then according to \lem\ref{lma:EBESPrecedence}, $h<j$. Since $\leq_C$ is the reflexive and transitive
closure of $<_C$, then $e \leq_C e' \Longrightarrow h \leq j$. For
anti-symmetry, assume that $e' \leq_C e$ also then according to
\lem\ref{lma:EBESPrecedence}: $j \leq h$, but $h \leq j$, then $h = j$. The only
possibility for $h = j$ is that $e = e'$ because otherwise $h < j$ and $j<h$,
which is a contradiction.
\end{proof}

Let $\posets{\vartheta} = \{(C, \leq_C) \mid C \in \configurations{\vartheta}\}$
denotes the set of posets of the EBDC $\vartheta$. We show that the
transitions of a EBDC $\vartheta$ can be extracted from its posets.

\begin{theorem}
\label{thm:EBDCTrFromPosets}
Let $\vartheta$ be a EBDC and $(C,\csD), (C',\csD') \in \reachables{\vartheta}$
with $C\subseteq C'$.\\
Then
$\left(\forall e,e'\in C'\logdot e\neq e'\land e'\leq_{C'} e\implies e'\in
C \right)$ holds iff $\transDC{(C,\csD)}{(C'\csD')}$.
\end{theorem}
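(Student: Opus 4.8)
The plan is to prove the two directions of the equivalence separately, using throughout that for an EBDC (which is an SSDC) the causal state of a reachable state is a function of its configuration (\lem\ref{lma:SingleCausalState}), that Conditions~4--7 of $\transDCOp$ hold for any two states with $C\subseteq C'$ (\lem\ref{lma:SSDCstateProp}), and that $e<_{C'}e'$ forces $e$ to first appear strictly before $e'$ in any transition sequence reaching $C'$ (\lem\ref{lma:EBESPrecedence}). For ``$\Leftarrow$'', assume $\transDC{(C,\csD)}{(C',\csD')}$; since $(C,\csD)$ is reachable, appending this transition to a transition sequence from $(\emptyset,\csiD)$ to $(C,\csD)$ yields a transition sequence reaching $(C',\csD')$ whose second-to-last configuration is $C$. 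Given distinct $e,e'\in C'$ with $e'\leq_{C'}e$, decompose this into a chain $e'=f_0<_{C'}\cdots<_{C'}f_r=e$ with $r\geq 1$ and all $f_i\in C'$; applying \lem\ref{lma:EBESPrecedence} to each link along this sequence shows that $f_i$ first appears strictly before $f_{i+1}$, so $e'$ first appears strictly before $e$. Since every event of $C'\setminus C$ appears only in the last step, $e'$ must already lie in the second-to-last configuration, \ie $e'\in C$.

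For ``$\Rightarrow$'', assume that $e'\leq_{C'}e$ implies $e'\in C$ for all distinct $e,e'\in C'$, and check Conditions~1--9 of \defi18 in \cite{dynamicCausality15} for $(C,\csD)$ and $(C',\csD')$. Conditions~1 and~2 hold since $C'$ is conflict free (being reachable) and $C\subseteq C'$ by hypothesis; Conditions~4--7 hold by \lem\ref{lma:SSDCstateProp}; Condition~8 holds because an EBDC is an SSDC (\defi\ref{def:SingleStateDC}). For Condition~9, if $\addcause{c}{m}{t}$ with $t,m\in C'\setminus C$, then $c=t$ by Condition~\ref{eq:EBDCOnlyDisabling} of \defi\ref{def:EBDC}, while $\addcause{t}{m}{t}$ is a disjunct of $t<_{C'}m$, so the hypothesis forces $t\in C$, contradicting $t\in C'\setminus C$; hence Condition~9 holds. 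It remains to prove Condition~3: for each $e\in C'\setminus C$ the set $\cs{e}=(\ic{e}\cup\ac{C}{e})\setminus(\dc{C}{e}\cup C)$ (\lem\ref{lma:SingleCausalState}) must be empty, \ie $\ic{e}\cup\ac{C}{e}\subseteq\dc{C}{e}\cup C$. First, $\ac{C}{e}=\emptyset$: an element of it would be $e$ itself together with some $a\in C\setminus\Set{e}$ such that $\addcause{e}{a}{e}$ (by Condition~\ref{eq:EBDCOnlyDisabling}), whence $e<_{C'}a$ and so $e\in C$ by the hypothesis, a contradiction. Second, for $e'\in\ic{e}$ we have $e'<_{C'}e$: if $e'\in C'$ the hypothesis gives $e'\in C$; if $e'\notin C'$, fix a transition sequence reaching $(C',\csD')$ and inspect the step adding $e$; by Condition~3 there together with \lem\ref{lma:SingleCausalState}, the initial cause $e'$ of $e$ must have been dropped, so some $d$ in the configuration immediately preceding $e$ (hence $d\in C'$ and $d\neq e$) satisfies $\drops{e'}{d}{e}$, whence $d<_{C'}e$; then $d\in C$ by the hypothesis and thus $e'\in\dc{C}{e}$. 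In all cases $e'\in\dc{C}{e}\cup C$, so Condition~3 holds, and $\transDC{(C,\csD)}{(C',\csD')}$.

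The main obstacle I anticipate is exactly Condition~3 in the ``$\Rightarrow$'' direction when an initial cause $e'$ of some $e\in C'\setminus C$ does not lie in $C'$: the precedence order $\leq_{C'}$ only constrains events that actually occur in $C'$ and so says nothing about such a dropped cause, so one cannot stay within the purely order-theoretic picture. The remedy is to descend to a concrete transition sequence reaching $(C',\csD')$, use its Condition~3 at the step that adds $e$ to produce a dropper $d\in C'$ of $e'$, and then exploit the third disjunct $\exists c\in E\logdot\drops{c}{d}{e}$ of the precedence relation to get $d<_{C'}e$ and feed $d$ into the downward-closure hypothesis. The remaining conditions are routine given the lemmas already established for EBDCs (which in turn rest on Condition~\ref{eq:NoCausalAmbiguity} of \defi\ref{def:EBDC}, \eg in \lem\ref{lma:EBESPrecedence}).
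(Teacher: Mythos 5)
Your proof is correct and follows essentially the same route as the paper's: verify Conditions~1--9 of \defi18 one by one using \lem\ref{lma:SingleCausalState}, \lem\ref{lma:SSDCstateProp} and the EBDC/SSDC conditions for the ``$\Rightarrow$'' direction, and invoke \lem\ref{lma:EBESPrecedence} along a transition sequence ending in the given transition for ``$\Leftarrow$''. Your extra case in Condition~3, where an initial cause $e'\in\ic{e}$ does not lie in $C'$ and you descend to a concrete transition sequence to extract a dropper $d\in C'$ with $\drops{e'}{d}{e}$ and $d<_{C'}e$, is in fact a welcome refinement: the paper's proof simply asserts $f'\leq_{C'}f$ for $f'\in\ic{f}$, which tacitly presupposes $f'\in C'$, so your treatment closes a subcase the paper glosses over.
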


\begin{proof}
First, because $C'$ is a configuration it is conflict free. Now let us assume
$\forall e,e'\in C'\logdot e\neq e'\land  e'\leq_{C'} e\Rightarrow e'\in C$, we
now show that all the conditions of \defi18 in \cite{dynamicCausality15} hold
for $(C,\csD)$ and $(C',\csD')$. Condition~1,
$C\subseteq C'$, holds by assumption. Conditions~4,
5, 6, and
7 follow immediately from \lem\ref{lma:SSDCstateProp}. Condition~8 follows from 
\defi\ref{def:SingleStateDC} of SSDC, since $ \vartheta $ is an EBDC 
which is a subclass of SSDC. To prove Condition~3, let
$f\in(C'\setminus C)$, then we have from \lem\ref{lma:SingleCausalState}
$\cs[]{f}=\left(\ic{f}\cup\ac{C}{f}\right)\setminus \left(\dc{C}{f}\cup C\right)$. 
Assume $\cs[]{f} \neq
\emptyset$, \ie $ \exists f'\in \cs[]{f}$. So either $f' \in \ic{f}$ or $f \in
\cup\ac{C}{f}$. We can ignore the case that $f' \in \ac{C}{f}$, because in EBDC the added
causality for $f$ can only be $f$, which would make $f$
impossible, but this cannot be the case since $f \in C'$.
So let us consider the remaining option: $f'\in \ic{f}$. Then $f'\leq_{C'}
f$ by the definition of $\leq_{C'}$. Then by assumption, $f'\in C$ and
therefore $f'\notin \cs[]{f}$, which is a contradiction. Then $\forall
f\in(C'\setminus C) \logdot \cs[]{f} = \emptyset$. For Condition~9 
we show $\forall t,m \in C'\setminus C\logdot\forall c\in E \logdot \addcause{c}{m}{t}\implies c\in C$. 
The only growing causality is of the form $\addcause{c}{m}{c}$ and according to 
\defi\ref{def:EBDCLposets}, $\addcause{c}{m}{c}$ means $c\leq_{C'} m$,
then $c\in C$.

Let us now assume $\transDC{(C,\csD)}{(C',\csD')}$, and $e,e'\in C'$ with $e\neq
e'$ and $e'\leq_{C'} e$, so by \lem\ref{lma:EBESPrecedence} it follows $e'\in C$.
\end{proof}

The following defines a translation from an EBESs into an EBDC, which
is proved in \lem\ref{lma:EmdeddingIsEBDC} to be an EBDC. Furthermore this
translation preserves posets.
Figure~\ref{fig:exampleEBES2DCES} provides an example, where conflicts with
impossible events are dropped for simplicity.

\begin{definition}
\label{def:EBES2DCES}
Let $\xi = \left( E, \leadsto, \mapsto, l \right)$ be an EBES. Then 
 $\dces{\xi} = (E', \#', \rightarrow, \shrinkingCausality,
\growingCausality)$ such that:
\begin{inparaenum}
	\item $E', \rightarrow, \shrinkingCausality$ are
defined as in \ref{def:DESintoSES}
	\item $\#' = \{(e,e') \mid e \leadsto e' \land e'\leadsto
e\} \cup \{(x_i,x) \mid x \in X_i \}$
	\item \label{eq:EBES2DCESGrowing} $\growingCausality = \{(e,e',e) \in E^3
	\mid e\leadsto e' \land \neg (e' \leadsto e)\}$.
\end{inparaenum}
\end{definition}

\begin{figure}[tb]
	\centering
	\begin{tikzpicture}
		% figure a
		\event{a1}{-1}{1}{above}{$ a $};
		\event{a2}{-0.2}{1}{above}{$ b $};
		\event{a3}{-0.2}{0}{below left}{$ c $};
		\event{a5}{.6}{0}{right}{$ d $};
		\draw[enablingPES] (a1) edge (a3);
		\draw[enablingPES] (a2) edge (a3);
		\draw[enablingPES] (a3) .. controls (-0.9, 0.3) .. (a1);
		\draw[thick]	(-0.5, 0.37) -- (-0.2, 0.5);
		\draw[conflictPES] (a1) edge (a2);
		\draw[conflictEBES] (a3) edge (a5);
		\node (a) at (-1.7, 1) {(a)};
		% figure b
		\event{b1}{3}{1}{above}{$ a $};
		\event{b2}{3.8}{1}{above}{$ b $};
		\event{b3}{4.3}{0.7}{below right}{$ e_1 $};
		\event{b4}{3.8}{0}{below left}{$ c $};
		\event{b6}{5}{0}{below}{$ d $};
		\event{b5}{3}{0}{left}{$ e_2 $};
		\draw[enablingPES] (b5) edge (b1);
		\draw[enablingPES] (b3) edge (b4);
		\draw[enablingPES] (b3) .. controls (4.8, 1.2) and (3.8, 1.2) .. (b3);
		\draw[enablingPES] (b4) .. controls (4.3, .3) and (4.3, -.5) .. (b4);
		\draw[enablingPES] (b5) .. controls (2.5, -.5) and (3.5, -.5) .. (b5);
		\draw[adding]		(b6) -- (4.2 , 0);
		\draw[conflictPES]  (b1) edge (b2);
		\draw[dropping]		(3.1, .4) -- (b4);
		\draw[dropping]		(4, .4) -- (b1);
		\draw[dropping]		(4, .4) -- (b2);
		\node (b) at (2.3, 1) {(b)};
	\end{tikzpicture}
	\caption{An EBES and its poset-equivalent DCES.}
	\label{fig:exampleEBES2DCES}
\end{figure}
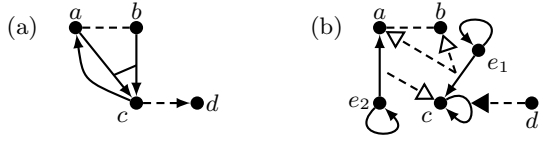

\begin{lemma}
\label{lma:EmdeddingIsEBDC}
Let $\xi$ be an EBES. Then $\dces{\xi}$ is an EBDC.
\end{lemma}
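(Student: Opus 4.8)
The plan is to unfold the definition of an EBDC into its three layers and verify each in turn: that $\dces{\xi}$ is a well-formed DCES (Def.~16 in \cite{dynamicCausality15}), that it lies in the subclass SSDC (\defi\ref{def:SingleStateDC}), and that it meets the two extra clauses of \defi\ref{def:EBDC}. For the well-formedness I would first observe that, by \defi\ref{def:EBES2DCES}, the components $E'$, $\rightarrow$ and $\shrinkingCausality$ of $\dces{\xi}$ are literally those produced by $\ses{\cdot}$ from the bundles $\Set{X_i}_{i\in I}$ of $\xi$ (\defi\ref{def:DESintoSES}); since that construction is already known to yield a well-formed structure (it is what makes $\ses{\delta}$ a SES in \lem\ref{lem:DEStoSES}), every DCES side condition that mentions only $E'$, $\rightarrow$ and $\shrinkingCausality$ is inherited, and only the conditions touching $\#'$ and $\growingCausality$ remain. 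For $\#'$ I would check that it is symmetric (reading the defining set up to symmetric closure, as usual for conflict relations) and irreflexive: its first clause is empty on the diagonal because disabling in an EBES is irreflexive, and its second clause only pairs a fresh event $x_i$ with elements of $X_i\subseteq E$, none of which equals $x_i$. For $\growingCausality$, every triple has the shape $(e,e',e)$ with $e,e'\in E$, and $e\leadsto e'\land\neg(e'\leadsto e)$ forces $e\neq e'$, so cause, modifier and target satisfy whatever (in)equalities Def.~16 in \cite{dynamicCausality15} requires. Hence $\dces{\xi}$ is a DCES.

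Next I would dispatch the remaining conditions, all of which are essentially bookkeeping about the shapes of the introduced triples. For the SSDC condition of \defi\ref{def:SingleStateDC}, note that every dropper in $\shrinkingCausality$ has the form $\drops{x_i}{d}{e}$, so it drops a \emph{fresh} cause $x_i\notin E$, whereas every adder in $\growingCausality$ has the form $\addcause{e}{e'}{e}$, so it adds a cause lying in $E$; therefore no pair (cause, target) is simultaneously the dependency of a dropper and of an adder, which is exactly $\forall e,e'\in E\logdot\nexists a,d\in E\logdot\addsDrops{a}{e'}{e}{d}$. For clause~(\ref{eq:EBDCOnlyDisabling}) of \defi\ref{def:EBDC}, the shape $(e,e',e)$ of the triples in $\growingCausality$ immediately gives $\addcause{c}{m}{t}\implies c=t$. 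For clause~(\ref{eq:NoCausalAmbiguity}), fix a dropped cause $x_i$ with target $e$; by \defi\ref{def:DESintoSES} its droppers are precisely $\Set{\drops{x_i}{d}{e}\mid d\in X_i}$, so the event set $\Set{c,m_1,\ldots,m_n}$ of this dependency equals $\Set{x_i}\cup X_i$. By the second clause of the definition of $\#'$ we have $(x_i,d)\in\#'$ for every $d\in X_i$; and because $X_i$ is a bundle of the \emph{EBES} $\xi$, its events are pairwise in conflict (the standard bundle constraint for EBESs, \cf Def.~6 in \cite{dynamicCausality15}), i.e.\ $d\leadsto d'$ and $d'\leadsto d$ for all distinct $d,d'\in X_i$, so $(d,d')\in\#'$ by the first clause. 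Thus all events of $\Set{x_i}\cup X_i$ are pairwise in conflict, establishing clause~(\ref{eq:NoCausalAmbiguity}), and combining the parts yields that $\dces{\xi}$ is an EBDC.

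I expect the only genuine obstacle to be the well-formedness step: one has to recall exactly which side conditions Def.~16 in \cite{dynamicCausality15} imposes on the five components of a DCES and make sure none of them is broken by the fresh, self-looping, impossible events $x_i$ or by the new conflicts and added causes — in particular the compatibility of $\#'$ with $\rightarrow$, $\shrinkingCausality$ and $\growingCausality$. Everything EBES-specific is concentrated in clause~(\ref{eq:NoCausalAmbiguity}), where the pairwise-conflict property of bundles in an EBES is the one place the input structure (rather than mere accounting for the shapes of the triples) is really used; the rest follows from \lem\ref{lem:DEStoSES} together with inspection of \defi\ref{def:EBES2DCES}.
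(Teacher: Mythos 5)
Your proposal is correct and follows essentially the same route as the paper: verify that $\dces{\xi}$ is a well-formed DCES, then that it is an SSDC (droppers only remove the fresh initial causes $x_i$, adders only add self-causes in $E$, so no dependency is both added and dropped), and finally the two EBDC clauses, with clause~(\ref{eq:NoCausalAmbiguity}) resting—exactly as in the paper—on the fact that bundle members of an EBES mutually disable each other together with the $(x_i,d)$ conflicts introduced by $\#'$. The only difference is presentational: the paper checks the three conditions of Def.~16 in \cite{dynamicCausality15} explicitly (droppers act on initial causes, adders on non-initial ones, and dropped causes are never added), whereas you delegate these to the shape of the construction and flag them as the remaining bookkeeping, which is indeed all that is left there.
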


\begin{proof}
First $\dces{\xi}$ is a DCES. The definition of $\rightarrow$ in \defi\ref{def:DESintoSES} ensures Conditions~
16(1) and
16(2) in \cite{dynamicCausality15}. According to the definition of
$\shrinkingCausality$ in \defi\ref{def:DESintoSES}, the only dropped causes are the
fresh events, which cannot be added by $\growingCausality$ according to
\defi\ref{def:EBES2DCES}(\ref{eq:EBES2DCESGrowing}). So Condition~16(3) also holds.

Second, $\dces{\xi} $ is a SSDC, since the only dropped events are the
fresh ones which are never added by $\growingCausality$, so \defi\ref{def:SingleStateDC} holds.

Third,
$\dces{\xi}$ is a EBDC. \defi\ref{def:EBDC}(\ref{eq:EBDCOnlyDisabling})
holds by definition. 
Bundle members in $\xi$ mutually disable each other, then according to
the definition of $\#'$ Condition~\ref{def:EBDC}(\ref{eq:NoCausalAmbiguity})
holds. Therefore $\dces{\xi} $ is a EBDC.
\end{proof}

Before comparing an EBES with its translation according to posets, we make use
of the following lemma.

\begin{lemma}
\label{thm:EBESandEBDCconfigEqui}
Let $\xi = \left(E, \leadsto, \mapsto\right)$ be an EBES. Then $\configurations{\xi} = \configurations{\dces{\xi}}$.
\end{lemma}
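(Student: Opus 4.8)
The plan is to show the two inclusions separately by relating configurations to traces, since for EBESs configurations are exactly the sets $\overline{t}$ for traces $t$, and for EBDCs (being SSDCs, hence in particular DCESs) reachable configurations arise from transition sequences that can be linearized into event sequences. First I would recall the trace characterization: $t = e_1 \cdots e_n$ is a trace of $\xi$ iff the $e_i$ are distinct, no $e_i \leadsto e_j$ with $j < i$ (disabling respected), and every bundle $X_i \mapsto e_k$ is met by some earlier event. On the DCES side, $C \in \configurations{\dces{\xi}}$ iff there is a transition sequence $\transDC{\transDC{(\emptyset,cs_0)}{\ldots}}{(C,cs_n)}$, and by \lem\ref{lma:SSDCConfInTrans} (together with \lem\ref{lma:SingleCausalState}) we may assume this sequence adds one event at a time, so it is governed by the single-event firing conditions of \defi18, which for $\dces{\xi}$ reduce — via \lem\ref{lma:SingleCausalState} — to requiring $cs(e) = (\ic{e} \cup \ac{C}{e}) \setminus (\dc{C}{e} \cup C) = \emptyset$ before $e$ fires, plus the growing-causality side condition (Condition~9).

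Next I would unfold the three components of $\dces{\xi}$ from \defi\ref{def:EBES2DCES} and \defi\ref{def:DESintoSES}. The key observations are: (i) the fresh events $x_i$ are impossible forever (self-loop $x_i \to x_i$, no droppers), exactly as in the proof of \lem\ref{lem:DEStoSES}, so they never appear in any configuration and $C \subseteq E$; (ii) for a real event $e \in E$, $\ic{e} = \{x_i \mid X_i \mapsto e\}$, and $x_i \in \dc{C}{e}$ iff $C$ contains some bundle member $d \in X_i$, so the condition $(\ic{e}\setminus\dc{C}{e}) \subseteq C$ becomes "every bundle $X_i \mapsto e$ has a member in $C$" — precisely the EBES bundle-satisfaction condition; (iii) $\ac{C}{e} = \{e' \mid \exists a \in C\logdot \addcause{e'}{a}{e} \land a \notin \{e,e'\}\}$ is empty for $e \in E$ because the only growing causality in $\dces{\xi}$ has the form $\addcause{e}{e'}{e}$, which would force $a = e \in \{e,e'\}$; (iv) Condition~9 on adding a growing cause $\addcause{e}{e'}{e}$ forbids firing $e$ in the same step as $e'$ while requiring $e'$ already present — i.e. $e' \leadsto e$ (with $\neg(e'\leadsto e')$... rather $\neg(e\leadsto e')$) forces $e'$ strictly before $e$, which is exactly the EBES disabling constraint $e \leadsto e'$ (so $e$ cannot occur after... ) — here I would be careful with the direction: $\addcause{e}{e'}{e}$ is generated from $e \leadsto e'$ with $\neg(e'\leadsto e)$, and firing $e$ requires $e \notin cs(e)$; once $e'$ has occurred, $e \in cs(e)$, so $e$ must precede $e'$, matching that $\overline{t}$ respects $e \leadsto e'$. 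Assembling these, a linearization $e_1 \cdots e_n$ of the transition sequence of $C$ is a trace of $\xi$, giving $C \in \configurations{\xi}$; conversely, given a trace $t$ of $\xi$ with $\overline{t} = C$, firing its events in order satisfies all conditions of \defi18 (Conditions~4--8 by \lem\ref{lma:SSDCstateProp}/\lem\ref{lma:SSDCConfInTrans} applied stepwise, Condition~3 by bundle satisfaction, Condition~9 by the disabling check), so $C \in \configurations{\dces{\xi}}$.

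The main obstacle I anticipate is bookkeeping the direction of the disabling/growing-causality correspondence and the role of the mutual-conflict pairs $\#' \supseteq \{(x_i,x) \mid x \in X_i\}$ together with the symmetric-conflict pairs coming from $e \leadsto e' \land e' \leadsto e$: one must check that these extra conflicts in $\dces{\xi}$ never exclude a configuration that is legal in $\xi$ (the $x_i$ never occur, and a genuine symmetric disabling pair in $\xi$ already cannot co-occur in a trace), and conversely that they are not needed to block anything — i.e. that conflict-freeness of $C$ in $\dces{\xi}$ is equivalent to $C$ being realizable as $\overline{t}$. This is routine but needs care, and it mirrors exactly the argument already used in the proof of \lem\ref{lem:DEStoSES} for the fresh-event part, so I would lean on that precedent rather than redo it from scratch.
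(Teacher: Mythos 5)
You follow essentially the same route as the paper: both inclusions are obtained by matching traces of $\xi$ with single-event transition sequences of $\dces{\xi}$, computing the causal states via \lem\ref{lma:SingleCausalState}, discharging Conditions~4--8 of \defi18 in \cite{dynamicCausality15} via \lem\ref{lma:SSDCstateProp} and \defi\ref{def:SingleStateDC}, and using \lem\ref{lma:SSDCConfInTrans} for the converse inclusion. So the overall plan is sound and matches the paper's proof.

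There is, however, a concrete error in your step (iii), and it sits at a load-bearing point. You claim $\ac{C}{e}=\emptyset$ for all $e\in E$ on the grounds that every triple of $\growingCausality$ has the shape $\addcause{e}{e'}{e}$, ``which would force $a=e\in\Set{e,e'}$''. This confuses the cause with the modifier: in $\ac{C}{e}=\Set{ c \mid \exists a \in C \logdot \addcause{c}{a}{e} \land a \notin \Set{ e, c } }$ the side condition restricts the \emph{modifier} $a$, while the shape of $\growingCausality$ in \defi\ref{def:EBES2DCES} only forces the \emph{cause} to coincide with the target. The correct statement is $\ac{C}{e}\subseteq\Set{e}$, with $\ac{C}{e}=\Set{e}$ exactly when some disabler $a\in C$ with $e\leadsto a$ and $\neg(a\leadsto e)$ has already occurred; this is precisely the mechanism by which $\dces{\xi}$ enforces the EBES disabling, and if (iii) were literally true then $\growingCausality$ would be inert, your reduction of the firing condition to pure bundle satisfaction would lose the disabling constraint, and the inclusion $\configurations{\dces{\xi}}\subseteq\configurations{\xi}$ would no longer follow (nothing would block a linearization violating $\leadsto$). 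Your step (iv) in fact contradicts (iii) and describes the correct behaviour ($e\in\cs{e}$ once the disabler has occurred), so the repair is local: in the forward direction argue $\ac{c_{i-1}}{e_i}=\emptyset$ from the fact that the trace of $\xi$ respects $\leadsto$, i.e.\ no disabler of $e_i$ precedes it (the paper phrases this as: the only possible added cause of $e_i$ is $e_i$ itself, which would make $e_i$ impossible, contradicting its occurrence); in the backward direction use $\ac{C}{e}=\Set{e}$ after a disabler has fired, together with the symmetric-disabling pairs recorded in $\#'$, to conclude that every single-event linearization obtained from \lem\ref{lma:SSDCConfInTrans} is indeed a trace of $\xi$.
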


\begin{proof}
First, $\forall c \subseteq E \logdot c\in \configurations{\xi}
\Longrightarrow c\in \configurations{\dces{\xi}}$. According to \S~2.2 in \cite{dynamicCausality15},
$c\in \configurations{\xi}$ means there is a trace $t = e_1,\ldots,e_n$ in $\xi$
such that $c = \bar{t}$. Let us prove that $t$ corresponds to a transition
sequence in $\dces{\xi}$ leading to $c$. \ie let us prove that there exists a
transition sequence $\transDC{(\emptyset = c_0, cs_0)} {\ldots} \transDC {}{(c_n = c,
cs_n)}$ such that $c_i = c_{i-1} \cup \Set{e_i}$ for $1 \leq i \leq n$, and
$cs_i$ is defined according to \lem\ref{lma:SingleCausalState}. This means we have
to prove that $\transDC{(c_{i-1}, cs_{i-1})} {(c_i, cs_i)}$ for $1 \leq i \leq
n$.

$c_i$ is conflict-free since it is a configuration in $\xi$ which means that it does not contain any mutual
disabling according to trace definition in \S~2.2. Second, it is clear
that $c_{i-1} \subseteq c_i$ by definition. Next, let us prove that
$\forall e\in c_i\setminus c_{i-1} \logdot cs_{i-1} = \emptyset$, \ie
$\left(\ic{e} \cup \ac{c_{i-1}}{e})\right) \setminus \dc{c_{i-1}}{e} \subseteq c_{i-1}$ according
to \lem\ref{lma:SingleCausalState}.
$\ic{e}$ contains only fresh events according to the definition of $\dces{\xi}$,
and the members of bundles $X_i\mapsto e$ are droppers of these fresh events.
But since each of these bundles is satisfied, then each of these fresh events in
$\ic{e}$ is dropped. Furthermore, there cannot be added causality in $\dces{\xi}$
for $e$ except for $e$ itself which makes it an impossible event, but it is not an impossible event since it occurs in a
configuration. Therefore $\left(\ic{e} \cup
\ac{c_{i-1}}{e}\right) \setminus \left(\dc{c_{i-1}}{e} \cup c_{i-1}\right)=\emptyset$ for
all $e \in c_i$ and all $1\leq i \leq n$. On the other hand, conditions
4, 5, 6, and 7 of
Def. 18 in \cite{dynamicCausality15} hold according to \lem\ref{lma:SSDCstateProp}.
Condition 8 of \defi18 holds by \defi\ref{def:SingleStateDC}.
Since in the transition $\transDC {(c_{i-1}, cs_{i-1})} {(c_i, cs_i)}$, only one
event --namely $e_i$-- occurs, then \defi18(9) also holds.

In that way we proved that $\configurations{\xi}\subseteq
\configurations{\dces{\xi}}$. In a similar way, and with the help of
\lem\ref{lma:SSDCConfInTrans}, we can prove that
$\configurations{\dces{\xi}}\subseteq \configurations{\xi}$ which means that $\configurations{\xi}= \configurations{\dces{\xi}}$.
\end{proof}

\begin{lemma}
\label{lma:EBESintoEBDC}
For each EBES $\xi$ there is a DCES, namely $\dces{\xi}$, such that
$\posets{\xi} = \posets{\dces{\xi}}$.
\end{lemma}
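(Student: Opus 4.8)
The statement to prove is $\posets{\xi} = \posets{\dces{\xi}}$ for each EBES $\xi$, where the posets are built from the precedence order of Definition~\ref{def:EBDCLposets} on the EBDC side, and from the standard EBES precedence on the EBES side. The natural strategy is: first, invoke Lemma~\ref{thm:EBESandEBDCconfigEqui} to get $\configurations{\xi} = \configurations{\dces{\xi}}$, so that the underlying carrier sets of the two families of posets agree. It then remains to show that for each common configuration $C$, the EBES precedence order on $C$ coincides with the relation $\leq_C$ from Definition~\ref{def:EBDCLposets}. Since both are partial orders on the same finite set (the latter by Lemma~\ref{lma:EBESPrecIsOrder}), it suffices to show the generating relations $<_C$ agree, or at least have the same reflexive-transitive closure.

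\emph{Key steps.} First I would recall that in an EBES the precedence relation on a trace $e_1 \cdots e_n$ with $\overline{t} = C$ puts $e \prec e'$ exactly when $e'$ has a bundle $X \mapsto e'$ with $e \in X$ and $e$ is the (unique, by the trace and bundle-stability conditions) member of $X$ occurring in $t$, or when $e \leadsto e'$ together with $e'$ occurring (so disabling forces precedence). Second, I would unfold $<_C$ from Definition~\ref{def:EBDCLposets}: $e <_C e'$ iff $e \rightarrow e'$, or $\addcause{e}{e'}{e}$, or $\exists c.\ \drops{c}{e}{e'}$. Now I plug in the three defining clauses of $\dces{\xi}$ (Definition~\ref{def:EBES2DCES}): $\rightarrow$ and $\shrinkingCausality$ are inherited from Definition~\ref{def:DESintoSES}, so $e \rightarrow e'$ means $e = x_i$ is a fresh event with $X_i \mapsto e'$ — but fresh events never occur in any configuration, so this clause never contributes a pair inside $C \times C$; $\drops{c}{e}{e'}$ with $c = x_i$ means $e \in X_i$ and $X_i \mapsto e'$, i.e. $e$ is a bundle-member cause of $e'$; and $\addcause{e}{e'}{e}$ holds iff $e \leadsto e'$ and $\neg(e' \leadsto e)$, i.e. one-directional disabling. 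So $<_C$ restricted to $C$ captures precisely ``$e$ is a bundle-cause of $e'$'' plus ``$e$ (asymmetrically) disables $e'$'', which after transitive closure should match the EBES precedence order on $C$. The matching of the bundle-cause direction uses that within any configuration a satisfied bundle $X_i \mapsto e'$ has a unique member present (so the ``$\exists c$'' dropper witness is essentially canonical) and Lemma~\ref{lma:EBESPrecedence}/\ref{lma:EBESPrecIsOrder} guarantee well-definedness; the disabling direction uses that mutual disablings become conflicts in $\#'$ and hence never co-occur, while asymmetric disablings become growing self-causality and are caught by Lemma~\ref{lma:EBESPrecedence}.

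\emph{Main obstacle.} The delicate point is the exact correspondence on the bundle side. In the EBES the precedence order depends on \emph{which} trace realizes $C$ (the ``causal ambiguity'' that motivated this whole development), so $\posets{\xi}$ is genuinely a \emph{set} of posets per configuration in general — but Definition~\ref{def:EBDC}(\ref{eq:NoCausalAmbiguity}) forces bundle members (together with the fresh cause $x_i$) to be mutually conflicting, which for $\dces{\xi}$ via the $\#'$ clause $\{(x_i,x)\mid x\in X_i\}$ together with the fact that bundle members of the same bundle in an EBES mutually disable means that at most one member of each bundle can appear in any configuration. Hence each configuration of $\xi$ has a \emph{unique} precedence order, and $<_C$ recovers it unambiguously. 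I expect the bulk of the work to be carefully checking this uniqueness and then verifying both inclusions $e \prec_{\xi} e' \iff e \leq_C e'$ by tracing through the transition sequence, appealing to Lemma~\ref{lma:SingleCausalState} for the shape of $cs$ and to Lemma~\ref{lma:EBESPrecedence} for the ordering of first-occurrences; the forward direction (EBES order $\subseteq \leq_C$) is where one must produce, for each EBES precedence pair, the corresponding $\rightarrow$/$\shrinkingCausality$/$\growingCausality$ witness in $\dces{\xi}$, and the reverse uses Lemma~\ref{lma:EBESPrecedence} essentially verbatim.
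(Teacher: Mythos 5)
Your proposal is correct and follows essentially the same route as the paper's proof: first transfer configurations via Lemma~\ref{thm:EBESandEBDCconfigEqui}, then show that on each common configuration the generating precedence relations coincide by unfolding Definition~\ref{def:EBES2DCES} (droppers $\leftrightarrow$ bundle membership, self-adders $\leftrightarrow$ asymmetric disabling, and the $\rightarrow$ clause contributing nothing since initial causes are fresh impossible events), so the closures $\leq_C$ agree and both inclusions of poset families follow. The only cosmetic difference is that the paper's reverse direction is purely definitional and does not need Lemma~\ref{lma:EBESPrecedence}, which you invoke there.
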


\begin{proof}
First, $\forall p \in \posets{\xi}. p \in \posets{\dces{\xi}}$. Let $p = (C,
\leq_C)$, then $C\in \configurations{\xi}$ by the definition of posets of EBESs.
Then according to \theo\ref{thm:EBESandEBDCconfigEqui}: $C \in
\configurations{\dces{\xi}}$. On the other hand, let $\leq_C'$ be the partial
order defined for $C$ in $\dces{\xi}$ as in \defi\ref{def:EBDCLposets}. This
means that we should prove that $\leq_C = \leq_C'$. But since $\leq_C, \leq_C'$
are the reflexive and transitive closures of $\prec_C, <_C$ respectively, then it is
enough to prove that $\prec_C = <_C$. In other words we have to prove
$\forall e,e'\in C \logdot e \prec_C e' \Leftrightarrow e'<_C e$.

Let us start with $e \prec_C e'\Longrightarrow e<_C e'$. According to 
\S~2.2 in \cite{dynamicCausality15} $e \prec_C e'$ means
$\exists X\subseteq E \logdot e \in X \mapsto e' \lor e\leadsto e'$. If $\exists X\subseteq E \logdot e \in X \mapsto e'$ then
$\exists c\in E' \logdot \drops{c}{e}{e'}$ by the definition of $\dces{\xi}$ \defi\ref{def:EBES2DCES}. This means $e<_C$ according to the definition of $<_C$ \defi\ref{def:EBDCLposets}.
If $e \leadsto e'$ then $ \neg e' \leadsto e$ since otherwise $e,e'$ are in conflict. This means
$\addcause{e}{e'}{e}$ according to \defi\ref{def:EBES2DCES}, which means $e<_C e'$ according to \defi\ref{def:EBDCLposets}.

Let us consider the other direction: $e <_C e'\Longrightarrow e\prec_C e'$.
$e<_C e'$ means $\exists c\in E' \logdot \drops{c}{e}{e'} \lor
\addcause{e}{e'}{e}$ according to the definition of $<_C$ in \defi\ref{def:EBDCLposets}. The third option where $e\rightarrow e'$ is rejected
since the only initial causes that exist in $\dces{\xi}$ are the fresh
impossible events. If $\exists c\in E' \logdot \drops{c}{e}{e'}$ then $\exists X \subseteq E \logdot e\in X \leadsto
e'$ according to the definition of $\shrinkingCausality$ in $\dces{\xi}$. This
means $e \prec_C e'$ by the definition of $\prec_C$ in \S~2.2. If on the
other hand $\addcause{e}{e'}{e}$ then $e \leadsto e'$ according to the definition of
$\dces{\xi}$, which means $e \prec_C$ in \S~2.2.

In that way we have proved that $\prec_C = <_C$, which means that $\leq_C =
\leq_C$. In a similar way we can prove that $\forall p \in \posets{\dces{\xi}}. p
\in \posets{\xi}$, which means $\posets{\xi} = \posets{\dces{\xi}}$.
\end{proof}

\begin{lemma}
\label{lma:DCESninEBES}
There is a DCES such no EBES with the same configurations exits.
\end{lemma}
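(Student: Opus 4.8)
The plan is to exhibit the required DCES by lifting, via the embedding of \defi\ref{def:InclusionIntoDCES}, the SES counterexample already used in the proof of Theorem~3 in \cite{dynamicCausality15}. Let $ \sigma_\xi = \left( \Set{ a, b, c }, \emptyset, \Set{ \enab{a}{b} }, \Set{ \drops{a}{c}{b} } \right) $ and consider the DCES $ \emb{\sigma_\xi} $. By \lem\ref{lma:SESinDCES} we have $ \transEq{\emb{\sigma_\xi}}{\sigma_\xi} $, hence $ \configurations{\emb{\sigma_\xi}} = \configurations{\sigma_\xi} $; and by \lem\ref{lem:SESconf} with Def.~13 in \cite{dynamicCausality15} (or directly from the trace set $ \Set{ \epsilon, a, c, ab, ac, ca, cb, abc, acb, cab, cba } $ computed in the proof of Theorem~3), $ \configurations{\sigma_\xi} = \Set{ \emptyset, \Set{ a }, \Set{ c }, \Set{ a, b }, \Set{ a, c }, \Set{ b, c }, \Set{ a, b, c } } $, i.e.\ every subset of $ \Set{ a, b, c } $ except $ \Set{ b } $ (intuitively, $ b $ needs $ a $ \emph{or} $ c $ to precede it). So it suffices to show that no EBES has exactly this family of configurations.

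Towards a contradiction, I would take an EBES $ \xi = \left( E, \leadsto, \mapsto \right) $ with $ \configurations{\xi} = \configurations{\sigma_\xi} $. Since every configuration of $ \xi $ is a subset of $ \Set{ a, b, c } $, every event of $ E \setminus \Set{ a, b, c } $ is impossible and can be ignored. Now $ \Set{ a, b }, \Set{ b, c }, \Set{ a, b, c } \in \configurations{\xi} $ show that $ b $ is not impossible, while $ \Set{ b } \notin \configurations{\xi} $ shows that $ b $ cannot occur from the empty configuration. As a disabling only forbids occurrences \emph{after} its disabling event, the sole way to block $ b $ initially is a bundle $ X \mapsto b $; and for $ b $ still to be able to occur, $ X $ must contain one of $ a, c $ (a bundle for $ b $ consisting only of $ b $ itself and impossible events would make $ b $ impossible). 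Hence $ \emptyset \neq X \cap \Set{ a, c } $.

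The core step is a case analysis on $ X \cap \Set{ a, c } $, using the EBES convention that the members of a bundle mutually disable one another and are therefore pairwise conflicting. If $ \Set{ a, c } \subseteq X $, then $ a \# c $, so no configuration contains both, contradicting $ \Set{ a, c } \in \configurations{\xi} $. If $ X \cap \Set{ a, c } = \Set{ a } $, then, since the bundle $ X \mapsto b $ must be satisfied in any configuration containing $ b $, every such configuration contains $ a $, contradicting $ \Set{ b, c } \in \configurations{\xi} $; symmetrically $ X \cap \Set{ a, c } = \Set{ c } $ contradicts $ \Set{ a, b } \in \configurations{\xi} $. All three cases are impossible, so no EBES is configuration equivalent to the DCES $ \emb{\sigma_\xi} $, which proves the lemma.

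The step I expect to be the most delicate is the careful treatment of auxiliary events together with the EBES well-formedness convention: one must argue that an event never appearing in a configuration is genuinely impossible (so it neither enlarges $ \configurations{\xi} $ beyond subsets of $ \Set{ a, b, c } $ nor helps satisfy $ b $'s bundle), and one must appeal to the convention---used only implicitly in the proof of Theorem~3 and in \lem\ref{lma:EmdeddingIsEBDC}---that the alternatives in a bundle mutually disable each other. Everything else is a routine unfolding of the definitions.
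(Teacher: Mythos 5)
Your proposal is correct and follows essentially the same route as the paper: it uses the embedding $\emb{\sigma_\xi}$ of the disjunctive-causality SES as the DCES, computes the same configuration family (all subsets of $\Set{a,b,c}$ except $\Set{b}$), and derives the contradiction from the forced bundle $X \mapsto b$ together with the EBES stability condition, which makes $a$ and $c$ mutually disabling and contradicts $\Set{a,c}$ being a configuration. Your explicit case analysis on $X \cap \Set{a,c}$ and the treatment of impossible events merely spell out steps the paper compresses into the observation that $X$ must contain both $a$ and $c$.
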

\begin{proof}
We consider the embedding $\emb{\sigma_\xi}$ (\cf \fig\ref{fig:counterExamples}) of the SES $\sigma_\xi$, which models disjunctive causality. According to \defi13, because $\neg(a\#c)$ and $\ic{a}=\ic{c}=\emptyset$, it holds $\transS{\emptyset}{\{a,c\}}$ and so $\{a,c\}\in\configurations{\sigma_\xi}$. Further there is no transition $\transS{\emptyset}{\{b\}}$, because $\ic{b}=\{a\}$, but there are transitions $\transS{\{a\}}{\{a,b\}}$ and $\transS{\{c\}}{\{c,b\}}$, because $\ic{b}\setminus\dc{\{a\}}{b}\subseteq\{a\}$ ($\ic{b}\setminus\dc{\{c\}}{b}\subseteq\{c\}$ resp.). The transitions are translated to the embedding according to \lem\ref{lma:SESinDCES} and \defi\ref{def:DCESConfigs} the same holds for the configurations. 

If we now assume there is a EBES $\xi$ with the configurations $\emptyset,\{a\},\{c\},\{a,c\},\\\{a,b\},\{b,c\}$ and $\{a,b,c\}$ then according to \defi5 in \cite{dynamicCausality15} because there is no configuration $\{b\}$ there must be a non-empty bundle $\buEn{X}{b}$ and caused by the the configurations $\{a,b\},\{b,c\}$ this bundle $X$ must contain $a$ and $c$. Now the stability condition of \defi5 implies $a\disa c$ and $c\disa a$, so $a$ and $c$ are in mutual conflict contradicting to the assumption $\{a,c\}\in\configurations{\xi}$. Thus there is no EBES with the same configurations as $\emb{\sigma_\xi}.$
\end{proof}

Theorem~10 in \cite{dynamicCausality15} states:
\begin{quote}
DCESs are strictly more expressive than EBESs.
\end{quote}

\begin{proof}[Proof of Theorem~10 in \cite{dynamicCausality15}]
Follows directly from \lems\ref{lma:DCESninEBES} and \ref{lma:EBESintoEBDC}.
\end{proof}

\bibliographystyle{plain}
\bibliography{dynamicCausality}

\end{document}